\newcommand{\algorithmfootnote}[2][\footnotesize]{%
  \let\old@algocf@finish\@algocf@finish%
  \def\@algocf@finish{\old@algocf@finish%
    \leavevmode\rlap{\begin{minipage}{\linewidth}
    #1#2
    \end{minipage}}
  }
}
\setlist[itemize,1]{%
  leftmargin=1.5em,
  itemsep=0.5ex,
  topsep=1ex,
  labelsep=0.5em
}
\setlist[itemize,2]{%
  leftmargin=1.5em,
  itemsep=0ex,
  topsep=0pt,
  labelsep=0.5em
}
\newcommand{\bl}[1]{\item[\textcolor{blue}{#1}]\ }
\DeclareMathOperator{\tr}{Tr}
\DeclareMathOperator{\pr}{Pr}
\DeclarePairedDelimiter\ceil{\lceil}{\rceil}
\newcommand{\ind}{\mathbbm{1}}
\newcommand{\E}{\mathbb{E}}
\newcommand{\ketbra}[2]{{\vert #1 \rangle \langle #2 \vert}}
\newcommand{\coleq}{\mathrel{\mathop:}\nobreak\mkern-1.2mu=}
\newcommand{\varep}{\varepsilon}
\newcommand{\bo}{\boldsymbol{o}}
\newcommand{\btheta}{\boldsymbol{\theta}}
\newcommand{\epcue}{\varep\textnormal{-CUE}}
\newcommand{\epuni}{\varep\textnormal{-uniform}}
\newtheorem{theorem}{Theorem}
\newtheorem{lemma}{Lemma}
\begin{document}

\title{On the query complexity of unitary channel certification}

\author{Sangwoo Jeon}
\email{sangw077@gmail.com}
\affiliation{Department of Physics, Korea Advanced Institute of Science and Technology, Daejeon 34141, Korea}

\author{Changhun Oh}
\email{changhun0218@gmail.com}
\affiliation{Department of Physics, Korea Advanced Institute of Science and Technology, Daejeon 34141, Korea}

\date{\today}

\begin{abstract}
    Certifying the correct functioning of a unitary channel is a critical step toward reliable quantum information processing.
    In this work, we investigate the query complexity of the unitary channel certification task: testing whether a given $d$-dimensional unitary channel is identical to or $\varepsilon$-far in diamond distance from a target unitary operation.
    We show that incoherent algorithms—those without quantum memory—require $\Omega(d/\varepsilon^2)$ queries, matching the known upper bound.
    In addition, for general quantum algorithms, we prove a lower bound of $\Omega(\sqrt{d}/\varepsilon)$ and present a matching quantum algorithm based on quantum singular value transformation, establishing a tight query complexity of $\Theta(\sqrt{d}/\varepsilon)$.
    On the other hand, notably, we prove that for almost all unitary channels drawn from a natural average-case ensemble, certification can be accomplished with only $\mathcal{O}(1/\varepsilon^2)$ queries.
    This demonstrates an exponential query complexity gap between worst- and average-case scenarios in certification, implying that certification is significantly easier for most unitary channels encountered in practice.
    Together, our results offer both theoretical insights and practical tools for verifying quantum processes.
\end{abstract}

\maketitle

\section{Introduction}

Reliable quantum information processing critically depends on our ability to verify that quantum processes behave as intended~\cite{preskill1998reliable}. 
While quantum process tomography can accomplish this task for small-sized quantum devices, as quantum devices scale up in size and complexity for more sophisticated quantum information processing, this verification step becomes increasingly challenging~\cite{fawzi2023incoherent, rosenthal2024average}.
Therefore, it is becoming essential to find an efficient way to certify a quantum process and ultimately to develop an optimal and practical scheme.

\textit{Quantum process certification}—the task of verifying that a quantum process operates correctly—is therefore a central challenge in current quantum information processing. 
From an information-theoretic perspective, extensive research has investigated resources necessary for reliable certification~\cite{montanaro2013survey, eisert2020quantum, kliesch2021theory, fawzi2023incoherent, rosenthal2024average}. 
Meanwhile, from a practical engineering perspective, protocols such as quantum process tomography~\cite{chuang1997prescription, acin2001optimal, altepeter2003ancilla, yang2020optimal, haah2023query} and randomized benchmarking~\cite{emerson2005scalable, knill2008randomized, dankert2009exact, magesan2011scalable} have been developed and implemented.
More recently, quantum channel learning techniques have emerged as a promising approach, as they estimate key error observables without fully reconstructing the entire process, which can significantly reduce the required resources~\cite {chen2022quantum, chen2022exponential, chen2022tight, chen2023efficient, huang2023learning, chen2023unitarity, chen2024tight, oh2024entanglement}.

In many practical applications, a desired quantum process to implement is often described by a unitary channel, which plays the role of quantum gates in quantum computing and the perfect transmission of quantum information in quantum communication, because a unitary channel represents a quantum process under ideal and closed-system conditions.
Therefore, among the certification tasks, \textit{unitary channel certification}—the problem of certifying a unitary channel—is particularly important and practically relevant. 
In addition, recent technological advances in quantum coherence have brought laboratory environments closer to ideal closed-system conditions, further underscoring the practical relevance of unitary channel certification~\cite{park2025passive, salhov2024protecting}.
However, somewhat surprisingly, unitary channel certification remains largely unexplored.
Previous studies on quantum process certification have typically considered noisy environments and have shown that certifying completely positive and trace-preserving (CPTP) channels requires exponentially many channel queries~\cite{fawzi2023incoherent, rosenthal2024average}.

In this work, we investigate the unitary channel certification problem and characterize its query complexity.
We first show that incoherent algorithms—those without quantum memory—require exponentially many queries for certification.
We then show that coherent algorithms—general quantum algorithms with quantum memory—can achieve a quadratic speedup over incoherent algorithms through our query-optimal algorithm based on quantum singular value transformation (QSVT), although coherent algorithms still require exponentially many queries.
On the other hand, we show that this exponential hardness arises only in worst-case scenarios and can be significantly reduced for average-case unitary channels.
In particular, we show that there exists a simple algorithm that certifies almost all unitary channels drawn from a natural average-case ensemble using only a constant number of queries.
These results demonstrate an exponential gap between worst- and average-case query complexities, suggesting that certification is substantially easier in practice than in the worst-case scenario.

We organize our work as follows.
In Sec.~\ref{sec:setup}, we provide a detailed definition of the problem setup for unitary channel certification, along with essential definitions.
In Sec.~\ref{sec:works}, we address relevant prior works and highlight our contribution.
In Sec.~\ref{sec:worst}, we establish the tight query complexity for unitary channel certification, showing that unitary channel certification requires exponentially many queries. 
Conversely, in Sec.~\ref{sec:average}, we show that for almost all unitary channels sampled from an average-case ensemble, the query complexity significantly reduces to a constant number. 
Finally, we summarize our findings and discuss their implications in Sec.~\ref{sec:discussion}.

\section{Problem setup}
\label{sec:setup}

We define unitary channel certification as the task of testing whether a given unitary channel is either identical to or $\varep$-far from a target unitary channel~\cite{eisert2020quantum, kliesch2021theory, fawzi2023incoherent, rosenthal2024average}.
We detail the problem setup below.
Suppose one has black-box access to a given unitary channel $\mathcal{E}_U(\rho)\coleq U\rho U^\dagger$, where $U$ is a $d$-dimensional unitary operator acting on an $n$-qubit system with $d=2^n$.
The given unitary channel $\mathcal{E}_U$ is intended to match a target unitary channel $\mathcal{E}_V$.
However, in practice, systematic imperfections such as cross-talk or gate miscalibration may introduce coherent errors, causing $\mathcal{E}_U$ to deviate from $\mathcal{E}_V$.
Therefore, certification is required to guarantee that we are implementing a desired unitary circuit, using as few queries to $\mathcal{E}_U$ as possible.

We formally define the certification task as follows: \textit{testing whether the channel $\mathcal{E}_U$ is identical to $\mathcal{E}_V$ or $\varepsilon$-far from $\mathcal{E}_V$ using $N$ queries to $\mathcal{E}_U$ with success probability at least $2/3$.}
Here, by applying a unitary transformation of the form $\rho\mapsto V^\dagger\rho V$, we can simplify the task to certifying whether the unitary channel $\mathcal{E}_{UV^\dagger}$ is identical to the identity channel $\mathcal{E}_I$. 
Thus, without loss of generality, we set the target channel to be the identity channel and redefine the certification task as follows: \textit{testing whether the channel $\mathcal{E}_U$ is identical to $\mathcal{E}_I$ or $\varepsilon$-far from $\mathcal{E}_I$ using $N$ queries to $\mathcal{E}_U$ with success probability at least $2/3$.}
Thus, certification can be framed as a hypothesis-testing problem:
\begin{align}
H_0:\mathcal{E}_U=\mathcal{E}_I\quad \text{vs.}\quad H_1:D(\mathcal{E}_U,\mathcal{E}_I)\geq\varepsilon,
\end{align}
with a suitable distance metric $D(\cdot,\cdot)$.
Here, if $0<D(\mathcal{E}_U,\mathcal{E}_I)<\varepsilon$, the algorithm is allowed to output either hypothesis.
We employ the diamond distance as the distance metric:
\begin{align}
D(\mathcal{E}_{U},\mathcal{E}_V)=\max_\rho \| (\mathcal{E}_{U}\otimes\mathcal{E}_{I})(\rho)-(\mathcal{E}_{V}\otimes \mathcal{E}_{I})(\rho) \|_1,
\end{align}
where $\|\cdot\|_1$ denotes the Schatten-1 norm defined by $\|M\|_1=\tr(\sqrt{M^\dagger M})$.
Note that the diamond distance captures the worst-case trace distance between output states over all possible input states~\cite{wilde2013quantum}.

\begin{figure}[tb]
    \centerline{
        \begin{overpic}[width=0.37\textwidth]{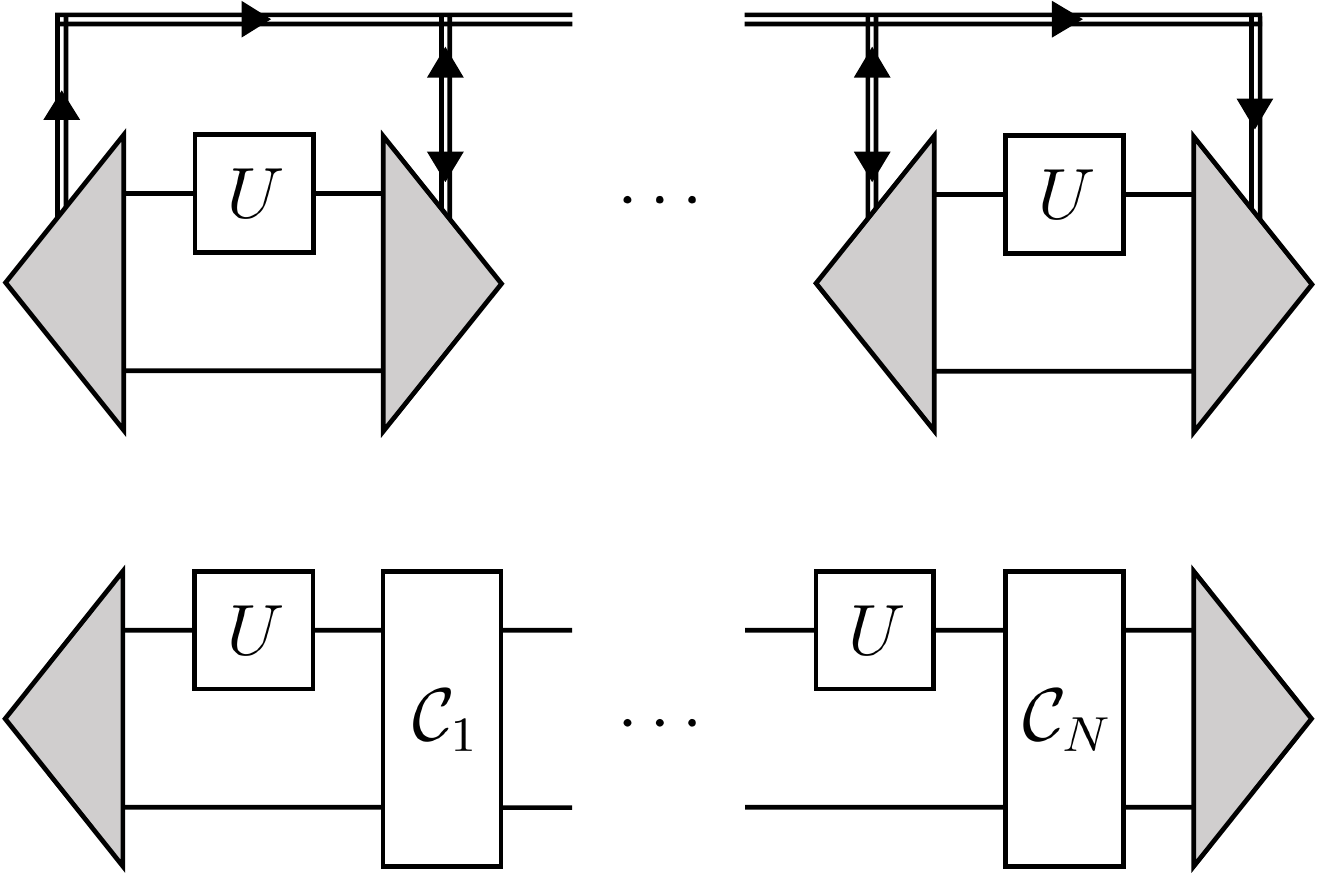}
        \put(-3,68){\text{(a)}} 
        \put(-3,26){\text{(b)}} 
    \end{overpic}
    }
    \caption{
    (a) Incoherent algorithm. The double line represents the classical registers. 
    (b) Coherent algorithm. $\mathcal{C}_k$ for $1\leq k\leq N$ represents a CPTP map that we apply at $k$-th step as part of the algorithm.
    We allow arbitrarily large ancillary systems for both algorithms.
    }
    \label{fig:algorithm}
\end{figure}

We consider two types of algorithms for certification: incoherent and coherent.
Incoherent algorithms, illustrated in Fig.~\ref{fig:algorithm}(a), perform positive operator-valued measurements (POVMs) after each of the $N$ queries.
These algorithms can be adaptive using classical registers to select both input states and POVMs based on previous measurement outcomes.
This approach is practically motivated as storing quantum states across multiple queries in quantum memory is technically challenging.
In contrast, coherent algorithms, illustrated in Fig.~\ref{fig:algorithm}(b), maintain quantum coherence across queries by storing intermediate quantum states.
More specifically, a single input state sequentially passes through $N$ circuit layers, each consisting of the ancilla-coupled unitary channel $\mathcal{E}_U$ and an interleaved CPTP map $\mathcal{C}_k$ for $1\leq k\leq N$.
A final POVM is then performed for certification.
In this work, we extend this conventional framework to cover a wider range of quantum algorithms.
Specifically, we allow arbitrarily large ancillary systems for both types of algorithms.
We also permit the use of the inverse channel $\mathcal{E}_{U^\dagger}$ in place of certain queries to $\mathcal{E}_{U}$, noting that such access is often feasible in practice when $\mathcal{E}_{U}$ is given as a quantum circuit, as reversing the gate sequence and inverting each gate suffices to implement $\mathcal{E}_{U^\dagger}$.
Under these assumptions, coherent algorithms represent the most general class, encompassing incoherent algorithms as a special case.

\section{Background and Contributions}
\label{sec:works}

Let us review prior works to highlight the key contributions of our work in comparison.
The most relevant prior studies are the ones in Refs.~\cite{fawzi2023incoherent} and \cite{rosenthal2024average}, which address the general channel certification problem: certifying whether a given CPTP channel is either identical to or $\varep$-far from a target unitary channel in the diamond distance.
Specifically, Ref.~\cite{fawzi2023incoherent} establishes a tight query complexity of $\Theta(d/\varep^2)$ for incoherent algorithms, while Ref.~\cite{rosenthal2024average} proves a lower bound of $\Omega(\sqrt{d}/\varep)$ for coherent algorithms but does not provide a matching upper bound. 
These results indicate that certifying a CPTP channel in a high-dimensional system is inherently a challenging task.

Recent progress on quantum coherence~\cite{park2025passive, salhov2024protecting} and error correction~\cite{acharya2024quantum} suggests that nearly noiseless quantum processes may be feasible in the near term.
Thus, it is natural and essential to ask whether this hardness persists when the given CPTP channel is restricted to be a unitary channel.
Our first contribution is to show that the same lower bounds hold even under this unitary assumption, \textit{i.e.}, incoherent and coherent algorithms require $\Omega(d/\varep^2)$ and $\Omega(\sqrt{d}/\varep)$ queries for unitary channel certification, respectively, thereby strengthening the previous results.
This result has two major implications:
First, coherent (\textit{i.e.}, unitary) error is a fundamental source of the exponential hardness in quantum process certification.
Second, despite the recent advances in reducing incoherent errors, the exponential hardness of certification remains unavoidable.

Nevertheless, finding an optimal quantum algorithm for certification remains an important challenge.
Our second contribution is to develop a query-optimal certification algorithm for coherent strategies, achieving the tight complexity of $\Theta(\sqrt{d}/\varepsilon)$ by employing QSVT.
This implies that using quantum memory to combine multiple queries coherently can yield a quadratic speedup in certification.

Due to the high complexity of certification under the diamond distance, prior research has attempted to relax the task by considering alternative distance measures.
In particular, these works have employed average-case distances to achieve constant query complexity by avoiding the hardness associated with worst-case instances.
Ref.~\cite{montanaro2013survey} showed a constant query complexity $\mathcal{O}(1/\varep^2)$ for certification under a fidelity-based distance $D(\mathcal{E}_U,\mathcal{E}_V)=\sqrt{1-|\tr (U^\dagger V)|^2/d^2}$, and more recently Ref.~\cite{rosenthal2024average} showed the same query complexity for an average-case imitation diamond distance $D(\mathcal{E}_U,\mathcal{E}_V)=\| (\mathcal{E}_{U}\otimes\mathcal{E}_{I})(\Phi)-(\mathcal{E}_{V}\otimes \mathcal{E}_{I})(\Phi) \|_1$ where $\Phi$ is a maximally entangled state over two $d$-dimensional Hilbert spaces.
Although these average-case results significantly ease the query complexity, their relevance to practical certification remains less clear.

As our last contribution, we show a constant query complexity $\mathcal{O}(1/\varep^2)$ for certification with the diamond distance by considering the average-case \textit{channels}.
We show that there exists a simple algorithm achieving this complexity for almost all unitary channels sampled from a natural average-case distribution.
Here, the fraction of exceptional channels is on the order of $\exp(-\Omega(d))$, which is exponentially small in the system dimension.
This suggests that certification is significantly less challenging in practice than previously believed, offering a highly relevant framework for practical certification.

\section{Worst-case query complexity}
\label{sec:worst}

We now present our main result.
We begin by establishing the query complexity of unitary channel certification in the standard worst-case scenario, \textit{i.e.}, the number of queries required to certify an arbitrary unitary channel.

\subsection{Query complexity for incoherent algorithms}
\label{subsec:incoh}

We prove that certifying unitary channels requires exponentially many queries for incoherent algorithms even when using arbitrarily large ancillary systems and adaptive strategies.
Our result is stated as follows:

\vspace{1em}

\begin{theorem}
    \label{thm:incoh}
    Consider an adaptive, incoherent algorithm with an arbitrarily large ancillary system, which tests whether $D(\mathcal{E}_U,\mathcal{E}_I) \geq \varep$ or $\mathcal{E}_U = \mathcal{E}_I$ with success probability at least $2/3$. For $\varep < 1/2$ and $d > 50\varep^2$, the required number of queries to $\mathcal{E}_U$ (or $\mathcal{E}_{U^\dagger}$) is $N = \Omega(d/\varepsilon^2)$.
\end{theorem}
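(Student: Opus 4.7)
The plan is to prove the lower bound via Le Cam's two-point method, following the general scheme used for CPTP channels in Ref.~\cite{fawzi2023incoherent} but adapted to the strictly unitary setting. The core is to exhibit an ensemble $\{U_s\}$ of alternatives, each satisfying $D(\mathcal{E}_{U_s},\mathcal{E}_I) = \varep$, that is statistically indistinguishable from the identity under any adaptive incoherent algorithm using fewer than $\Omega(d/\varep^2)$ queries. A natural candidate is the Haar-invariant rank-one phase ensemble $U_\phi = I + (e^{i\theta}-1)|\phi\rangle\langle\phi|$, with $|\phi\rangle$ Haar-random on $\mathbb{C}^d$ and $\theta$ tuned so that $2|\sin(\theta/2)|=\varep$, hence $\theta=\Theta(\varep)$; the spectrum-based formula for the diamond norm between unitary channels gives $D(\mathcal{E}_{U_\phi},\mathcal{E}_I)=\varep$ exactly, while the Haar invariance hides the perturbation in a random one-dimensional subspace, which ultimately supplies the crucial $1/d$ factor in the chi-squared bound.

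Given any certifying algorithm, the two-point argument forces $\|P_I - \mathbb{E}_\phi P_\phi\|_{TV} \geq 1/3$, where $P_U$ denotes the distribution over the classical transcript $(x_1,\ldots,x_N)$ produced by the adaptive incoherent algorithm run on $\mathcal{E}_U$ with arbitrary ancilla. I would upper-bound TV by the chi-squared divergence $\chi^2(\mathbb{E}_\phi P_\phi \| P_I)$, factorise the transcript likelihood along the measurement tree as $P_U(x_1,\ldots,x_N) = \prod_t p_U(x_t|x_{<t})$ with $p_U(x_t|x_{<t}) = \tr(M_{x_t}^{(x_{<t})}\widetilde U \rho_t^{(x_{<t})} \widetilde U^\dagger)$ and $\widetilde U := U\otimes I_{\mathrm{anc}}$, and expand the chi-squared as a double expectation over independent Haar copies $\phi,\phi'$. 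A telescoping/inductive argument along $t$ reduces the problem to establishing a uniform per-step inequality
\[\sum_x \frac{\mathbb{E}_{\phi,\phi'}\bigl[\tr(M_x \widetilde U_\phi \rho \widetilde U_\phi^\dagger)\,\tr(M_x \widetilde U_{\phi'} \rho \widetilde U_{\phi'}^\dagger)\bigr]}{\tr(M_x\rho)} \leq 1 + c\,\varep^2/d\]
for every admissible history-dependent $\rho$ and POVM $\{M_x\}$ on system$+$ancilla; iterating this $N$ times then gives $\chi^2 \leq (1+c\varep^2/d)^N - 1$, and demanding $\chi^2 \gtrsim 1$ from the TV lower bound forces $N = \Omega(d/\varep^2)$.

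The main obstacle is the per-step inequality, and its proof will rest on two ingredients. First, writing $\widetilde U_\phi \rho \widetilde U_\phi^\dagger = \rho + \Delta_\phi(\rho)$, the linear-in-$\Delta$ contribution vanishes because $\tr\Delta_\phi(\rho)=0$ for every unitary perturbation, removing what would otherwise be the leading term for a generic CPTP perturbation. Second, the surviving quadratic contribution is controlled by the first two Haar moments, in particular $\mathbb{E}_\phi|\phi\rangle\langle\phi|^{\otimes 2} = 2\Pi_{\mathrm{sym}}/(d(d+1))$, which dilutes each rank-one factor by $1/d$; combined with the two copies of $\theta=\Theta(\varep)$ supplied by $\Delta_\phi$ and $\Delta_{\phi'}$, this produces the desired $\varep^2/d$ scaling. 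The delicate point is handling the $1/\tr(M_x\rho)$ denominator uniformly over all POVMs and arbitrarily large ancilla, which calls for a matrix Cauchy--Schwarz/monotonicity estimate exploiting $\sum_x M_x = I$ and the positivity of each $M_x$, rather than a naive operator-norm bound that would lose additional factors of $d$. Finally, inverse queries $\mathcal{E}_{U^\dagger}$ are accommodated because the ensemble is closed under $\theta \mapsto -\theta$, so the same per-step inequality applies with possibly a different universal constant, and the proof goes through uniformly in the query type.
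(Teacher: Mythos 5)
You choose the same single-basis phase-rotation ensemble $U_\phi = I + (e^{is}-1)\ketbra{\phi}{\phi}$ and invoke Le Cam's two-point method, which matches the paper's skeleton, but the core of your argument is a chi-squared telescoping that has a genuine gap, and it is precisely the gap that forces the paper into its (much heavier) good-set/martingale machinery.

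The problem is your per-step inequality. Since $\sum_x M_x = I$ and $\tr \Delta_\phi = 0$, the linear term does indeed cancel in the per-step sum, as you observe. What survives is the quadratic term
\begin{align*}
\sum_x \frac{\bigl(\E_\phi\tr(M_x\Delta_\phi)\bigr)^2}{\tr(M_x\rho)},
\end{align*}
and this quantity is \emph{not} $O(\varep^2/d)$ uniformly over admissible $(\rho,\{M_x\})$. Computing $\E_\phi\Delta_\phi$ via the first two Haar moments gives
\begin{align*}
\E_\phi\tr(M_x\Delta_\phi) = \frac{2(1-\cos s)}{d(d+1)}\Bigl[\tr\bigl(\tr_S(M_x)\tr_S(\rho)\bigr) - d\,\tr(M_x\rho)\Bigr],
\end{align*}
so the per-step excess is, up to constants, $\varep^4 d^{-4}\sum_x \tr(M_x\rho)\bigl(f(M_x,\rho)-d\bigr)^2$ with $f(E,\rho) = \tr(\tr_S(E)\tr_S(\rho))/\tr(E\rho)$. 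This variance-like sum is unbounded: for a rank-one, ancilla-free POVM on a pure input state $\ket{r}$ it reduces to $\sum_x \bigl(1 - d|\braket{\psi_x}{r}|^2\bigr)^2 / |\braket{\psi_x}{r}|^2$, which blows up whenever some POVM element has small overlap with $\ket{r}$. No matrix Cauchy--Schwarz trick rescues this, because it is genuinely a chi-squared divergence between the distributions $x\mapsto\tr(\tr_S M_x\,\tr_S\rho)/d$ and $x\mapsto\tr(M_x\rho)$, and chi-squared has exactly this pathology. There is a second, independent obstruction: since $\phi,\phi'$ are shared across all $N$ steps, telescoping the adaptive transcript sum requires the per-step factor to be bounded \emph{pointwise} in $\phi,\phi'$ (not merely in expectation), which is an even stronger and more clearly false requirement.

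The paper avoids chi-squared entirely for this reason. It upper-bounds the total variation by $\E_\psi\E_{\bo\sim p_0}\max(0,\,1 - L(\psi,\bo))$ with $L$ the likelihood ratio; since $\max(0,1-L)\le 1$, no denominator can blow this up. It then partitions $(\psi,\bo)$-space into a good set $G_\psi = A_\alpha\cap B_{\beta,\psi}\cap C_{\gamma,\psi}$: the set $A_\alpha$ controls the very quantity $\sum_k f(E_{o_k},\rho)$ that breaks your bound, and does so only \emph{on average} via a Markov-inequality argument (Lemma~\ref{lemma:orth_POVM}, using $\alpha = 100d$ because $\E_{\bo\sim p_0}\sum_k f = dN$); the set $B_{\beta,\psi}$ rules out single catastrophic steps $X_k\approx -1$; and $C_{\gamma,\psi}$ bounds the predictable quadratic variation. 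On the good set, a martingale concentration inequality (Lemma~\ref{lemma:martingale}, from Ref.~\cite{chen2023efficient}) controls $\log L = \sum_k\log(1+X_k)$. The moment computations that feed into this are Lemmas~\ref{lemma:X_1moment}--\ref{lemma:X_2moment}, which are close in spirit to your Haar-moment calculations, but crucially they are allowed to depend on $f(E,\rho)$ because the good-set framework later absorbs $f$ probabilistically. That probabilistic handling of the bad denominators is the idea missing from your proposal; adding it would essentially reconstruct the paper's proof.
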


\noindent
This result strengthens the established lower bound that incoherent algorithms require $\Omega(d/\varep^2)$ queries to certify general CPTP channels~\cite{fawzi2023incoherent}.
Specifically, it shows that the same bound applies even when the given CPTP channel is guaranteed to be unitary.

To prove Theorem \ref{thm:incoh}, we consider a related hypothesis-testing task, which serves as a restricted version of the certification task.
Let $E_\varepsilon$ be an ensemble of $\varepsilon$-perturbed unitary channels $\mathcal{E}_U$, each satisfying $D(\mathcal{E}_U, \mathcal{E}_I) = \varepsilon$.
We consider testing whether $\mathcal{E}_U$ is the identity channel or is sampled from the ensemble $E_\varep$:
\begin{align}
    \label{eq:hypothesis}
    H_0:\mathcal{E}_U=\mathcal{E}_I\quad \text{vs.}\quad H_1:\mathcal{E}_U\sim E_\varep.
\end{align}

\noindent
Since a channel $\mathcal{E}_U$ sampled from $E_\varep$ always satisfies $D(\mathcal{E}_U, \mathcal{E}_I)\geq\varep$ by construction, any algorithm that successfully certifies unitary channels must be able to distinguish these two hypotheses.
Thus, the query complexity of this hypothesis test provides a lower bound on the complexity of the original certification task. 
Therefore, it is sufficient to analyze the query complexity of this problem to derive the lower bound of the unitary channel certification problem.

\begin{figure}[tb]
    \centering
    \begin{overpic}[width=0.41\textwidth]{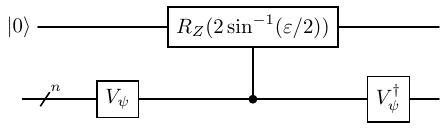}
    \end{overpic}
    \caption{Circuit implementation of a single-basis rotation channel $\mathcal{E}_{U_\psi}$. The unitary operator $V_\psi$ maps the basis state $\ket{\psi}$ to the computational basis state $\ket{1}^{\otimes n}$. The controlled-rotation gate applies a phase shift of $2\sin^{-1}(\varepsilon/2)$ to this computational basis component. As a result, the entire circuit behaves as a phase-shifting channel for the $\ket{\psi}$ basis and as an identity channel for all other bases.}
    \label{fig:worst_ensemble}
\end{figure}

We now construct an ensemble $E_\varep$ to which the corresponding hypothesis testing requires many queries. 
The ensemble we construct is given as follows:
\begin{align}
    E_\varep&=\{\mathcal{E}_{U_\psi}\}_{\ket{\psi}\sim\text{Haar}},\\
    U_\psi&\coleq I+(e^{2i\sin^{-1}(\varep/2)}-1)\ketbra{\psi}{\psi},
\end{align}
where $\ket{\psi}$ is a $d$-dimensional Haar-random state.
Here, each unitary channel $\mathcal{E}_{U_\psi}$ in this ensemble induces a phase shift of $2\sin^{-1}(\varepsilon/2)$ only on the basis $\ket{\psi}$ and acts as the identity elsewhere (see Fig.~\ref{fig:worst_ensemble}).
Reflecting this structure, we refer to $\mathcal{E}_{U_\psi}$ as the \textit{single-basis rotation channel} and to the ensemble $E_\varep$ as the \textit{single-basis rotation ensemble}.
To confirm that $E_\varepsilon$ forms an ensemble of $\varepsilon$-perturbed unitary channels from the identity channel, we examine the structure of the diamond distance $D(\mathcal{E}_U, \mathcal{E}_I)$.
The following lemma expresses it in terms of the eigenangles $\theta_1, \dots, \theta_d$, the arguments of the complex eigenvalues $e^{i\theta_1}, \dots, e^{i\theta_d}$ of the unitary operator $U$:

\begin{lemma}
    \label{lemma:distance} (\cite{haah2023query, kliesch2021theory})
    Let $[\theta_{\text{min}},\theta_{\text{max}}]$ be the shortest interval including all eigenangles of $U$. Then for $\varep<2$, $D(\mathcal{E}_U, \mathcal{E}_I)=\varep$ is equivalent to $\theta_{\text{max}}-\theta_{\text{min}}=2\sin^{-1}(\varep/2)$.
\end{lemma}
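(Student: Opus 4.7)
The plan is to derive the closed form $D(\mathcal{E}_U,\mathcal{E}_I) = 2\sin(\Delta/2)$ with $\Delta \coleq \theta_{\max}-\theta_{\min}$, and then invert it to recover the claim. First I would reduce the diamond distance to a pure-state overlap. Since both $\mathcal{E}_U$ and $\mathcal{E}_I$ are unitary channels, the diamond distance is attained on a pure input $|\Psi\rangle$ on the system plus an ancilla, and the outputs $(U\otimes I)|\Psi\rangle\langle\Psi|(U^\dagger\otimes I)$ and $|\Psi\rangle\langle\Psi|$ are both pure. The pure-state trace-distance identity $\||\phi\rangle\langle\phi|-|\psi\rangle\langle\psi|\|_1 = 2\sqrt{1-|\langle\phi|\psi\rangle|^2}$ then gives
\begin{align}
D(\mathcal{E}_U,\mathcal{E}_I) = 2\sqrt{1 - \min_{|\Psi\rangle}|\langle\Psi|(U\otimes I)|\Psi\rangle|^2}.
\end{align}

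Second, I would recast the overlap in terms of $U$'s spectrum. Setting $\rho_A = \tr_B|\Psi\rangle\langle\Psi|$ gives $\langle\Psi|(U\otimes I)|\Psi\rangle = \tr(U\rho_A)$; diagonalizing $U = \sum_j e^{i\theta_j}|v_j\rangle\langle v_j|$ and writing $p_j \coleq \langle v_j|\rho_A|v_j\rangle$ then yields $\tr(U\rho_A) = \sum_j p_j e^{i\theta_j}$, where $\{p_j\}$ is a probability distribution. Every such distribution is attainable by choosing $\rho_A$ diagonal in $\{|v_j\rangle\}$ and purifying it with an ancilla of dimension at least $d$, so
\begin{align}
\min_{|\Psi\rangle}|\langle\Psi|(U\otimes I)|\Psi\rangle| = \mathrm{dist}\big(0,\mathrm{conv}\{e^{i\theta_1},\ldots,e^{i\theta_d}\}\big) \eqcol \nu.
\end{align}

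Third, I would compute $\nu$ from $\Delta$. For $\varep < 2$ one must have $\Delta < \pi$, since $\Delta \geq \pi$ places the origin inside the convex hull and forces $D=2$. When $\Delta < \pi$, the hull is a polygon whose vertices are inscribed in an arc of angular width $\Delta$; after a global rotation by $-(\theta_{\min}+\theta_{\max})/2$, the extremal eigenvalues sit at $e^{\pm i\Delta/2}$, and the chord joining them is the unique polygon edge on the side of the hull facing the origin. The perpendicular from $0$ to this chord meets it at its midpoint, at distance $\cos(\Delta/2)$, so $\nu = \cos(\Delta/2)$ and
\begin{align}
D(\mathcal{E}_U,\mathcal{E}_I) = 2\sqrt{1-\cos^2(\Delta/2)} = 2\sin(\Delta/2).
\end{align}
Setting $D=\varep$ and inverting on $\Delta\in[0,\pi)$ yields $\Delta = 2\sin^{-1}(\varep/2)$, which is the stated equivalence.

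The only step that demands a second look is the geometric identification of the nearest polygon edge: one must verify that among the edges of the convex hull, the chord between the two extremal eigenvalues is indeed closest to the origin. This holds because every other vertex lies on the arc between $e^{i\theta_{\min}}$ and $e^{i\theta_{\max}}$, on the side of that chord away from the origin, so every other edge is farther from $0$. Everything else is a routine chain of pure-state trace-distance and planar-geometry computations, together with standard diamond-norm reductions.
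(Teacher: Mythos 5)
The paper cites this lemma from Refs.~\cite{haah2023query,kliesch2021theory} rather than proving it, so there is no in-paper proof to compare against; your derivation is correct and follows the standard route. You reduce the (unnormalized) diamond distance between two unitary channels to $2\sqrt{1-\nu^2}$, where $\nu = \min_{\rho_A}|\tr(U\rho_A)| = \mathrm{dist}(0,\mathrm{conv}\{e^{i\theta_j}\})$ is the distance from the origin to the numerical range of $U$, and then solve the resulting planar geometry: the condition $\varep<2$ forces $\Delta<\pi$ (since every gap between consecutive eigenangles exceeding $\pi$ corresponds to $\Delta<\pi$, while $\Delta\geq\pi$ places the origin in the hull and gives $D=2$), and with $\Delta<\pi$ all eigenvalues lie strictly to the right of the chord through $e^{\pm i\Delta/2}$, whose foot of perpendicular from the origin is the midpoint at distance $\cos(\Delta/2)$, yielding $D=2\sin(\Delta/2)$. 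Every step — attainment of the diamond-norm supremum on pure inputs, the identification of $\{\tr(U\rho_A)\}$ with the convex hull of the spectrum for normal $U$, and the supporting-chord computation — is sound, and inverting $D=2\sin(\Delta/2)$ on $[0,\pi)$ gives the stated equivalence in both directions.
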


\noindent 
Applying this lemma to the channel $\mathcal{E}_{U_\psi}$, only one eigenangle corresponding to the $\ket{\psi}$ basis is nonzero (equal to $2\sin^{-1}(\varepsilon/2)$), while the remaining eigenangles are all zero.
Thus, we have $\theta_{\min}=0$ and $\theta_{\max}=2\sin^{-1}(\varep/2)$, confirming that $\mathcal{E}_{U_\psi}$ is $\varep$-perturbed as $D(\mathcal{E}_{U_\psi}, \mathcal{E}_I)=\varep$; thus, $E_\varep$ is an ensemble of $\varep$-perturbed unitary channels from the identity channel.

Now, we conclude that testing the hypothesis—distinguishing an identity channel from a random channel from $E_\varep$—is exponentially hard for an incoherent algorithm, requiring $\Omega(d/\varep^2)$ queries.
The rest of the proof is outlined in the following proof sketch:

\begin{proof}[Proof sketch of Theorem \ref{thm:incoh}.]
    We employ LeCam's two-point method~\cite{lecam1973convergence} to analyze the hypothesis testing problem defined in Eq.~\eqref{eq:hypothesis}.
    This method relates the testing error probability to the total variation distance (TVD) between the probability distributions of observables under the two hypotheses.
    More specifically, LeCam's method implies that achieving a small testing error requires a sufficiently large TVD between these distributions.
    Thus, we show that a query complexity of $\Omega(d/\varep^2)$ is necessary to obtain such a large TVD.
    This directly implies that the same complexity is required for the certification task.

    Our proof proceeds in two main steps.
    First, we define a suitable \textit{good set} of the measurement outcomes and show that for arbitrary measurements, most outcomes lie within this set, except possibly for a small fraction.
    Next, we show that within this good set, the likelihood ratio between the distributions corresponding to the two hypotheses is concentrated around 1, \textit{i.e.}, the two hypotheses are informationally hard to distinguish.
    To quantify this concentration rigorously, we employ a martingale-based concentration inequality from Ref.~\cite{chen2023efficient}. 
    This step yields an explicit upper bound on the achievable TVD as a function of the number of queries $N$. 
    Together, these results establish the claimed complexity lower bound. 
    The detailed proof is provided in Appendix \ref{appsec:incoh}.
\end{proof}

Note that this lower bound is tight as there exists a matching upper bound established by Ref.~\cite{fawzi2023incoherent}.
Specifically, the following algorithm based on random state preparation and measurement achieves the matching upper bound of $\mathcal{O}(d/\varep^2)$:

\begin{figure}[ht]
\begin{algorithm}[H]
		\caption{Query-optimal incoherent algorithm for unitary channel certification~\cite{fawzi2023incoherent}}
		\label{alg:incoh}
		\begin{algorithmic}[1]
			\Require
			$N$ copies of an $d$-dimensional unitary channel $\mathcal{E}_U$.
			\Ensure
			Decide whether $H_0:\mathcal{E}_U=\mathcal{E}_I$ or $H_1:D(\mathcal{E}_U, \mathcal{E}_I)\geq\varep$.
			\For{$i=1~\textbf{to}~N$}
			\State Input Haar-random $\ket{\psi}$ to $\mathcal{E}_U$.
			\State Measure output with POVM $\{\ketbra{\psi}{\psi}, I-\ketbra{\psi}{\psi}\}$.
            \State Obtain outcome $X_i=0$ or $X_i=1$, respectively.
			\If{$X_i=1$}
			\State\Return Decide $H_1$.
			\EndIf
			\EndFor
            \State\Return Decide $H_0$.
		\end{algorithmic}
\end{algorithm}
\end{figure}

\subsection{Query complexity for coherent algorithms}

In various quantum hypothesis-testing scenarios, jointly measuring multiple queries simultaneously—known as joint measurement—often yields substantial advantages compared to measuring each query individually~\cite{shapiro2020quantum, zhuang2021quantum, coroi2025exponential}. 
Thus, it is valuable to extend our analysis beyond incoherent algorithms and consider general coherent algorithms.

We prove that unitary channel certification requires exponentially many queries, even for coherent algorithms with arbitrarily large ancillary systems. 
This result highlights the fundamental hardness of certification.
Our result is stated as follows:

\vspace{1em}
\begin{theorem}
    \label{thm:coh_lower}
    Consider a coherent algorithm with an arbitrarily large ancillary system, which tests whether $D(\mathcal{E}_U,\mathcal{E}_I)\geq \varep$ or $\mathcal{E}_U=\mathcal{E}_I$ with success probability at least $2/3$. For $\varep<1/2$, the required number of queries to $\mathcal{E}_U$ (or $\mathcal{E}_{U^\dagger}$) is $N=\Omega({\sqrt{d}}/{\varep})$.
\end{theorem}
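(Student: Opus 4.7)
The plan is to reduce the problem to hypothesis testing against a sub-ensemble of the single-basis rotation ensemble $E_\varep$ introduced in Sec.~\ref{subsec:incoh}, and to bound the distinguishability of the resulting algorithm output states by the hybrid (BBBV) method rather than by LeCam applied to classical outcomes. I would restrict to the discrete sub-ensemble $\{\mathcal{E}_{U_x}\}_{x\in\{0,1\}^n}$ obtained by taking $|\psi\rangle = |x\rangle$ to be a uniformly random computational basis state. Each $U_x - I = (e^{i\theta}-1)|x\rangle\langle x|$ with $\theta = 2\sin^{-1}(\varep/2)$ is a rank-one perturbation of operator norm $\varep$, and by Lemma~\ref{lemma:distance} still satisfies $D(\mathcal{E}_{U_x},\mathcal{E}_I) = \varep$, so any algorithm that certifies unitary channels with success probability $\geq 2/3$ must in particular distinguish $\mathcal{E}_I$ from a uniformly random $\mathcal{E}_{U_x}$ with success probability $\geq 2/3$.

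Next, I would purify the coherent algorithm into standard oracle-query form by Stinespring-dilating each interleaved CPTP map $\mathcal{C}_k$ into a fixed unitary on an enlarged ancilla and deferring the final POVM to the end. Let $|\Phi_t\rangle$ denote the pure state after $t$ steps when every query is replaced by the identity, and $|\Phi_t^x\rangle$ the corresponding state when the oracle is $U_x$. Because successive hybrids differ only by a single insertion of $(U_x - I)$, or $(U_x^\dagger - I)$ in the inverse case, on the target register, a telescoping argument yields
\begin{align}
\bigl\||\Phi_N^x\rangle - |\Phi_N\rangle\bigr\| \leq \varep \sum_{t=1}^N \bigl\|(|x\rangle\langle x|\otimes I_{\text{anc}})|\Phi_{t-1}\rangle\bigr\|.
\end{align}
Squaring via Cauchy--Schwarz in $t$, averaging over $x$, and using the basis identity $\sum_{x} \|(|x\rangle\langle x|\otimes I_{\text{anc}})|\Phi_{t-1}\rangle\|^2 = 1$ gives
\begin{align}
\mathbb{E}_x \bigl\||\Phi_N^x\rangle - |\Phi_N\rangle\bigr\|^2 \leq \frac{N^2 \varep^2}{d}.
\end{align}
Combining the triangle inequality for trace distance with Cauchy--Schwarz, the trace distance between $|\Phi_N\rangle\langle\Phi_N|$ and $\mathbb{E}_x |\Phi_N^x\rangle\langle\Phi_N^x|$ is at most $\sqrt{N^2\varep^2/d}$; since success probability $2/3$ forces this trace distance to be $\Omega(1)$, we obtain $N = \Omega(\sqrt{d}/\varep)$.

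The main obstacle I anticipate is not the hybrid bookkeeping itself but the careful conversion of the adaptive, mixed-state coherent algorithm—with POVMs, classically controlled $\mathcal{C}_k$, arbitrarily large ancillas, and both $\mathcal{E}_U$ and $\mathcal{E}_{U^\dagger}$ queries—into the pure oracle-query model in which the hybrid bound above is valid. Once the Stinespring purification is set up, inverse queries cause no difficulty because $\|(U_x^\dagger - I)|\phi\rangle\| = \|(U_x - I)|\phi\rangle\|$ for every state $|\phi\rangle$, adaptivity through classical registers is subsumed into the dilated unitaries, and the unbounded ancilla contributes only to the tensor factor untouched by the rank-one perturbation, so the per-step bound $\varep\,\|(|x\rangle\langle x|\otimes I_{\text{anc}})|\Phi_{t-1}\rangle\|$ and hence the final $N^2\varep^2/d$ estimate are preserved verbatim.
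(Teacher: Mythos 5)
Your proof is correct, but it takes a genuinely different route from the paper's. The paper works entirely in the mixed-state formalism: it keeps the Haar-random ensemble $\{\mathcal{E}_{U_\psi}\}$, telescopes the trace distance $\|\rho_{0,N}-\rho_{1,\psi,N}\|_1$ across the $N$ query layers using contractivity of trace distance under CPTP maps, and its central technical input is the Haar moment bound $\E_\psi\|(\ketbra{\psi}{\psi}\otimes I_{\text{anc}})\xi\|_1\le 1/\sqrt d$, obtained from the eigendecomposition of $\xi$ and the first-moment identity $\E_\psi\ketbra{\psi}{\psi}=I/d$. You instead discretize the hard instance to computational-basis perturbations $U_x$, Stinespring-purify the algorithm into the standard oracle-query model, and run the BBBV hybrid argument, so that your central input becomes the elementary completeness relation $\sum_x\|(\ketbra{x}{x}\otimes I_{\text{anc}})\ket{\Phi_{t-1}}\|^2=1$ rather than any Haar integral. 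What your route buys is elementarity (no second-moment Haar calculation, no eigendecomposition of intermediate states) and a slightly cleaner constant, at the modest cost of one extra reduction step (justifying that purification plus deferred measurement faithfully captures the paper's coherent model, which is standard but needs stating). Both proofs correctly handle inverse queries, yours via $\|(U_x^\dagger-I)\ket{\phi}\|=\|(U_x-I)\ket{\phi}\|$ and the paper's by a symmetric remark. One small bookkeeping caution: the hybrid bound $\|\ket{\Phi_N^x}-\ket{\Phi_N}\|\le\varep\sum_t\|(\ketbra{x}{x}\otimes I_{\text{anc}})\ket{\Phi_{t-1}}\|$ with the \emph{oracle-free} intermediate states $\ket{\Phi_{t-1}}$ requires ordering the hybrids so that the query being swapped is always preceded only by identity oracles; with the opposite ordering the intermediate state would carry $U_x$-dependence, which would break the clean $\sum_x(\cdot)=1$ identity. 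As long as you use the standard ordering (swap from the last query backward), the step is valid and the conclusion $N=\Omega(\sqrt d/\varep)$ follows from Helstrom exactly as you say.
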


\noindent 
This strengthens the established lower bound that coherent algorithms require $\Omega(\sqrt{d}/\varep)$ queries to certify general CPTP channels~\cite{rosenthal2024average}.
Specifically, it shows that the same lower bound applies even when the channel is guaranteed to be unitary.
This also generalizes the lower bound for Boolean function certification, which requires $ \Omega(\sqrt{d})$ queries~\cite{montanaro2013survey}.

\begin{proof}[Proof sketch of Theorem \ref{thm:coh_lower}.]
    Consider the output states $\rho_0$ and $\rho_1$ corresponding to hypotheses $H_0$ and $H_1$ in Eq.~\eqref{eq:hypothesis}, respectively.
    The hypothesis-testing error probability is bounded by the trace distance between these two states~\cite{helstrom1969quantum}.
    In coherent algorithms, each pair of an ancilla-coupled channel $\mathcal{E}_U$ and the CPTP map $\mathcal{C}_k$ can increase this trace distance by at most $\mathcal{O}(\varepsilon/\sqrt{d})$, due to the contractivity of trace distance under CPTP maps~\cite{nielsen2010quantum}.
    Therefore, achieving an error probability of at least $2/3$ requires query complexity $\Omega(\sqrt{d}/\varepsilon)$. 
    The detailed proof is provided in Appendix \ref{appsubsec:coh_lower}. 
    We note that the proof is similar to the one given by Ref.~\cite{rosenthal2024average}.
\end{proof}

Theorem \ref{thm:coh_lower} highlights the exponential hardness of certification.
Meanwhile, we observe that if information about the basis state $\ket{\psi}$ associated with each single-basis rotation channel $\mathcal{E}_{U_\psi} \sim E_\varepsilon$ is given, one can certify $\mathcal{E}_{U_\psi}$ using only constant queries of $\mathcal{O}(1/\varepsilon^2)$ via the Hadamard test on the channel $\mathcal{E}_{U_\psi}$ and the state $\ket{\psi}$.
This indicates that the hardness given in Theorem \ref{thm:coh_lower} arises from the unknown information on the phase-rotating basis state $\ket{\psi}$ of $\mathcal{E}_{U_\psi}$.

This type of issue is frequently referred to as \textit{finding a needle in a haystack}, as one has to find a single basis state in a large-dimensional Hilbert space.
A well-known solution to this is Grover's algorithm, which achieves a quadratic speedup over the brute-force approach in a basis-search problem~\cite{grover1996fast, bennett1997strengths}.
Motivated by this, we present a novel Grover-like algorithm achieving the optimal query complexity of $\mathcal{O}(\sqrt{d}/\varep)$, thereby exhibiting a \textit{quadratic speedup} compared to incoherent algorithms.
Our result is stated as follows:

\begin{theorem}
    \label{thm:coh_upper}
    There exists a coherent algorithm that tests whether $\mathcal{E}_U=\mathcal{E}_I$ or $D(\mathcal{E}_U, \mathcal{E}_I)\geq\varep$ with success probability at least $2/3$ using $N=\mathcal{O}(\sqrt{d}/\varep)$ queries to $\mathcal{E}_U$ and $\mathcal{E}_{U^\dagger}$.
\end{theorem}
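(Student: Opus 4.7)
The plan is to reduce the task to amplitude amplification on the Choi state of $\mathcal{E}_U$. Prepare the maximally entangled state $|\Omega\rangle\coleq d^{-1/2}\sum_{i=1}^d |i\rangle_A|i\rangle_B$ on two $d$-dimensional registers, and use one query of $\mathcal{E}_U$ on register $B$ to produce $|\phi_U\rangle\coleq(I\otimes U)|\Omega\rangle$. Since $\langle \Omega|\phi_U\rangle = \tr(U)/d$, decomposing $|\phi_U\rangle = (\tr(U)/d)|\Omega\rangle + \beta|\Omega^\perp\rangle$ with $|\beta|^2 = 1 - |\tr(U)|^2/d^2$ reduces certification to deciding $\beta=0$ versus $|\beta|\geq\delta$ for a suitable threshold $\delta$. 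The state-preparation unitary $A$ with $A|0\rangle = |\phi_U\rangle$ costs one query to $\mathcal{E}_U$, its inverse $A^\dagger$ one query to $\mathcal{E}_{U^\dagger}$, and the reflection $I - 2|\Omega\rangle\langle\Omega|$ requires no queries at all.

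The key quantitative step is to show that $D(\mathcal{E}_U,\mathcal{E}_I)\geq\varep$ forces $|\beta| = \Omega(\varep/\sqrt{d})$. Expanding in the eigenangles $\theta_1,\dots,\theta_d$ of $U$ yields the identity $d^2 - |\tr(U)|^2 = 2\sum_{j,k}\sin^2((\theta_j-\theta_k)/2)$. By Lemma~\ref{lemma:distance}, the spread satisfies $\theta_{\max}-\theta_{\min} \geq 2\sin^{-1}(\varep/2)$, and minimizing the right-hand side under this spread constraint is an extremal problem whose minimum is attained by the single-basis rotation configuration ($d-1$ eigenangles at one endpoint and one at the other), evaluating to $(d-1)\varep^2$. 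Hence $|\beta|^2 \geq (d-1)\varep^2/d^2 \geq \varep^2/(2d)$ for $d\geq 2$, i.e.\ $|\beta| \geq \varep/\sqrt{2d}$.

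With this amplitude gap in hand, the amplification is a standard QSVT routine: the state-prep unitary $A$ together with the projector $|\Omega\rangle\langle\Omega|$ forms a one-dimensional block encoding of the scalar $\tr(U)/d$, and applying a degree-$\mathcal{O}(\sqrt{d}/\varep)$ QSVT polynomial that approximates a sign-like function separating $0$ from the interval $[\varep/\sqrt{2d},1]$ (for instance, the Yoder--Low--Chuang fixed-point amplitude-amplification polynomial) produces a circuit whose flag measurement distinguishes $\beta=0$ from $|\beta|\geq\varep/\sqrt{2d}$ with success probability at least $2/3$. The circuit uses $\mathcal{O}(\sqrt{d}/\varep)$ alternating applications of $A$ and $A^\dagger$, consuming $\mathcal{O}(\sqrt{d}/\varep)$ queries to $\mathcal{E}_U$ and $\mathcal{E}_{U^\dagger}$ in total. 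The main obstacle I anticipate is the signal lower bound: verifying rigorously that the single-basis rotation minimizes $\sum_{j,k}\sin^2((\theta_j-\theta_k)/2)$ over all eigenangle configurations of prescribed spread. The intuition (concentrating mass at one endpoint shortens pairwise differences) is clear, but a careful proof should use a convexity/majorization argument or a perturbation argument on extremal configurations; once this is in place, the QSVT amplification is a routine invocation of existing constructions.
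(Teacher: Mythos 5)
Your approach is a genuine alternative to the paper's. The paper prepares a Haar-random state $\ket{\psi}$, block-encodes $\langle\psi|U^\dagger|\psi\rangle$ via $V=I$ and projectors $\ketbra{\psi}{\psi}$, $U\ketbra{\psi}{\psi}U^\dagger$, and handles the randomness of this overlap through a probabilistic tail bound (Lemma~\ref{lemma:qsvtdelta}). You instead prepare the fixed maximally entangled state $|\Omega\rangle$ and block-encode the deterministic quantity $\tr(U)/d$ through the state-preparation unitary and reflections about $|0\rangle$ and $|\Omega\rangle$. Your version avoids the averaging over $\ket\psi$ at the cost of doubling the register, which the theorem permits; both then run a degree-$\mathcal{O}(\sqrt{d}/\varep)$ QSVT polynomial separating singular value $1$ from singular values bounded away by $\Omega(\varep^2/d)$, so the query counts match.

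However, the extremal claim that you flag as the main obstacle is in fact false, and so is the intuition behind it. Over eigenangle configurations with prescribed spread $s=2\sin^{-1}(\varep/2)$, minimizing $d^2-|\tr U|^2=2\sum_{j,k}\sin^2((\theta_j-\theta_k)/2)$ means maximizing $|\tr U|=\left|\sum_j e^{i\theta_j}\right|$. With two angles pinned at the endpoints $0$ and $s$, their phasor sum points in direction $e^{is/2}$, so the $d-2$ free angles are optimally placed at the midpoint $s/2$, not at an endpoint. This gives
\begin{align}
|\tr U|_{\max}=d-2+2\cos(s/2),\qquad d^2-|\tr U|_{\max}^2=4(1-\cos(s/2))(d-1+\cos(s/2)),
\end{align}
which is strictly smaller than your claimed minimum $(d-1)\varep^2=4(d-1)\sin^2(s/2)$ for every $d\geq 3$: concentrating mass at an endpoint \emph{increases} the pairwise sum, since one pair at distance $s$ contributes $\sin^2(s/2)=4\sin^2(s/4)\cos^2(s/4)>2\sin^2(s/4)$ whereas splitting into two pairs at distance $s/2$ contributes only $2\sin^2(s/4)$. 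Fortunately the order of magnitude survives. Using $\sin(s/2)=\varep/2$ and $\sin(s/4)=\sin(s/2)/(2\cos(s/4))\geq\varep/4$, one gets $1-\cos(s/2)=2\sin^2(s/4)\geq\varep^2/8$, hence
\begin{align}
d^2-|\tr U|^2\geq\frac{(d-1)\varep^2}{2},\qquad |\beta|^2=\frac{d^2-|\tr U|^2}{d^2}\geq\frac{\varep^2}{4d},
\end{align}
so $|\beta|\geq\varep/(2\sqrt{d})$, still $\Omega(\varep/\sqrt{d})$. With this corrected signal lower bound and the correspondingly adjusted Chebyshev threshold $\delta\sim\varep^2/d$, the rest of your argument goes through and yields a valid proof.
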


\noindent
Together with Theorem \ref{thm:coh_lower}, this establishes a tight query complexity of $\Theta(\sqrt{d}/\varep)$ for unitary channel certification with coherent algorithms.
This also implies that allowing quantum memory between queries leads to a quadratic speedup—by a factor of $\Theta(\sqrt{d}/\varep)$—over incoherent algorithms.
We note that access to the inverse channel $\mathcal{E}_{U^\dagger}$ is not a stringent assumption as $U$ is often implemented as a quantum circuit composed of a known sequence of standard gates, in which case $\mathcal{E}_{U^\dagger}$ can be realized by simply reversing the gate sequence and replacing each gate with its inverse.
In addition, the same assumption is also used in Theorems $\ref{thm:incoh}$ and $\ref{thm:coh_lower}$ for a fair comparison.

We provide an intuitive description of our algorithm by comparing it with Grover's algorithm, leaving the full version to the end of the section.
The goal of Grover's algorithm is to search for the bit-flipping basis $\ket{m}$ with an oracle $I-2\ketbra{m}{m}$.
To achieve this, Grover's algorithm amplifies the overlap between an initial superposition state $\ket{s}=(\ket{1}+\dots+\ket{d})/\sqrt{d}$ and the target state $\ket{m}$, using alternating rotations around $\ket{s}$ and $\ket{m}$.
By precisely tuning the number of rotations, one can drive the input state towards the target state $\ket{m}$, thus achieving the searching task.
In contrast, our algorithm performs a process of \textit{amplitude deamplification}, reducing the initially large overlap between two states—a Haar-random state $\ket{\psi}$ and a slightly-rotated $U\ket{\psi}$—to near zero.
More specifically, the algorithm takes a Haar-random input state $\ket{\psi}$ and applies alternating rotations around $\ket{\psi}$ and $U\ket{\psi}$.
Under $H_1$, this drives the state toward a state orthogonal to $\ket{\psi}$, while under $H_0$, the rotations preserve the initial $\ket{\psi}$.
A POVM $\{\ketbra{\psi}{\psi}, I-\ketbra{\psi}{\psi}\}$ then distinguishes between $H_0$ and $H_1$, certifying the unitary channel.

A central challenge in adapting Grover's approach lies in the uncertainty of the appropriate number of rotations.
Grover's algorithm requires a precise number of rotations, which is a fixed value depending on the initial overlap $\braket{s}{m} = 1/\sqrt{d}$.
In our case, the number of rotations depends on the overlap $\bra{\psi}U^\dagger\ket{\psi}$ between $\ket{\psi}$ and $U\ket{\psi}$, which is unknown and varies with both $U$ and the randomly chosen $\ket{\psi}$.
Thus, we cannot directly adopt Grover’s iterative structure.

Therefore, we leverage QSVT, a powerful framework for designing quantum algorithms based on polynomial transformations of operators~\cite{gilyen2019quantum, martyn2021grand}.
We briefly introduce the key concept of QSVT to fully construct our algorithm.
Suppose one has black-box access to a unitary operator $V$ and its inverse $V^\dagger$.
Let $\Pi$ and $\tilde{\Pi}$ be orthogonal projections, and consider the sub-block $S = \Pi V \tilde{\Pi}$ of $V$, which can be expressed in block-encoding form as:
\begin{align}
    V=
    \kbordermatrix{
        &\Pi &  \\
        \tilde{\Pi} & S & \cdot \\ 
        & \cdot & \cdot 
    }.
\end{align}
QSVT enables a polynomial transformation of the singular values of $S$ using $V$, $V^\dagger$, and phase rotations controlled by the projectors $\Pi$ and $\tilde{\Pi}$.
To illustrate, let $S=W\Sigma \tilde{W}^\dagger$ be the singular value decomposition of the sub-block $S$.
Then, QSVT yields a new operator $P^{(\text{SV})}(S)=W P(\Sigma) \tilde{W}^\dagger$ for a real polynomial $P$ satisfying certain conditions.
This leads to the following transformed block encoding:
\begin{align}
    V_\Phi=
    \kbordermatrix{
        &\Pi &  \\
        \tilde{\Pi}\text{ or }\Pi & P^{(\text{SV})}(S) & \cdot \\ 
        & \cdot & \cdot 
    },
\end{align}
where $V_\Phi$ is the result of a QSVT circuit.
The procedure for constructing the QSVT circuit is formally stated in the following lemma:
\begin{lemma} (\cite{gilyen2019quantum})
    \label{lemma:qsvt}
    Let $\Pi$ and $\tilde{\Pi}$ be orthogonal projections and define $\Pi_\phi\coleq e^{i\phi(2\Pi-I)}$ as a projector-controlled phase-rotation gate with angle $\phi$. Suppose $P$ is a real polynomial satisfying:
    \begin{enumerate}[label=(\arabic*)]
        \item $\deg(P)=n$
        \item $P$ shares the same parity as $n$.
        \item $|P(x)|\leq 1$ for $x\in[-1, 1]$.
    \end{enumerate}
    Then, for a given unitary operator $V$, there exist angles $\Phi=(\phi_1,\dots,\phi_n)$ such that the unitary operator
    \begin{align}
        V_\Phi=
        \begin{cases}
            \tilde{\Pi}_{\phi_1}V\prod_{k=1}^{(n-1)/2}\Pi_{\phi_{2k}}V^\dagger\tilde{\Pi}_{\phi_{2k+1}}V & n\text{ is odd}\\
            \prod_{k=1}^{n/2}\Pi_{\phi_{2k-1}}V^\dagger\tilde{\Pi}_{\phi_{2k}}V & n\text{ is even}
        \end{cases}
    \end{align}
    satisfies
    \begin{align}
        P^{(\text{SV})}(\Pi V\tilde{\Pi})=
        \begin{cases}
            \Pi V_\Phi\tilde{\Pi} & n\text{ is odd}\\
            \Pi V_\Phi\Pi & n\text{ is even}
        \end{cases}.
    \end{align}
\end{lemma}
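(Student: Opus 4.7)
The plan is to lift the one-qubit Quantum Signal Processing (QSP) theorem to the higher-dimensional QSVT setting using a singular-value decomposition of the sub-block. First, I would write the SVD $\Pi V\tilde{\Pi}=\sum_k \sigma_k|w_k\rangle\langle\tilde{w}_k|$ with $|w_k\rangle\in\mathrm{range}(\Pi)$ and $|\tilde{w}_k\rangle\in\mathrm{range}(\tilde{\Pi})$, and then exhibit a direct-sum decomposition of the ambient Hilbert space into two-dimensional subspaces invariant under the alternating circuit. Setting $|w_k^\perp\rangle\coleq(V|\tilde{w}_k\rangle-\sigma_k|w_k\rangle)/\sqrt{1-\sigma_k^2}$ and $|\tilde{w}_k^\perp\rangle\coleq(V^\dagger|w_k\rangle-\sigma_k|\tilde{w}_k\rangle)/\sqrt{1-\sigma_k^2}$, which lie in $\mathrm{range}(I-\Pi)$ and $\mathrm{range}(I-\tilde{\Pi})$ respectively, a short calculation using only $\Pi^2=\Pi$, $\tilde{\Pi}^2=\tilde{\Pi}$, and the SVD relations shows that $V$ maps $\mathrm{span}\{|\tilde{w}_k\rangle,|\tilde{w}_k^\perp\rangle\}$ into $\mathrm{span}\{|w_k\rangle,|w_k^\perp\rangle\}$ as the signal-rotation block $\left(\begin{smallmatrix}\sigma_k & \sqrt{1-\sigma_k^2} \\ \sqrt{1-\sigma_k^2} & -\sigma_k\end{smallmatrix}\right)$. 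Boundary cases $\sigma_k\in\{0,1\}$ collapse to one-dimensional invariant pieces that can be treated separately by continuity.

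Second, I would observe that the projector-controlled phase rotations act diagonally in these bases: on $\mathrm{span}\{|w_k\rangle,|w_k^\perp\rangle\}$, $\Pi_\phi$ acts as $\mathrm{diag}(e^{i\phi},e^{-i\phi})$ because $|w_k\rangle\in\mathrm{range}(\Pi)$ and $|w_k^\perp\rangle\in\mathrm{range}(I-\Pi)$, and analogously for $\tilde{\Pi}_\phi$ on the tilde subspace. Consequently, the alternating product $V_\Phi$ restricts on each invariant subspace to exactly the one-qubit QSP sequence with signal parameter $x=\sigma_k$, and the split between the odd-$n$ and even-$n$ formulas in the Lemma precisely tracks whether the final operator sends the tilde subspace to the untilded subspace or back to itself. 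The proof then reduces to invoking the one-qubit QSP theorem: for every real polynomial $P$ of degree $n$ with matching parity and $|P(x)|\leq 1$ on $[-1,1]$, there exist angles $\Phi=(\phi_1,\dots,\phi_n)$ such that the designated matrix entry of the resulting QSP sequence equals $P(x)$ for all $x\in[-1,1]$. Reassembling across $k$ by linearity yields $\sum_k P(\sigma_k)|w_k\rangle\langle\tilde{w}_k|=P^{(\mathrm{SV})}(\Pi V\tilde{\Pi})$ in the designated block of $V_\Phi$.

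The main obstacle is the one-qubit QSP theorem itself, which I would prove by induction on $n$. The inductive step peels off one signal rotation together with one phase rotation, reducing the target polynomial $P$ of degree $n$ to a QSP-admissible polynomial of degree $n-2$ with the opposite parity, while solving a small linear system in the complex coefficients of $P$ to determine the next phase $\phi_j$. The delicate point is verifying that the reduced polynomial remains valid—bounded by $1$ on $[-1,1]$ with matching parity—which follows from the unitarity constraint imposing a Pythagorean-type identity $|P(x)|^2+(1-x^2)|Q(x)|^2=1$ between $P$ and a companion polynomial $Q$ that arises in the same matrix row. The base cases $n\in\{0,1\}$ are immediate. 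Beyond this, only bookkeeping remains: fixing the parity conventions, identifying the correct matrix entry in the odd/even formulas, and ensuring continuity at $\sigma_k\in\{0,1\}$ where the explicit $1/\sqrt{1-\sigma_k^2}$ normalization formally degenerates but the polynomial transformation is defined by limits.
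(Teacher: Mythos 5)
The paper does not prove this lemma; it is quoted directly from Gilyén \emph{et al.}~\cite{gilyen2019quantum}, and the surrounding text simply refers the reader there for the construction of the angles $\Phi$. Your sketch reconstructs the standard proof from that reference: Jordan's lemma / the CS decomposition gives the direct-sum splitting into two-dimensional invariant subspaces spanned by $\{|w_k\rangle, |w_k^\perp\rangle\}$ and $\{|\tilde w_k\rangle,|\tilde w_k^\perp\rangle\}$, the projector-controlled rotations become $e^{\pm i\phi}$ on these pairs (i.e.\ $e^{i\phi Z}$ in the $2\times 2$ picture), $V$ becomes the reflection block $\bigl(\begin{smallmatrix}\sigma_k&\sqrt{1-\sigma_k^2}\\\sqrt{1-\sigma_k^2}&-\sigma_k\end{smallmatrix}\bigr)$, and the problem reduces to one-qubit quantum signal processing. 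That is exactly the route the cited paper takes, so your outline is essentially the right proof and consistent with the source.

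One point you slightly gloss over: in the inductive QSP step, the companion polynomial $Q$ satisfying $|P(x)|^2+(1-x^2)|Q(x)|^2=1$ does not simply ``arise in the same matrix row''---it is part of the data that must be supplied before the degree-reduction recursion can start. The lemma's hypotheses constrain only $P$ (real, degree $n$, correct parity, $|P|\le 1$ on $[-1,1]$), so one must first prove that a suitable complex companion $Q$ \emph{exists}; this is a separate Fejér--Riesz-type root-pairing argument on the nonnegative polynomial $1-P(x)^2$ (Lemma~6 in Ref.~\cite{gilyen2019quantum}), and is where the real-vs-complex bookkeeping actually bites. Once $(P,Q)$ is in hand, your peel-off induction goes through as described. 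So: correct approach, correct reduction to 2D subspaces, but the existence of $Q$ deserves to be flagged as a genuine lemma rather than an automatic byproduct of unitarity.
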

\noindent
Details on determining the rotation angles $\Phi$ from the polynomial $P$ can be found in Ref.~\cite{gilyen2019quantum}.

Collecting the results, we now present the full description of our algorithm.
Our algorithm proceeds in three steps: prepare a Haar-random state $\ket{\psi}$, apply a QSVT operator $V_\Phi$, and perform a POVM $\{\ketbra{\psi}{\psi}, I-\ketbra{\psi}{\psi}\}$.
Following the notation in Lemma \ref{lemma:qsvt}, we construct the operator $V_\Phi$ using projections $\Pi=\ketbra{\psi}{\psi}$ and $\tilde{\Pi}=U\ketbra{\psi}{\psi}U^\dagger$, along with a real polynomial $P$ chosen as a rescaled Chebyshev polynomial.
Under this construction, $V_\Phi$ corresponds to a sequence of alternating rotations around $\ket{\psi}$ and $U\ket{\psi}$ with rotation angles determined by the polynomial $P$.
We show that for almost every Haar-random $\ket{\psi}$, this transformation maps the initial singular value $|\bra{\psi}U^\dagger\ket{\psi}|$ to a transformed singular value $|\bra{\psi}V_\Phi\ket{\psi}|$ that is close to one under $H_0$ and close to zero under $H_1$, without requiring knowledge of the exact overlap between $\ket{\psi}$ and $U\ket{\psi}$.
This ensures that the measurement outcome reliably distinguishes between the two hypotheses, therefore enabling certification of the given channel.
Furthermore, we show that the QSVT circuit $V_\Phi$ can be implemented using $\mathcal{O}(\sqrt{d}/\varep)$ queries to $\mathcal{E}_U$ and $\mathcal{E}_{U^\dagger}$, thereby proving Theorem~\ref{thm:coh_upper}.
The complete proof is provided in Appendix~\ref{appsubsec:coh_upper}, and we summarize the algorithm below:

\begin{figure}[ht]
\begin{algorithm}[H]
        \caption{Query-optimal coherent algorithm for unitary channel certification}
        \label{alg:coh}
        \begin{algorithmic}[1]
            \Require
            Unitary channel $\mathcal{E}_{V_\Phi}$ from QSVT, using $N$ copies of $\mathcal{E}_U$ and $\mathcal{E}_{U^\dagger}$.
            \Ensure
            Decide whether $H_0:\mathcal{E}_U=\mathcal{E}_I$ or $H_1:D(\mathcal{E}_U, \mathcal{E}_I)\geq\varep$.
            \State Input Haar-random $\ket{\psi}$ to $\mathcal{E}_{V_\Phi}$.
            \State Measure output with POVM $\{\ketbra{\psi}{\psi}, I-\ketbra{\psi}{\psi}\}$.
            \State Obtain outcome $M=0$ or $M=1$, respectively.
            \If{$M=0$}
                \State\Return Decide $H_0$.
            \Else
                \State\Return Decide $H_1$.
            \EndIf
        \end{algorithmic}
\end{algorithm}
\end{figure}

\section{Average-case query complexity}
\label{sec:average}

So far, we have established the exponential hardness of unitary channel certification by showing that the identity channel is hard to distinguish from a randomly sampled single-phase rotation channel $\mathcal{E}_{U_\psi}$, where $\ket{\psi}$ is sampled from the Haar measure.  
Here, the channel $\mathcal{E}_{U_\psi}$ can be viewed as a multiqubit-controlled phase rotating operation (see Fig.~\ref{fig:worst_ensemble}), which is highly nonlocal and unlikely to arise under standard local noise models.
This naturally raises the question of whether the exponential hardness we established is overly pessimistic or rarely encountered in practical situations.
Indeed, efficient algorithms for average-case scenarios commonly exist across various quantum testing frameworks, such as quantum channel learning~\cite{huang2021information} and quantum state certification~\cite{huang2024certifying}.
Motivated by these observations, we examine the following question: Can the hardness of certification be relaxed if we consider average-case unitary channels?

To address this question, we first need to clearly define what constitutes the \textit{average case} for random unitary channels.
A conventional and natural choice of a random unitary ensemble is the circular unitary ensemble (CUE), which corresponds to the Haar measure over the unitary group~\cite{dyson1962algebraic, brandao2016local}.
However, in our setting, the CUE itself is not an appropriate notion of average-case unitary channels because the CUE does not adequately represent $\varep$-perturbed unitary channels, and thus fails to offer a fair comparison with the single-basis rotation ensemble $E_\varep$.
For a fair comparison, we must instead consider an ensemble consisting exclusively of $\varep$-perturbed unitary channels.
Thus, we introduce the ensemble $\epcue$, defined as the marginal distribution of the CUE conditioned on the channel being $\varep$-perturbed. 
Precisely, its corresponding measure $\mu_{\epcue}$ is given as:
\begin{align}
    \mu_{\epcue}(A)\coleq \pr_{U\sim{\text{CUE}}}(U\in A|D(\mathcal{E}_U, \mathcal{E}_I)=\varep)
\end{align}
for a set $A$.

We show that for almost every randomly chosen unitary $U\sim\epcue$, except for an exponentially small fraction, there exists a simple, nonadaptive, and ancilla-free algorithm capable of certifying the channel $\mathcal{E}_U$ using only a \textit{constant number of queries}.
Our result is stated as follows:

\vspace{1em}
\begin{theorem}
\label{thm:average}
    Suppose a random unitary channel $\mathcal{E}_U$ is given with $U\sim\epcue$ under $\varep<1/2$ and dimension $d\geq4$. 
    There exists an algorithm that tests whether $D(\mathcal{E}_U, \mathcal{E}_I)\geq\varep$ or $\mathcal{E}_U=\mathcal{E}_I$ with success probability at least 2/3 using $N=\mathcal{O}(1/\varep^2)$ queries, except for $\exp(-\Omega(d))$ fraction of $U$.
\end{theorem}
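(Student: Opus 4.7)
\emph{Algorithm and reduction to a trace estimate.} The plan is to show that Algorithm~\ref{alg:incoh} itself---the very scheme that is tight at $\Theta(d/\varep^2)$ in the worst case---already achieves $\mathcal{O}(1/\varep^2)$ queries on almost all $U\sim\epcue$. The intuition is that the worst-case hard instance $\mathcal{E}_{U_\psi}$ localizes the full $\varep$ of deviation onto a single random basis, caught by a Haar probe only with probability $\Theta(1/d)$; in contrast, a typical $U\sim\epcue$ spreads $d$ eigenangles throughout an arc of width $\approx\varep$ by log-gas repulsion, delocalizing the deviation so that a random probe sees it with probability $\Theta(\varep^2)$. The standard Haar moment formula $\E_\psi[|\bra{\psi}U\ket{\psi}|^2]=(|\tr U|^2+d)/(d(d+1))$ turns the per-query rejection probability of Algorithm~\ref{alg:incoh} on a fixed $U$ into
\begin{equation}
p_U \;=\; \frac{d^2-|\tr U|^2}{d(d+1)},
\end{equation}
so the failure probability after $N$ queries under $H_1$ is at most $(1-p_U)^N\leq e^{-Np_U}$, while under $H_0$ the algorithm is trivially correct. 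It therefore suffices to show $|\tr U|^2\leq d^2(1-c\varep^2)$ for some absolute constant $c>0$ on a set of $\epcue$-measure at least $1-e^{-\Omega(d)}$; then $N=\Theta(1/\varep^2)$ queries suffice.

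\emph{Expansion in eigenangles.} By Lemma~\ref{lemma:distance}, $U\sim\epcue$ forces the eigenangles $\theta_1,\dots,\theta_d$ into an arc of width $w=2\sin^{-1}(\varep/2)$, which stays bounded away from $\pi$ since $\varep<1/2$. Writing $|\tr U|^2=\sum_{j,k}\cos(\theta_j-\theta_k)$ and Taylor-expanding $\cos x = 1 - x^2/2 + \mathcal{O}(x^4)$ for $|x|\leq w$, one obtains
\begin{equation}
d^2-|\tr U|^2 \;=\; d\sum_k (\theta_k-\bar\theta)^2 \bigl(1+\mathcal{O}(w^2)\bigr), \qquad \bar\theta\coleq\tfrac{1}{d}\sum_k\theta_k .
\end{equation}
After the rescaling $\eta_k\coleq(\theta_k-\bar\theta)/w\in[-1/2,1/2]$, the problem collapses to proving the linear-statistic lower bound $\frac{1}{d}\sum_k\eta_k^2\geq c_0$ for a fixed $c_0>0$ with $\epcue$-probability at least $1-e^{-\Omega(d)}$.

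\emph{The main obstacle: concentration under the conditioned ensemble.} For small $w$, $|e^{i\theta_j}-e^{i\theta_k}|\approx|\theta_j-\theta_k|$, so the CUE density $\prod_{j<k}|e^{i\theta_j}-e^{i\theta_k}|^2$ conditioned on ``spread $=w$'' becomes, to leading order, the Dyson $\beta=2$ log-gas $\prod_{j<k}|\eta_j-\eta_k|^2$ on $[-1/2,1/2]$ with the two extremal $\eta_k$ pinned at $\pm1/2$. The equilibrium measure of this log-gas is the arcsine law on $[-1/2,1/2]$, for which $\int\eta^2\,d\mu_{\text{eq}}(\eta)=1/8>0$, so $\frac{1}{d}\sum_k\eta_k^2$ concentrates around $1/8$. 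I would invoke standard concentration of linear statistics for $\beta$-ensembles---either the Ben~Arous--Guionnet large-deviation principle or a log-Sobolev inequality for the log-gas---to conclude
\begin{equation}
\pr\!\Bigl[\tfrac{1}{d}\sum_k \eta_k^2 < \tfrac{1}{16}\Bigr] \;\leq\; e^{-\Omega(d^2)} \;\subseteq\; e^{-\Omega(d)},
\end{equation}
which absorbs the required $e^{-\Omega(d)}$ tolerance with room to spare. The principal technical hurdle I anticipate is precisely this endpoint pinning: forcing the extreme eigenvalues onto $\pm1/2$ is a codimension-one conditioning not directly covered by off-the-shelf $\beta$-ensemble concentration statements. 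I would address it either by (i) replacing ``spread $=w$'' with the probabilistically equivalent ``spread $\leq w$'' conditioning and controlling the $\mathcal{O}(1/d)$ boundary correction via a layer-cake argument, or (ii) explicitly reparametrizing the two extreme eigenangles and applying concentration to the residual $(d-2)$-dimensional bulk density. Combining the resulting concentration with the Taylor expansion then yields $p_U=\Omega(\varep^2)$ on all but an $e^{-\Omega(d)}$ fraction of $U\sim\epcue$, and $N=\Theta(1/\varep^2)$ iterations of Algorithm~\ref{alg:incoh} complete the proof.
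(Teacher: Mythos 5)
Your high-level strategy coincides with the paper's: run Algorithm~\ref{alg:incoh}, compute the per-query rejection probability via the Haar second-moment formula (your $p_U = (d^2-|\tr U|^2)/(d(d+1))$ matches the paper's $1-P_{\text{error}|\mathcal{E}_U}^{1/N}$), reduce to a lower bound on the sample variance $\sum_k(\theta_k-\bar\theta)^2$ of the eigenangles, and then establish that this sample variance is $\Omega(s^2)$ on all but an $e^{-\Omega(d)}$ fraction of $U\sim\epcue$. Up to that last step the two arguments are essentially the same; the elementary bound $|\tr U|\leq d - \tfrac13\sum_k(\theta_k-\bar\theta)^2$ the paper uses is just a sharpened, inequality-only version of your Taylor expansion.

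The gap is exactly where you flag it, but it is more serious than your framing suggests, and your proposed remedies do not close it. Conditioning the CUE on $D(\mathcal{E}_U,\mathcal{E}_I)=\varep$, i.e.\ $\theta_{\max}-\theta_{\min}=2\sin^{-1}(\varep/2)$, is a conditioning on an exponentially small event: a generic CUE matrix has eigenangle spread near $2\pi$, so ``spread $\leq w$'' is not a small boundary perturbation of ``spread $=w$'' that a layer-cake argument can absorb---both events are atypical and one must work with the conditional density itself. Likewise, invoking Ben~Arous--Guionnet or a log-Sobolev inequality for the $\beta=2$ log-gas gives $e^{-\Omega(d^2)}$ concentration for the \emph{unconditioned} gas, but neither result is stated for a gas with two particles pinned at the endpoints of a short arc, and the limiting equilibrium measure of the pinned-and-confined gas would have to be re-derived before the rate function can be evaluated at $\{\tfrac1d\sum\eta_k^2 < \tfrac1{16}\}$. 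In short, the one statement on which the whole theorem rests---anti-concentration of the eigenangle sample variance under $\epcue$---is asserted but not proved.

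The paper proves precisely this statement (Lemma~\ref{lemma:anti-concentration}) by an elementary, self-contained route that sidesteps log-gas machinery entirely: it writes down the conditional density $f_{\epcue}(\btheta)\propto\prod_{k<l}|e^{i\theta_k}-e^{i\theta_l}|^2\ind_{\btheta}(\mathcal{R}(s))$, lower-bounds the normalization $C_\varep'$ via a Selberg integral on a slightly shrunk arc, upper-bounds the Vandermonde factor on the ``small sample variance'' region by AM--GM, and concludes the pointwise inequality $f_{\epcue}(\btheta)<f_{\epuni}(\btheta)$ there, where $\epuni$ is the uniform pdf on $\mathcal{R}(s)$. Hoeffding's inequality for $\epuni$ then gives the $e^{-\Omega(d)}$ tail. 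If you want to keep your route, your option~(ii) (integrate out the two extreme eigenangles and analyze the residual $(d-2)$-particle gas) is the viable one, but you would still need to prove the concentration statement for that bulk density from scratch rather than cite an off-the-shelf theorem; at that point you would likely be reproducing something close to the paper's Selberg-integral computation anyway.
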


\begin{figure}[tb]
    \centering
    \begin{overpic}[width=0.37\textwidth]{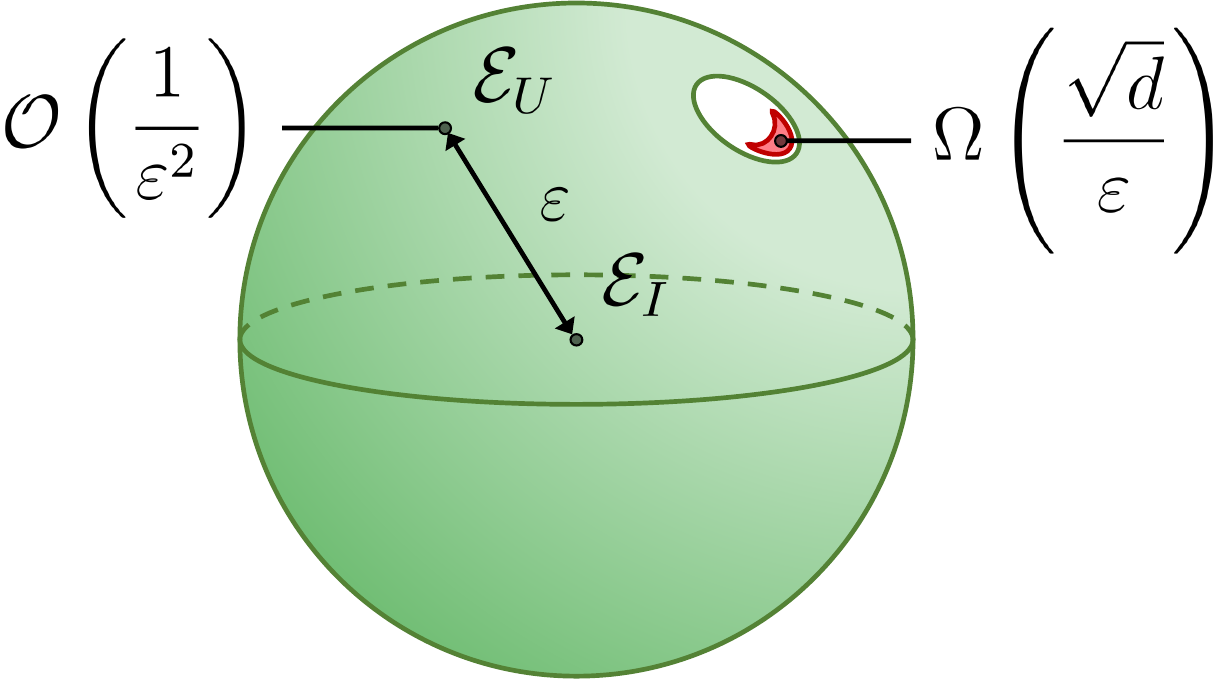} 
    \end{overpic}
    \caption{Visualization of query complexities for $\varep$-perturbed unitary channels. The spherical shell represents the set of $\varep$-perturbed unitary channels sampled from $\epcue$. The green region represents average-case channels, which can be certified using $\mathcal{O}(1/\varep^2)$ queries. The red region represents the single-basis rotation ensemble $E_\varep$, which requires $\Omega(\sqrt{d}/\varepsilon)$ queries for certification. The white region represents a small exceptional subset of measure $\exp(-\Omega(d))$ with unknown complexity.}
    \label{fig:ball}
\end{figure}

\noindent 
Theorem \ref{thm:average} establishes an exponentially large gap between the query complexity of worst-case and average-case scenarios, as illustrated in Fig.~\ref{fig:ball}. 
This emphasizes the importance and practical relevance of considering the average-case scenario in quantum process certification.

\begin{figure*}[htb]
    \centering
    \begin{overpic}[width=0.90\textwidth]{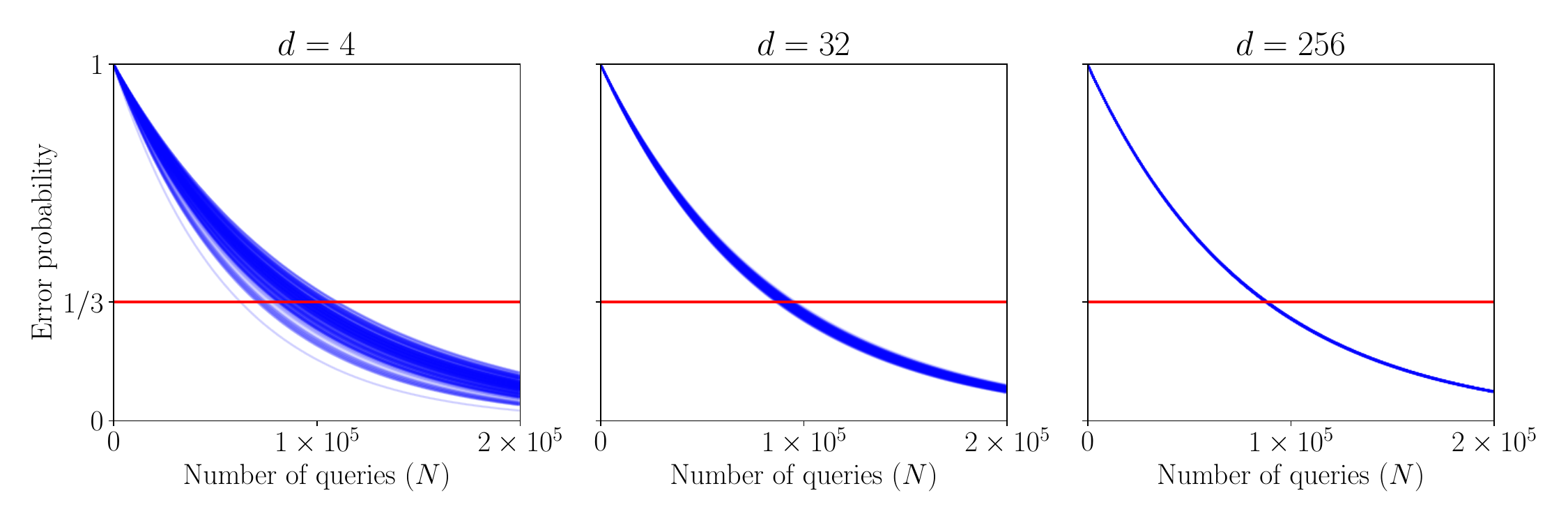}
        \put(4,31){\text{(a)}} 
        \put(35,31){\text{(b)}} 
        \put(66,31){\text{(c)}} 
    \end{overpic}
    \caption{Simulated error probabilities from numerical experiments of Algorithm \ref{alg:incoh}. We plot error probabilities for 200 randomly sampled unitary channels drawn from $\epcue$, with the error threshold $\varep=0.01$ and dimensions (a) $d=4$, (b) $d=32$, and (c) $d=256$. Each blue curve represents the error probability of a single channel as a function of the number of queries $N$. The red horizontal line indicates the targeted error threshold of $1/3$.}
    \label{fig:error}
\end{figure*}

Algorithm~\ref{alg:incoh} introduced in Sec.~\ref{subsec:incoh} achieves the query complexity stated in Theorem~\ref{thm:average}.
We point out that the algorithm employs simple methods involving random state preparation and measurement, without requiring ancillas or adaptive operations.
In addition, it can be efficiently simulated using a unitary 2-design, which can be implemented with shallow quantum circuits of depth $\mathcal{O}(\log\log\log d)$ composed of random Clifford gates~\cite{schuster2024random}.
These observations show that the optimal query complexity can be achieved by an algorithm with a simple structure.

The constant query complexity of Algorithm~\ref{alg:incoh} in the average-case scenario stems from a structural property of Haar-random unitaries.
The eigenvalues of the CUE can be modeled as interacting Brownian particles on a unit circle with inter-particle repulsion~\cite{dyson1962brownian}.
Thus, for $\epcue$, these eigenvalues behave as repulsive particles confined within an arc of length $2\sin^{-1}(\varep/2)$.
Consequently, the eigenangles from $\epcue$ are well-distributed within this region with high probability, leading to an eigenangle variance of order $\varep^2$.
In contrast, worst-case channels from $E_{\varep}$ channels have highly concentrated eigenangles;
Only one eigenangle differs significantly, resulting in an exponentially smaller eigenangle variance of order $\varep^2/d$.
Our proof of Theorem~\ref{thm:average} leverages this observation, showing that Algorithm \ref{alg:incoh} can certify channels having \textit{well-distributed eigenangles} with $\mathcal{O}(1/\varep^2)$ queries.
A detailed proof is given in Appendix \ref{appsec:average}.

We numerically simulate Algorithm \ref{alg:incoh} on unitary channels sampled from $\epcue$ and verify our theoretical results.
To sample unitary channels from $\epcue$, we apply the rejection sampling method using the eigenvalue distribution of the 2-Jacobi ensemble~\cite{dumitriua2002matrix}.
Then, for each sampled channel, we simulate the corresponding error probability, as shown in Fig.~\ref{fig:error}.
The average behavior of the error probability curves is independent of the dimension $d$, even for a low dimension such as $d=4$. 
Additionally, the variance in error probability greatly decreases as the dimension $d$ increases. 
This aligns with our theoretical prediction that the proportion of exceptional edge cases decays exponentially as $\exp(-\Omega(d))$. 
The figure also indicates that the required query complexity lies within realistic experimental ranges. 
Algorithm \ref{alg:incoh} requires approximately $10^5$ queries to certify unitary channels up to precision $\varepsilon = 0.01$, corresponding to a deviation of roughly $1\%$ in the worst-case basis.
This query count is comparable to the number of circuit executions reported in recent large-scale experiments, such as Google’s Willow processor, which performed up to $10^6$ surface-code cycles with a 1.1~$\mu$s repetition time~\cite{acharya2024quantum}.

We distinguish our result from those in Refs.~\cite{montanaro2013survey, rosenthal2024average}, which show that certification under \textit{average-case distance} requires a constant query complexity of $\mathcal{O}(1/\varep^2)$.
In our case, we show that the same query complexity suffices for certifying \textit{average-case channels} under the more stringent diamond distance.
Our approach is operationally meaningful, as certification under the diamond distance provides uniform performance guarantees across all input states, whereas certification under average-case distance ensures correctness only on a specific input state~\cite{wilde2013quantum}.
Accordingly, our result indicates that fully reliable certification is available for almost every unitary channel, offering a stronger and more practical contribution to reliable quantum information processing.

\section{Discussion}
\label{sec:discussion}

In this work, we have investigated the query complexity for unitary channel certification.
We proved that an exponential number of queries is required to certify all unitary channels, while coherent algorithms can achieve a quadratic speedup over incoherent algorithms.
We then proved that exponential hardness can be significantly relaxed for average-case unitary channels, which can be certified with a constant number of queries.

We highlight a notable technical contribution from our proof of Theorem \ref{thm:incoh}.
In many quantum hypothesis testing problems, proofs establishing query lower bounds for incoherent algorithms use a common technique: reducing the problem to distinguishing between a target object and an ensemble of slightly perturbed target objects~\cite{chen2022quantum, chen2022exponential, chen2022tight, chen2023efficient, fawzi2023incoherent, chen2024tight, rosenthal2024average, oh2024entanglement, liu2025quantum}.
Due to technical challenges, previous works relied on ensembles containing mixedness, such as an ensemble of mixed states or noisy channels.
Our proof overcomes this limitation by extending the technique to an ensemble consisting solely of unitary channels (see Appendix \ref{appsec:incoh} for details).
Thus, we anticipate further applications of our approach in future work, including potential extensions of this lower bound to continuous variable systems, where analogous certification challenges remain largely unexplored.

We suggest some intriguing directions for future research. 
Extending our average-case result to general CPTP channel would be a critical step for efficient certification in practice.
In this case, defining an appropriate measure of average-case CPTP channel would be essential.
One could also investigate the query-optimal coherent certification algorithm that does not rely on the inverse channel $\mathcal{E}_{U^\dagger}$.

\vspace{2em}

{\renewcommand\addcontentsline[3]{}\acknowledgments}
S.J. and C.O. were supported by the National Research Foundation of Korea Grants (No. RS-2024-00431768 and No. RS-2025-00515456) funded by the Korean government (Ministry of Science and ICT~(MSIT)) and the Institute of Information \& Communications Technology Planning \& Evaluation (IITP) Grants funded by the Korea government (MSIT) (No. IITP-2025-RS-2025-02283189 and IITP-2025-RS-2025-02263264).

\bibliography{references}

\clearpage
\begingroup
\onecolumngrid

\vspace{4em}
\appendix

\noindent
\textbf{ \LARGE{}Appendices }
\vspace{2em}


\section*{Contents}
\begin{itemize}
  \bl{A.} \hyperref[appsec:incoh]{Worst-case query complexity for incoherent algorithms} \hfill \pageref*{appsec:incoh}
  \bl{B.} \hyperref[appsec:coh]{Worst-case query complexity for coherent algorithms} \hfill \pageref*{appsec:coh}
    \begin{itemize}
      \bl{1.} \hyperref[appsubsec:coh_lower]{Lower bound} \hfill \pageref*{appsubsec:coh_lower}
      \bl{2.} \hyperref[appsubsec:coh_upper]{Upper bound} \hfill \pageref*{appsubsec:coh_upper}
    \end{itemize} 
  \bl{C.} \hyperref[appsec:average]{Average-case query complexity} \hfill \pageref*{appsec:average}
  \bl{D.} \hyperref[appsec:lemma]{Proof of technical lemmas} \hfill \pageref*{appsec:lemma}
    \begin{itemize}
      \bl{1.} \hyperref[appsubsec:F1F2]{Proof of Lemma 2} \hfill \pageref*{appsubsec:F1F2}
      \bl{2.} \hyperref[appsubsec:F3F4]{Proof of Lemma 3} \hfill \pageref*{appsubsec:F3F4}
      \bl{3.} \hyperref[appsubsec:qsvtpoly]{Proof of Lemma 5} \hfill \pageref*{appsubsec:qsvtpoly}
      \bl{4.} \hyperref[appsubsec:qsvtdelta]{Proof of Lemma 6} \hfill \pageref*{appsubsec:qsvtdelta}
      \bl{5.} \hyperref[appsubsec:anti-concentration]{Proof of Lemma 7} \hfill \pageref*{appsubsec:anti-concentration}
      \bl{6.} \hyperref[appsubsec:haar]{Proof of technical lemmas on Haar randomness} \hfill \pageref*{appsubsec:haar}
        \begin{itemize}
          \bl{a.} \hyperref[appsubsubsec:X_1moment]{Proof of Lemma 8} \hfill \pageref*{appsubsubsec:X_1moment}
          \bl{b.} \hyperref[appsubsubsec:X_2moment]{Proof of Lemma 9} \hfill \pageref*{appsubsubsec:X_2moment}
        \end{itemize}
    \end{itemize}
\end{itemize}

\setcounter{theorem}{0}

\vspace{1em}

\section{Worst-case query complexity for incoherent algorithms}
\label{appsec:incoh}

We consider unitary channel certification with incoherent algorithms.
We derive the query lower bound for an adaptive incoherent certification algorithm with an arbitrarily large ancillary system.

\begin{theorem}
    Consider an adaptive, incoherent algorithm with an arbitrarily large ancillary system, which tests whether $D(\mathcal{E}_U,\mathcal{E}_I) \geq \varep$ or $\mathcal{E}_U = \mathcal{E}_I$ with success probability at least $2/3$. For $\varep < 1/2$ and $d > 50\varep^2$, the required number of queries to $\mathcal{E}_U$ (or $\mathcal{E}_{U^\dagger}$) is $N = \Omega(d/\varepsilon^2)$.
\end{theorem}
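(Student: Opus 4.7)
The plan is to follow the LeCam two-point reduction already set up in the main text: compare the identity channel against the single-basis rotation ensemble $E_\varep$ and argue that any adaptive incoherent algorithm distinguishing them with success $\ge 2/3$ must use $N = \Omega(d/\varep^2)$ queries. Write $p_0(\bo)$ and $p_\psi(\bo)$ for the transcript distributions under $\mathcal{E}_I$ and $\mathcal{E}_{U_\psi}$ respectively, and $p_1(\bo) = \E_\psi[p_\psi(\bo)]$ for the mixture under $H_1$. LeCam's inequality gives error probability at least $\tfrac{1}{2}(1-\mathrm{TV}(p_0,p_1))$, so it suffices to show that $\mathrm{TV}(p_0,p_1) \le 1/3$ unless $N = \Omega(d/\varep^2)$. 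By convexity, I would track the likelihood ratio $L_k(\bo_{\le k}) = \E_\psi\bigl[\prod_{j\le k} p_\psi(o_j\mid\bo_{<j})/p_0(o_j\mid\bo_{<j})\bigr]$, which is a $p_0$-martingale with $L_0 = 1$, and attack the TV directly via a moment bound on $L_N$.

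The second step is to compute, for the fixed per-round state $\rho_j$ and POVM $\{M_j^{o}\}$ (both allowed to depend adaptively on $\bo_{<j}$), the pointwise deviation $\delta_j^\psi(o) \coleq p_\psi(o\mid\bo_{<j}) - p_0(o\mid\bo_{<j})$. Using $U_\psi = I + (e^{i\alpha}-1)\ketbra{\psi}{\psi}$ with $\alpha = 2\sin^{-1}(\varep/2) = \Theta(\varep)$, an algebraic expansion writes $\delta_j^\psi(o)$ as a linear combination of terms $(e^{i\alpha}-1)\langle\psi|(\mathrm{tr}_A M_j^o)\rho_j^{\mathrm{sys}}|\psi\rangle$, its conjugate, and a $|e^{i\alpha}-1|^2$ quadratic term, all of which are controlled by the single-$\psi$ Rayleigh quotients of POVM-dependent positive operators bounded by $I$. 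For Haar-random $\ket{\psi}$ these Rayleigh quotients have mean $1/d$ and their fluctuations are governed by the standard moments $\E_\psi[\langle\psi|A|\psi\rangle^k]$.

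The heart of the proof is a good-set argument. I would define a good set $G$ of transcripts on which, along the realized branch, every prepared state and POVM effect has overlap with $\ket{\psi}$ close to its Haar mean of order $1/d$. A Markov bound on the Haar moments, combined with a union bound across the at-most-$N$ adaptive rounds on the branch, shows $\pr_{p_0}(G^c) = o(1)$; the key is that each round contributes only $O(1)$ independent new operators to be controlled, because past rounds' operators are measurable with respect to $\bo_{<j}$. On $G$, the per-step multiplicative increment $\Delta_j \coleq p_\psi/p_0 - 1$ satisfies $|\Delta_j| \lesssim \varep/\sqrt{d}$ almost surely and $\E[\Delta_j^2\mid\bo_{<j}] \lesssim \varep^2/d$. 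Feeding these bounds into the Freedman-type martingale inequality of Ref.~\cite{chen2023efficient}, as cited in the sketch, yields $|\log L_N| = O(\varep\sqrt{N/d}) + O(\varep^2 N/d)$ with high $p_0$-probability on $G$, which forces $\mathrm{TV}(p_0,p_1) \le 1/3$ whenever $N \le c\, d/\varep^2$ for an absolute $c > 0$.

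The main obstacle, flagged explicitly in the Discussion, is that previous incoherent lower bounds work with noisy (mixed) alternative ensembles whose likelihood ratio is easy to expand; here the ensemble is strictly unitary and the perturbation is sharply localized on the one-dimensional subspace $\mathrm{span}\{\ket{\psi}\}$. This localization means the linear $O(\varep)$ term in $\Delta_j$ does not vanish on average and must be absorbed into the martingale, while the $O(\varep^2)$ quadratic term is not negligible in the regime $N \asymp d/\varep^2$. Getting both terms to play well with the good-set definition—so that the conditional variance really is $O(\varep^2/d)$ and the pointwise bound $O(\varep/\sqrt{d})$ hold simultaneously—will be the delicate part of the argument, and is what forces the explicit constants $\varep < 1/2$ and $d > 50\varep^2$ in the statement. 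I would push the details through by choosing $G$ to be the intersection of moment-truncation events for $\langle\psi|A|\psi\rangle$ up to fourth order, with threshold $\mathrm{poly}\log d$ above $1/d$, which is the minimal strength needed to close the Freedman argument.
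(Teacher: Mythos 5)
Your high-level structure — LeCam reduction against the single-basis rotation ensemble $E_\varep$, a good-set partition, and a Freedman-type martingale inequality on the per-round log-likelihood increments — matches the paper's approach, and you correctly flag the purely-unitary ensemble as the source of difficulty. However, there is a genuine gap in the core bounds you claim. Expanding the per-round increment
\[
X_k(\psi,\bo)=\frac{\tr\bigl(E_{o_k}^{\bo_{<k}}(U_\psi\otimes I_{\text{anc}})\rho^{\bo_{<k}}(U_\psi^\dagger\otimes I_{\text{anc}})\bigr)}{\tr\bigl(E_{o_k}^{\bo_{<k}}\rho^{\bo_{<k}}\bigr)}-1,
\]
the numerator of the perturbation is indeed controlled by $O(\varep)$ times Rayleigh quotients $\langle\psi|\cdot|\psi\rangle$ of typical size $1/d$ under the Haar measure, but the denominator $\tr(E_{o_k}^{\bo_{<k}}\rho^{\bo_{<k}})$ is not: an adaptive algorithm can deliberately prepare nearly orthogonal state/effect pairs, making this factor tiny. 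Your good set truncates only the $\psi$-dependent Rayleigh quotients and therefore leaves the $1/\tr(E\rho)$ factor uncontrolled, so neither the claimed pointwise bound $|\Delta_j|\lesssim\varep/\sqrt{d}$ nor the conditional variance bound $\E[\Delta_j^2\mid\bo_{<j}]\lesssim\varep^2/d$ hold on $G$, and the Freedman step cannot be closed.

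The paper handles exactly this via a piece of machinery that your proposal is missing. It introduces the $\psi$-independent functional $f(E,\rho)=\tr(\tr_S(E)\tr_S(\rho))/\tr(E\rho)$ — for ancilla-free rank-one $E,\rho$ this is precisely $1/\tr(E\rho)$ — and the transcript-only constraint $A_\alpha=\{\bo:\sum_k f(E_{o_k}^{\bo_{<k}},\rho^{\bo_{<k}})\le\alpha N\}$ with $\alpha=100d$. The POVM completeness relation gives $\E_{\bo\sim p_0}\sum_k f=dN$ (Lemma~\ref{lemma:orth_POVM}), so Markov yields $\pr_{p_0}(A_\alpha^c)\le 0.01$; and on $A_\alpha$ the second-moment bound $\E_\psi X_k^2\lesssim s^2 f/d^2+s^4 f^2/d^4$ (Lemma~\ref{lemma:X_2moment}) sums to $O(s^2 N/d)$ as needed. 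Equally important, the paper never imposes a two-sided pointwise bound on $X_k$: the set $B_{\beta,\psi}$ requires only the one-sided constraint $X_k\ge-\beta$ with the \emph{constant} $\beta=0.1$, since $\log(1+X_k)$ blows up only as $X_k\to-1^+$, and this is all the martingale concentration (Lemma~\ref{lemma:martingale}) needs; a two-sided $O(\varep/\sqrt{d})$ truncation would cost too much probability over $N\asymp d/\varep^2$ rounds. To repair your argument you would need to add an $A_\alpha$-type constraint controlling $1/\tr(E\rho)$ along the path, and replace the symmetric pointwise bound with a one-sided constant bound plus a path-summed conditional-variance bound in the style of $C_{\gamma,\psi}$.
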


\begin{proof}
    We first introduce a hypothesis test, which is a restricted version of the original certification.
    We then show that testing the hypothesis with success probability at least 2/3 requires $N=\Omega(d/\varep^2)$ queries.
    This implies that the same lower bound applies to the original certification, thereby completing the proof.

    We consider a hypothesis test to determine whether $\mathcal{E}_U$ is an identity channel or if it is sampled from an ensemble of $\varep$-perturbed unitary channels, $E_\varep$.
    Let the hypotheses
    \begin{align}
        H_0:\mathcal{E}_U=\mathcal{E}_I\quad \text{v.s.}\quad H_1:\mathcal{E}_U\sim E_\varep
    \end{align}
    given with equal probability.
    Since a channel $\mathcal{E}_U$ sampled from $E_\varep$ always satisfies $D(\mathcal{E}_U, \mathcal{E}_I)\geq\varep$ by construction, any certification algorithm can distinguish these two hypotheses.
    Thus, if the hypothesis test with probability at least 2/3 requires $N=\Omega(d/\varep^2)$ queries, the certification task with probability at least 2/3 also requires $N=\Omega(d/\varep^2)$ queries.
    Therefore, proving the lower bound of $N=\Omega(d/\varep^2)$ for the hypothesis test is sufficient to complete our proof.

    We choose the ensemble $E_\varep$ as
    \begin{align}
        E_\varep=\{\mathcal{E}_{U_\psi}:U_\psi\coleq I+(e^{is}-1)\ketbra{\psi}{\psi}\}_{\ket{\psi}},
    \end{align}
    where $\ket{\psi}$ is a $d$-dimensional Haar-random state and $s\coleq2\sin^{-1}(\varep/2)$.
    In what follows, we frequently omit the ket notation and simply write $\ket{\psi}$ as $\psi$ for notational simplicity.
    We now validate that $E_\varep$ is an $\varep$-perturbed unitary channel ensemble.
    The unitary operator $U_\psi$ shifts the phase by $s$ on the $\ket{\psi}$ basis, while behaving as an identity operator on all other orthogonal bases.
    Thus, the eigenangles of $U_\psi$ are all zero except for one $s$.
    Consequently, we have $\theta_{\min}=0$ and $\theta_{\max}=s$, where $[\theta_{\min},\theta_{\max}]$ is the shortest arc covering all eigenangles of $U_\psi$.
    From Lemma \ref{lemma:distance}, this is equivalent to $D(\mathcal{E}_{U_\psi},\mathcal{E}_I)=\varep$.
    This validates that $E_\varep$ is an $\varep$-perturbed unitary channel ensemble.

    \begin{figure}[ht]
        \centering
        \begin{overpic}[scale=0.4]{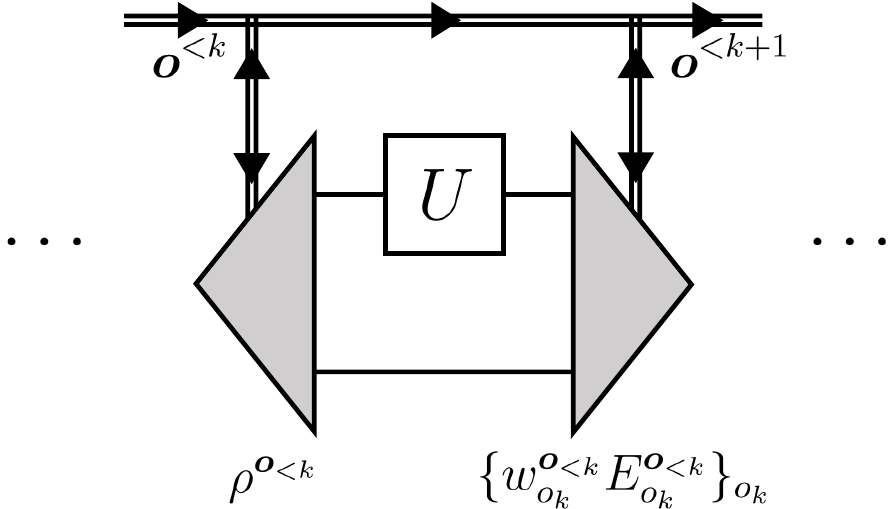} 
        \end{overpic}
        \caption{Schematic of an incoherent algorithm.}
        \label{fig:incoherent}
    \end{figure}

    Before proceeding to the proof, we clarify the setup and notations, providing a visualization in Fig.~\ref{fig:incoherent}.
    An incoherent algorithm performs state preparation and measurement for each of the $N$ queries to ancilla-coupled unitary channels $\mathcal{E}_U \otimes \mathcal{E}_{I_{\text{anc}}}$ or $\mathcal{E}_{U^\dagger} \otimes \mathcal{E}_{I_{\text{anc}}}$, where $\mathcal{E}_{I_{\text{anc}}}$ is the ancillary identity channel.
    Here, we restrict our attention to $\mathcal{E}_U$ without loss of generality, since the proof remains valid even if some instances of $\mathcal{E}_U$ are replaced with $\mathcal{E}_{U^\dagger}$.
    Let the testing algorithm yield measurement outcomes $\mathbf{o} = (o_1, \dots, o_N)$ from the $N$ queries.
    Here, the algorithm allows adaptivity, implying that the input state and measurement in the $k$-th query may depend on $\bo_{<k}\coleq(o_1,\dots,o_{k-1})$ for $2\leq k\leq N$.
    Accordingly, we denote the input state in the $k$-th query as $\rho^{\bo_{<k}}$.
    For the measurement, we introduce a POVM given by a set of $dd_{\text{anc}}$-dimensional operators $\{w_{o_{k}}^{{\bo_{<k}}}E_{o_{k}}^{{\bo_{<k}}}\}_{o_k}$, where $w_{o_{k}}^{{\bo_{<k}}}\geq 0$ and each $E_{o_{k}}^{{\bo_{<k}}}$ is positive semi-definite with $\tr(E_{o_{k}}^{{\bo_{<k}}})=1$.
    The POVM satisfies the completeness condition: $\sum _{o_k} w_{o_{k}}^{{\bo_{<k}}}E_{o_{k}}^{{\bo_{<k}}}=I\otimes I_{\text{anc}}$.
    Thus, the output probability distribution $p$ is given by
    \begin{align}
        p(o_{k}|\bo_{<k})=w_{o_{k}}^{{\bo_{<k}}}\tr (E_{o_{k}}^{{\bo_{<k}}}(\mathcal{E}_U\otimes\mathcal{E}_{I_{\text{anc}}})(\rho^{\bo_{<k}})).
    \end{align}
    This distribution $p$ depends on the underlying hypothesis and the choice of the Haar-random state $\ket{\psi}$.
    Thus, we denote $\mathbf{o} \sim p_0$ under hypothesis $H_0$, while $\mathbf{o} \sim p_{1,\psi}$ under hypothesis $H_1$ with $\mathcal{E}_U = \mathcal{E}_{U_\psi}$.
    Once all the measurements are completed, the algorithm chooses the correct hypothesis based on the measurement outcome $\mathbf{o}$.

    We now show that testing the hypothesis with success probability at least $2/3$ requires $N = \Omega(d/\varep^2)$ queries. 
    We establish this by showing that the success probability $P_{\text{success}}$ must be less than $2/3$ if $N \leq Cd/\varep^2$, where $C > 0$ is a constant.
    Our proof proceeds as follows:
    First, we show that it suffices to establish that the total variation distance (TVD) between the probability distributions $p_0$ and $\mathbb{E}_\psi p_{1,\psi}$ is less than $1/3$ for $N \leq Cd/\varep^2$.
    Next, we derive an upper bound on the TVD by partitioning the sample space of $(\psi,\bo)$ using a `good set' $G_\psi$. 
    The good set $G_\psi$ is defined as the set of outcomes $\bo$ for which the probabilities under both two hypotheses are similar, \textit{i.e.} $p_0(\bo) \approx p_{1,\psi}(\bo)$, and which satisfies certain additional technical conditions.
    Indeed, the random variable $(\psi,\bo)$ falls into one of two cases: either $\bo\notin G_\psi$ or $\bo\in G_\psi$.
    We separately bound the contributions from these two cases to the TVD.
    Specifically, we show that the random variable $(\psi, \bo)$ rarely falls into the first case $\bo\notin G_\psi$, and thus provide an upper bound on its contribution to the TVD. 
    We then derive a corresponding upper bound for the second case $\bo\in G_\psi$ using the property $p_0(\bo) \approx p_{1,\psi}(\bo)$, which again bounds the contribution for this case.
    Finally, by combining these upper bounds, we obtain an overall upper bound on the TVD as a function of $N$, thereby completing the proof.

    We show that it is sufficient to show that the TVD between the probability distributions $p_0$ and $\mathbb{E}_\psi p_{1,\psi}$ is less than $1/3$ for $N \leq Cd/\varep^2$.
    Here, TVD is a distance metric between probability distributions, defined as
    \begin{align}
        \text{TVD}(p,q)\coleq\frac{1}{2}\sum_{\bo}|p(\bo)-q(\bo)|
    \end{align}
    for distributions $p$ and $q$.
    To connect this with the success probability, we employ LeCam's two-point method~\cite{lecam1973convergence}, which relates the success probability to the TVD as follows:
    \begin{align}
        P_{\text{success}}
        &\leq1-\sum_{\bo}\min(p_0(\bo), \E_\psi p_{1,\psi}(\bo))\\
        &=\frac{1}{2}+\frac{1}{2}\text{TVD}(p_0,\E_\psi p_{1,\psi}).
    \end{align}
    Consequently, the upper bound $\text{TVD}(p_0,\E_\psi p_{1,\psi})<1/3$ implies that $P_{\text{success}}<2/3$.
    Therefore, we aim to show that $\text{TVD}(p_0,\E_\psi p_{1,\psi})<1/3$ holds for $N \leq Cd/\varep^2$.

    To this end, we partition the sample space of $(\psi,\bo)$ using a good set $G_\psi$, where $p_0(\mathbf{o}) \approx p_{1,\psi}(\mathbf{o})$ holds.
    The precise definition of this good set will be provided shortly.
    We first bound the TVD as follows:
    \begin{align}
        \text{TVD}(p_0,\E_\psi p_{1,\psi})
        &\equiv \frac{1}{2}\sum_{\bo}|p_0(\bo)-\E_\psi p_{1,\psi}(\bo)|\\
        &=\sum_{\bo}\max(0,p_0(\bo)-\E_\psi p_{1,\psi}(\bo))\\
        &\leq\sum_{\bo}\E_\psi\max(0,p_0(\bo)-p_{1,\psi}(\bo))\\
        &=\E_\psi\E_{\bo\sim p_0}\max\left(0,1-\frac{p_{1,\psi}(\bo)}{p_0(\bo)}\right)\\
        &=\E_\psi\E_{\bo\sim p_0}\max\left(0,1-L(\psi,\bo)\right),\label{eq:TVD_bound}
    \end{align}
    where $L(\psi,\bo)\coleq p_{1,\psi}(\bo)/p_{0}(\bo)$ is a likelihood ratio, and the third line follows from the convexity of the function $\max(0,\cdot)$.
    By partitioning the sample space using the good set, we derive an upper bound for the RHS as follows:
    \begin{align}
        &\E_\psi\E_{\bo\sim p_0}\max\left(0,1-L(\psi,\bo)\right)\\
        &\quad=\E_\psi\E_{\bo\sim p_0}\max(0, 1-L(\psi,\bo))(\ind_{\bo}((G_\psi)^c)+\ind_{\bo}(G_\psi))\\
        &\quad=\E_\psi\pr_{\bo\sim p_0}((G_\psi)^c)\E_{\bo\sim p_0|(G_\psi)^c}\max(0, 1- L(\psi,\bo))+\E_\psi\pr_{\bo\sim p_0}(G_\psi)\E_{\bo\sim p_0|G_\psi}\max(0, 1- L(\psi,\bo))\\
        &\quad\leq\E_\psi\pr_{\bo\sim p_0}((G_\psi)^c)+\E_\psi\E_{\bo\sim p_0|G_\psi}\max(0, 1- L(\psi,\bo)),\label{eq:TVD_bound2}
    \end{align}
    where the fourth line follows from $\max(0, 1-L(\psi,\bo))\leq 1$ and $\pr_{\bo\sim p_0}(G_\psi)\leq1$.
    Here, we introduced the indicator function notation:
    \begin{align}
        \ind_X(\text{condition of }X)\coleq
        \begin{cases} 
          1 & {X\text{ satisfies the condition}} \\
          0 & {X\text{ does not satisfy the condition}} 
       \end{cases}.
    \end{align}
    We also introduced the conditional expectation notation $\E_{\bo\sim p|A}$ for a distribution $p$ and a set $A$, which denotes expectation with respect to the distribution $p$ conditioned on the event $\bo\in A$.
    More precisely, for a function $f(\bo)$, we have
    \begin{align}
        \E_{\bo\sim p|A}f(\bo)=\frac{\sum_{\bo\in A}p(\bo)f(\bo)}{\sum_{\bo\in A}p(\bo)}.
    \end{align}
    Now, our goal is reduced to showing that the sum of the two terms in Eq.~\eqref{eq:TVD_bound2} is less than $1/3$ for $N \leq Cd/\varep^2$.
    These two terms correspond to two cases of the sample space: $\bo\notin G_\psi$ and $\bo\in G_\psi$.

    We now define the good set $G_\psi$ precisely. 
    To this end, let us rewrite $L(\psi,\bo)$ in a more convenient form.
    Since the algorithm is adaptive, the probability corresponding to $\bo$ can be expressed as
    \begin{align}
        p_{0}(\bo)
        &=p_0(o_1)\dots p_0(o_N|\bo_{<N}),\\
        p_{1,\psi}(\bo)
        &=p_{1,\psi}(o_1)\dots p_{1,\psi}(o_N|\bo_{<N})
    \end{align}
    under $H_0$, and $H_1$ with $\mathcal{E}_U=\mathcal{E}_{U_\psi}$, respectively.
    Given the input state $\rho^{\bo_{<k}}$ in $k$-th query, the conditional output probabilities are:
    \begin{align}
        p_0(o_k|\bo_{<k})&=w_{o_{k}}^{{\bo_{<k}}}\tr \left(E_{o_{k}}^{{\bo_{<k}}}\rho^{\bo_{<k}}\right),\\
        p_{1,\psi}(o_k|\bo_{<k})&=w_{o_{k}}^{{\bo_{<k}}}\tr \left(E_{o_{k}}^{{\bo_{<k}}}(U_\psi\otimes I_{\text{anc}})\rho^{\bo_{<k}}(U_\psi\otimes I_{\text{anc}})^{\dagger}\right).
    \end{align}
    Thus, we obtain
    \begin{align}
        L(\psi,\bo)
        &=\frac{p_{1,\psi}(\bo)}{p_0(\bo)}\\
        &=\prod_{k=1}^N\frac{p_{1,\psi}(o_k|\bo_{<k})}{p_0(o_k|\bo_{<k})}\\
        &=\prod_{k=1}^N\frac{\tr \left(E_{o_{k}}^{{\bo_{<k}}}(U_\psi\otimes I_{\text{anc}})\rho^{\bo_{<k}}(U_\psi\otimes I_{\text{anc}})^{\dagger}\right)}{\tr (E_{o_{k}}^{\bo_{<k}}\rho^{\bo_{<k}})}\\
        &=\prod_{k=1}^N\left(1+X_k({\psi,\bo})\right)\label{eq:def_L}
    \end{align}
    with $X_k({\psi,\bo})\coleq \tr \left(E_{o_{k}}^{{\bo_{<k}}}(U_\psi\otimes I_{\text{anc}})\rho^{\bo_{<k}}(U_\psi\otimes I_{\text{anc}})^{\dagger}\right)/\tr (E_{o_{k}}^{\bo_{<k}}\rho^{\bo_{<k}})-1$.
    Now, define the function:
    \begin{align}
        f(E,\rho)\coleq\frac{\tr(\tr_S(E)\tr_S(\rho))}{\tr(E\rho)},
    \end{align}
    where $\tr_S$ denotes the partial trace over the system Hilbert space $\mathcal{H}_S$.
    We are now ready to define the good set $G_\psi$.
    First, we introduce
    \begin{align}
        A_\alpha&\coleq\left\{\bo:\sum_{k=1}^N f(E_{o_k}^{\bo_{<k}},\rho^{\bo_{<k}})\leq\alpha N\right\},\\
        B_{\beta,\psi}&\coleq\{\bo:X_k(\psi,\bo)\geq-\beta\text{ for all }1\leq k\leq N\},\\
        C_{\gamma,\psi}&\coleq\left\{\bo:\sum_{k=1}^N \E_{\bo\sim p_0|\bo_{<k}}X_k^2(\psi,\bo)\leq \gamma\right\},
    \end{align}
    with parameters $\alpha$, $\beta$, and $\gamma$ satisfying
    \begin{align}
        \alpha&=100d,\label{eq:alpha_condition}\\
        \beta&>\frac{4s^2}{d}. \label{eq:beta_condition}
    \end{align}
    Then, we define the good set as the intersection 
    \begin{align}
        G_\psi\coleq A_\alpha\cap B_{\beta,\psi}\cap C_{\gamma,\psi}.
    \end{align}

    We briefly justify how this definition ensures the desired `good' properties, including similar probabilities $p_0(\bo)\approx p_{1,\psi}(\bo)$ for $\bo\in G_\psi$.
    The set $C_{\gamma,\psi}$ bounds the second moment of $X_k(\psi,\bo)$ over the $(\psi,\bo)$-space.
    This ensures concentration of $X_k(\psi,\bo)$ around zero, thus implying from Eq.~\eqref{eq:def_L} that $p_0(\bo)\approx p_{1,\psi}(\bo)$.
    However, Eq.~\eqref{eq:def_L} also indicates that even a single exception event, such as $X_k(\psi,\bo)\approx -1$, causes a substantial deviation between $p_0(\bo)$ and $p_{1,\psi}(\bo)$.
    The set $B_{\beta,\psi}$ with $\beta<1$ prevents such exceptions.
    The set $A_\alpha$ and conditions in Eqs.~\eqref{eq:alpha_condition} and \eqref{eq:beta_condition} are introduced for technical reasons.
    The exact values of the parameters $\beta$ and $\gamma$ will be specified later in the proof, and condition Eq.~\eqref{eq:beta_condition} will subsequently be validated.

    We now derive the upper bound on the RHS of Eq.~\eqref{eq:TVD_bound2}.
    We first present an upper bound on the first term:
    \begin{lemma}
        \label{lemma:F1F2}
        Let
        \begin{align*}
            g(s,d,N)&\coleq\frac{606s^2N}{d}+\frac{720072 s^4N^2}{d^2},\\
            F_1&\coleq0.01+\frac{20g(s,d,N)}{\beta^2},\\
            F_2&\coleq0.01+\frac{g(s,d,N)}{\gamma}.
        \end{align*}
        For $s<1$, $\alpha=100d$, and $\beta>4s^2/d$, we have
        \begin{align*}
            \E_\psi\pr_{\bo\sim p_0}((G_\psi)^c)\leq F_1+F_2.
        \end{align*}
    \end{lemma}
    \begin{proof}
        Appendix \ref{appsubsec:F1F2}
    \end{proof}
    \noindent We also derive the following upper bound for the second term:
    \begin{lemma}
        \label{lemma:F3F4}
        Let
        \begin{align*}
            F_3&\coleq1-\exp\left(-\left(1+\frac{1}{\beta}\right)\gamma-\eta\right),\\
            F_4&\coleq\exp\left(-\frac{\eta^2}{4\gamma+2\beta\eta/3}\right).
        \end{align*}
        For $\eta,\gamma>0$, we have
        \begin{align*}
            \E_\psi\E_{\bo\sim p_0|G_\psi}\max(0, 1- L(\psi,\bo))\leq F_3+F_4.
        \end{align*}
    \end{lemma}
    \begin{proof}
        Appendix \ref{appsubsec:F3F4}
    \end{proof}
    \noindent Thus, combining the above results, we have
    \begin{align}
        \text{TVD}(p_0,\E_\psi p_{1,\psi})\leq F_1+F_2+F_3+F_4.
    \end{align}
    Now we choose the parameters explicitly as follows:
    \begin{align}
        \beta&=0.1,\\
        \gamma&=0.0003,\\
        \eta&=0.3.
    \end{align}
    We validate the assumption $\beta>4s^2/d$ in Eq.~\eqref{eq:beta_condition}.
    Recall that $s=2\sin^{-1}(\varep/2)$ is a function of $\varep$.
    For $\varep=1/2$, we have $s=2\sin^{-1}(1/4)=0.505\dots<1.02/2=1.02\varep$.
    Since $s$ is convex on $\varep\in[0,1/2)$, it holds that $0\leq s<1.02\varep$ for any $\varep<1/2$.  
    Thus, given the assumption $d>50\varep^2$, we have
    \begin{align}
        \beta=0.1>\frac{5\varep^2}{d}>\frac{5s^2}{1.02^2d}>\frac{4s^2}{d}.
    \end{align}
    Hence, our choice of $\beta$ satisfies Eq.~\eqref{eq:beta_condition}.
    Under these parameters, we obtain
    \begin{align}
        F_1&=0.01+2\times10^3g(s,d,N),\\
        F_2&=0.01+3.333\ldots\times10^3g(s,d,N),\\
        F_3&=0.261\dots,\\
        F_4&=0.014\dots.
    \end{align}
    Therefore, the TVD is bounded above by
    \begin{align}
        \text{TVD}(p_0,\E_\psi(p_{1,\psi}))
        &\leq 0.295\dots+5.333\ldots\times10^3 g(s,d,N).
    \end{align}
    When $N\leq10^{-8}d/s^2$, we have
    \begin{align}
        g(s,d,N)
        &\equiv\frac{606s^2N}{d}+\frac{720072 s^4N^2}{d^2}\\
        &\leq606\times10^{-8}+720072\times10^{-16}\\
        &=6.060\dots\times10^{-6}.
    \end{align}
    Thus, the upper bound on the TVD becomes
    \begin{align}
        \text{TVD}(p_0,\E_\psi(p_{1,\psi}))\leq0.328\dots<\frac{1}{3}.
    \end{align}
    Consequently, $N>10^{-8}d/s^2$ queries are necessary for the TVD to exceed $1/3$, corresponding to a success probability greater than $2/3$.
    This result implies that certification with success probability at least 2/3 requires query complexity $N=\Omega(d/s^2)=\Omega(d/\varep^2)$, completing the proof.
\end{proof}

\section{Worst-case query complexity for coherent algorithms}
\label{appsec:coh}

We consider unitary channel certification with coherent algorithm.
We derive the tight query complexity of a coherent certification algorithm with an arbitrarily large ancillary system.

\subsection{Lower bound}
\label{appsubsec:coh_lower}

We derive a query lower bound for the incoherent certification algorithm.

\begin{theorem}
    Consider a coherent algorithm with an arbitrarily large ancillary system, which tests whether $D(\mathcal{E}_U,\mathcal{E}_I)\geq \varep$ or $\mathcal{E}_U=\mathcal{E}_I$ with success probability at least $2/3$. For $\varep<1/2$, the required number of queries to $\mathcal{E}_U$ (or $\mathcal{E}_{U^\dagger}$) is $N=\Omega({\sqrt{d}}/{\varep})$.
\end{theorem}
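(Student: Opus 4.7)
The plan is to follow the high-level strategy already indicated in the sketch: take the single-basis rotation ensemble $E_\varep$ from Theorem~\ref{thm:incoh} as the hard case under $H_1$, invoke Helstrom's bound to reduce the lower-bound task to an upper bound on $\|\rho_0 - \E_\psi \rho_{1,\psi}\|_1$, and then show via a hybrid argument that this trace distance grows by only $\mathcal{O}(\varep/\sqrt{d})$ per query on average, which forces $N=\Omega(\sqrt{d}/\varep)$. By Stinespring dilation, each interleaved CPTP map $\mathcal{C}_k$ can be replaced without loss of generality by a unitary $V_k$ on system+ancilla+environment, so that the $H_0$ trajectory is a deterministic sequence of pure states $|\phi_k\rangle$ and the $H_1$ trajectory, for a fixed $\psi$, is $|\phi_k^\psi\rangle = V_k(U_\psi \otimes I)|\phi_{k-1}^\psi\rangle$ with $|\phi_0^\psi\rangle = |\phi_0\rangle$.

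The core step is a hybrid recursion for $|\delta_k^\psi\rangle \coleq |\phi_k^\psi\rangle - |\phi_k\rangle$. Expanding around the $\psi$-independent $H_0$ trajectory yields
\begin{align}
|\delta_{k+1}^\psi\rangle = V_{k+1}(U_\psi \otimes I)|\delta_k^\psi\rangle + V_{k+1}\bigl((U_\psi - I)\otimes I\bigr)|\phi_k\rangle,
\end{align}
so that, by unitarity of $V_{k+1}(U_\psi \otimes I)$ and the triangle inequality, $\|\delta_N^\psi\| \leq \sum_{k=0}^{N-1} \|((U_\psi - I)\otimes I)|\phi_k\rangle\|$. The essential averaging estimate then comes from the identity $(U_\psi - I)^\dagger(U_\psi - I) = \varep^2\,\ketbra{\psi}{\psi}$ (using $|e^{is}-1|^2 = \varep^2$) together with $\E_\psi \ketbra{\psi}{\psi} = I/d$, which gives
\begin{align}
\E_\psi \bigl\|((U_\psi - I)\otimes I)|\phi_k\rangle\bigr\|^2 = \frac{\varep^2}{d}
\end{align}
independently of $|\phi_k\rangle$. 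This is the crux of the argument: the perturbation, while of operator norm $\varep$ in the worst case, is spread so thinly over the $d$-dimensional Hilbert space that on any fixed state its average squared displacement is only $\varep^2/d$.

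Applying Jensen's and Cauchy-Schwarz termwise then yields $\E_\psi \|\delta_N^\psi\| \leq N\varep/\sqrt{d}$. Combining the elementary pure-state bound $\||\phi\rangle\langle\phi| - |\phi'\rangle\langle\phi'|\|_1 \leq 2\||\phi\rangle - |\phi'\rangle\|$ with convexity of the trace norm and contractivity under partial trace then gives $\|\rho_0 - \E_\psi \rho_{1,\psi}\|_1 \leq 2N\varep/\sqrt{d}$, so Helstrom's bound forces $N = \Omega(\sqrt{d}/\varep)$. Inverse queries to $\mathcal{E}_{U^\dagger}$ are handled identically, since $U_\psi^\dagger - I$ has the same rank-one structure and operator norm $\varep$.

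The main obstacle I anticipate is arranging the hybrid so that the $\psi$-dependence of the intermediate states $|\phi_k^\psi\rangle$ does not spoil the averaging. A naive expansion around $|\phi_k^\psi\rangle$ rather than $|\phi_k\rangle$ would leave a $\psi$-dependent operand inside the expectation, forcing one to use only the worst-case bound $\|U_\psi - I\|_\infty = \varep$ and losing the crucial $\sqrt{d}$ factor. Expanding around the $H_0$ trajectory $|\phi_k\rangle$, which is fixed as $\psi$ varies, is precisely what lets the Haar average $\E_\psi \ketbra{\psi}{\psi} = I/d$ act cleanly on a $\psi$-free operand and produce the decisive $1/d$ suppression that drives the entire bound.
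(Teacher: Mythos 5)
Your proof is correct and follows the same high-level strategy as the paper's—the single-basis rotation ensemble as the hard instance, Helstrom's bound, and a hybrid/telescoping argument giving an average per-step contribution of $\mathcal{O}(\varep/\sqrt{d})$. The difference is in execution: the paper telescopes at the density-matrix level, writing $\xi - U_\psi\xi U_\psi^\dagger$ as a sum of rank-deficient terms and bounding $\E_\psi\|(\ketbra{\psi}{\psi}\otimes I_{\text{anc}})\xi\|_1 \leq 1/\sqrt{d}$ via an eigendecomposition of $\xi$ plus Jensen; you instead purify the whole trajectory via Stinespring and telescope the vector difference $|\delta_k^\psi\rangle$, which collapses the per-step estimate to the clean identity $\E_\psi\|((U_\psi-I)\otimes I)|\phi_k\rangle\|^2 = \varep^2/d$ coming from $(U_\psi-I)^\dagger(U_\psi-I) = \varep^2\ketbra{\psi}{\psi}$. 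Your version avoids the Schatten-1-norm bookkeeping entirely, buys a slightly sharper constant ($2\varep$ vs.\ roughly $3s$), and makes the key pitfall—expanding around the $\psi$-independent $H_0$ trajectory rather than the $\psi$-dependent one—easier to see. Both give the same $\Omega(\sqrt{d}/\varep)$ conclusion; the paper's argument stays at the channel level and so matches the structure of Fig.~\ref{fig:coherent} more literally, while yours is the more streamlined vector-level version of the same idea.
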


\begin{proof}
    As in the case of an incoherent algorithm, we consider a restricted hypothesis testing problem, which is an easier task than the original certification problem.
    As well, we restrict our attention to $\mathcal{E}_U$ without loss of generality, since the proof remains valid even if some instances of $\mathcal{E}_U$ are replaced with $\mathcal{E}_{U^\dagger}$.
    Suppose the given channel under hypothesis $H_0$ is an identity channel denoted by $\mathcal{E}_U=\mathcal{E}_I$, and under hypothesis $H_1$, is sampled from an ensemble of $\varep$-perturbed unitary channels $E_\varep$, where $E_\varep$ is defined as follows: 
    \begin{align}
        E_\varep=\{\mathcal{E}_{U_\psi}:U_\psi\coleq I+(e^{is}-1)\ketbra{\psi}{\psi}\}_{\ket{\psi}}.
    \end{align}
    Here, $\ket{\psi}$ is a Haar-random state, $s\coleq2\sin^{-1}(\varep/2)$, and the hypotheses are given with equal probability.
    It is sufficient to show that the hypothesis testing with success probability at least 2/3 requires $\Omega(\sqrt{d}/\varep)$ queries.
    Thus, we need to show that the success probability $P_{\text{success}}$ is less than 2/3 for $N\leq C\sqrt{d}/\varep$ with a constant $C>0$.

    \begin{figure}[ht]
        \centering
        \begin{overpic}[scale=0.4]{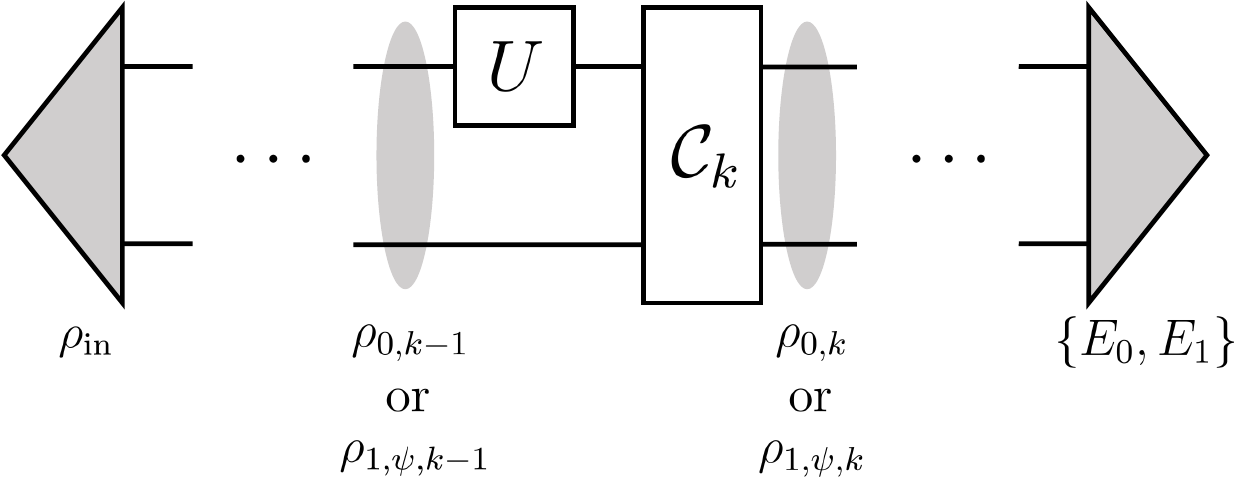} 
        \end{overpic}
        \caption{Schematic of a coherent algorithm.}
        \label{fig:coherent}
    \end{figure}

    We first show an upper bound on the success probability.
    Let
    \begin{align}
        \rho_{0,k}&\coleq (\mathcal{E}_I\otimes\mathcal{E}_{I_{\text{anc}}})(\mathcal{C}_{k}(\rho_{0,k-1})),\\
        \tilde{\rho}_{1,\psi,k}&\coleq (\mathcal{E}_{U_\psi}\otimes\mathcal{E}_{I_{\text{anc}}})(\mathcal{C}_{k}(\rho_{0,k-1})),\\
        \rho_{1,\psi,k}&\coleq (\mathcal{E}_{U_\psi}\otimes\mathcal{E}_{I_{\text{anc}}})(\mathcal{C}_{k}(\rho_{1,\psi,k-1})),
    \end{align}
    where $\mathcal{C}_k$ is a CPTP map corresponding to the quantum algorithm at the $k$-th step, and $\rho_{0,0}=\rho_{1,\psi,0}=\rho_{\text{in}}$, as shown in Fig.~\ref{fig:coherent}. 
    The output state $\rho_{\text{out}}$ is then given either by $\rho_{\text{out}}=\rho_{0,N}$ or $\rho_{\text{out}}=\rho_{1,\psi,N}$ depending on the hypothesis.
    Let the algorithm measure the output state with a POVM $\{E_0, E_1\}$ satisfying $E_0+E_1=I\otimes I_{\text{anc}}$, where $E_0$ and $E_1$ correspond to $H_0$ and $H_1$ respectively, without loss of generality.
    Then, the upper bound of the success probability is given as
    \begin{align}
        P_{\text{success}}
        &\leq\frac{1}{2}\max_{E_0,E_1}\E_\psi(\tr(E_0\rho_{0,N})+\tr(E_1\rho_{1,\psi,N}))\\
        &\leq\frac{1}{2}\E_\psi\max_{E_0,E_1}(\tr(E_0\rho_{0,N})+\tr(E_1\rho_{1,\psi,N}))\\
        &\leq\frac{1}{2}+\frac{1}{4}\E_\psi\|\rho_{0,N}-\rho_{1,\psi,N}\|_1,
    \end{align}
    where the last line follows from the Helstrom bound~\cite{helstrom1969quantum}.

    Thus, it is sufficient to show that $\E_\psi\|\rho_{0,N}-\rho_{1,\psi,N}\|_1<2/3$ if $N\leq C\sqrt{d}/\varep$. 
    Using the triangular inequality, we have
    \begin{align}
        \E_\psi\|\rho_{0,N}-\rho_{1,\psi,N}\|_1\leq\E_\psi\|\rho_{0,N}-\tilde{\rho}_{1,\psi,N}\|_1+\E_\psi\|\tilde{\rho}_{1,\psi,N}-\rho_{1,\psi,N}\|_1.\label{eq:distance_triangle}
    \end{align}
    We now derive the upper bound on each term on the RHS.
    We first consider the first term.
    Writing $\xi=\mathcal{C}_N(\rho_{0,N-1})$, we have
    \begin{align}
        \E_\psi\|\rho_{0,N}-\tilde{\rho}_{1,\psi,N}\|_1
        &=\E_\psi\|(\mathcal{E}_I\otimes\mathcal{E}_{I_{\text{anc}}})(\xi)-(\mathcal{E}_{U_\psi}\otimes\mathcal{E}_{I_{\text{anc}}})(\xi)\|_1\\
        &=\E_\psi\|\xi-(U_\psi\otimes I_{\text{anc}})\xi(U_\psi^\dagger\otimes I_{\text{anc}})\|_1\\
        &=\E_\psi\|(e^{i s}-1)(\ketbra{\psi}{\psi}\otimes I_{\text{anc}})\xi+(e^{-i s}-1)\xi(\ketbra{\psi}{\psi}\otimes I_{\text{anc}})\notag\\
        &\quad+(2-e^{i s}-e^{-i s})(\ketbra{\psi}{\psi}\otimes I_{\text{anc}})\xi(\ketbra{\psi}{\psi}\otimes I_{\text{anc}})\|_1\\
        &\leq(2|e^{is}-1|+|2-e^{i s}-e^{-i s}|)\E_\psi\|(\ketbra{\psi}{\psi}\otimes I_{\text{anc}})\xi\|_1\\
        &=(2\sqrt{2-2\cos s}+2-2\cos s)\E_\psi\|(\ketbra{\psi}{\psi}\otimes I_{\text{anc}})\xi\|_1\\
        &\leq3 s\E_\psi\|(\ketbra{\psi}{\psi}\otimes I_{\text{anc}})\xi\|_1
    \end{align}
    for $s<1$, where the fourth line follows from Hölder's inequality
    \begin{align}
        \|(\ketbra{\psi}{\psi}\otimes I_{\text{anc}})\xi(\ketbra{\psi}{\psi}\otimes I_{\text{anc}})\|_1
        &\leq\|(\ketbra{\psi}{\psi}\otimes I_{\text{anc}})\xi\|_1\|(\ketbra{\psi}{\psi}\otimes I_{\text{anc}})\|_\infty\\
        &=\|(\ketbra{\psi}{\psi}\otimes I_{\text{anc}})\xi\|_1.
    \end{align}
    From the eigendecomposition of $\xi=\sum_{k=1}^{dd_{\text{anc}}}\lambda_k\ketbra{\eta_k}{\eta_k}$, we obtain
    \begin{align}
        \E_\psi\|(\ketbra{\psi}{\psi}\otimes I_{\text{anc}})\xi\|_1
        &\leq\sum_{k=1}^{dd_{\text{anc}}}\lambda_k\E_\psi\|(\ketbra{\psi}{\psi}\otimes I_{\text{anc}})\ketbra{\eta_k}{\eta_k}\|_1\\
        &=\sum_{k=1}^{dd_{\text{anc}}}\lambda_k\E_\psi|\tr((\ketbra{\psi}{\psi}\otimes I_{\text{anc}})\ketbra{\eta_k}{\eta_k})|
            && (\ketbra{\psi}{\psi}\otimes I_{\text{anc}})\ketbra{\eta_k}{\eta_k}\text{ is rank-1}\\
        &=\sum_{k=1}^{dd_{\text{anc}}}\lambda_k\E_\psi\sqrt{\bra{\eta_k}(\ketbra{\psi}{\psi}\otimes I_{\text{anc}})\ket{\eta_k}^2}\\
        &\leq\sum_{k=1}^{dd_{\text{anc}}}\lambda_k\E_\psi\sqrt{\bra{\eta_k}(\ketbra{\psi}{\psi}\otimes I_{\text{anc}})\ket{\eta_k}}
            && \|\ketbra{\psi}{\psi}\otimes I_{\text{anc}}\|_\infty=1\\
        &\leq\sum_{k=1}^{dd_{\text{anc}}}\lambda_k\sqrt{\E_\psi\bra{\eta_k}(\ketbra{\psi}{\psi}\otimes I_{\text{anc}})\ket{\eta_k}}
            &&\text{Concavity of square root}\\
        &=\sum_{k=1}^{dd_{\text{anc}}}\lambda_k\sqrt{\bra{\eta_k}(I/d\otimes I_{\text{anc}})\ket{\eta_j}}\\
        &=\frac{1}{\sqrt{d}}.
    \end{align}
    Thus, we have obtained the upper bound on the first term as follows:
    \begin{align}
        \label{eq:coherent_1}
        \E_\psi\|\rho_{0,N}-\tilde{\rho}_{1,\psi,N}\|_1\leq\frac{3s}{\sqrt{d}}.
    \end{align}

    We now turn to the second term.
    Since a CPTP map cannot increase the trace distance between two states~\cite{nielsen2010quantum}, we obtain
    \begin{align}
        \E_\psi\|\tilde{\rho}_{1,\psi,N}-\rho_{1,\psi,N}\|_1
        &=\E_\psi\|(\mathcal{E}_{U_\psi}\otimes\mathcal{E}_I)(\mathcal{C}_{k}(\rho_{0,N-1}))-(\mathcal{E}_{U_\psi}\otimes\mathcal{E}_I)(\mathcal{C}_{k}(\rho_{1,\psi,N-1}))\|_1\\
        &\leq\E_\psi\|\rho_{0,N-1}-\rho_{1,\psi,N-1}\|_1\\
        &\leq\E_\psi\|\rho_{0,N-1}-\tilde{\rho}_{1,\psi,N-1}\|_1+\E_\psi\|\tilde{\rho}_{1,\psi,N-1}-\rho_{1,\psi,N-1}\|_1\\
        &\leq\sum_{k=1}^{N-1}\E_\psi\|\rho_{0,k}-\tilde{\rho}_{1,\psi,k}\|_1\\
        &\leq\frac{3s(N-1)}{\sqrt{d}},  \label{eq:coherent_2}
    \end{align}
    where the fourth line follows from the induction and the last line follows from Eq.~\eqref{eq:coherent_1}.
    Combining the two upper bounds in Eqs.~\eqref{eq:coherent_1} and \eqref{eq:coherent_2}, we obtain
    \begin{align}
        \E_\psi\|\rho_{0,N}-\rho_{1,\psi,N}\|_1
        &\leq\E_\psi\|\rho_{0,N}-\tilde{\rho}_{1,\psi,N}\|_1+\E_\psi\|\tilde{\rho}_{1,\psi,N}-\rho_{1,\psi,N}\|_1\\
        &\leq\frac{3sN}{\sqrt{d}}.
    \end{align}
    Hence, for $N\leq\sqrt{d}/6s$, we have
    \begin{align}
        \E_\psi\|\rho_{0,N}-\rho_{1,\psi,N}\|_1\leq\frac{1}{2}<\frac{2}{3}.
    \end{align}
    This implies that the query complexity $N=\Omega(\sqrt{d}/s)=\Omega(\sqrt{d}/\varep)$ is required, thereby completing the proof.
\end{proof}

\subsection{Upper bound}
\label{appsubsec:coh_upper}

We derive the query upper bound on the incoherent certification algorithm by proposing a query-optimal coherent algorithm based on QSVT.
\begin{theorem}
    There exists a coherent algorithm that tests whether $\mathcal{E}_U=\mathcal{E}_I$ or $D(\mathcal{E}_U, \mathcal{E}_I)\geq\varep$ with success probability at least $2/3$ using $N=\mathcal{O}(\sqrt{d}/\varep)$ queries to $\mathcal{E}_U$ and $\mathcal{E}_{U^\dagger}$.
\end{theorem}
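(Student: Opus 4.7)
The plan is to apply Lemma~\ref{lemma:qsvt} with the rank-one projectors $\Pi = \ketbra{\psi}{\psi}$ and $\tilde{\Pi} = U\ketbra{\psi}{\psi}U^\dagger$, trivial base unitary $V = I$, and an even polynomial $P$ specified below. The identity $\tilde{\Pi}_\phi = U\Pi_\phi U^\dagger$, which follows from $2\tilde{\Pi}-I = U(2\Pi-I)U^\dagger$, shows that every $\tilde{\Pi}_\phi$ gate costs exactly one call each to $\mathcal{E}_U$ and $\mathcal{E}_{U^\dagger}$, while $\Pi_\phi$ is query-free (a reflection about the known input state $\ket{\psi}$); a degree-$n$ even QSVT circuit therefore uses $n$ channel queries. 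A direct calculation gives $\Pi V\tilde{\Pi} = \langle\psi|U|\psi\rangle\,\ketbra{\psi}{U\psi}$, whose only nonzero singular value is $a(\psi,U) \coleq |\langle\psi|U|\psi\rangle|$ with left singular vector $\ket{\psi}$. The even-$n$ case of Lemma~\ref{lemma:qsvt} then yields $\Pi V_\Phi \Pi = P(a(\psi,U))\,\ketbra{\psi}{\psi}$, so outcome $M=0$ of the POVM $\{\ketbra{\psi}{\psi},\,I-\ketbra{\psi}{\psi}\}$ occurs with probability $|P(a(\psi,U))|^2$.

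I would choose $P$ to be a rescaled Chebyshev polynomial of the form $P(x) \coleq T_n\!\bigl(x/(1-\delta)\bigr)/T_n\!\bigl(1/(1-\delta)\bigr)$ for even $n$. This automatically meets the three hypotheses of Lemma~\ref{lemma:qsvt}, satisfies $P(1) = 1$, and obeys $|P(x)| \leq 1/T_n(1/(1-\delta))$ on $[-1+\delta,\,1-\delta]$. Using the standard estimate $T_n(1+y) \gtrsim \tfrac{1}{2}e^{n\sqrt{2y}}$ for $0 < y \ll 1$, a fixed constant separation (e.g.\ $P(1) = 1$ versus $|P|\leq 1/3$ on the middle interval) is realized as soon as $n = \Theta(1/\sqrt{\delta})$. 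With $\delta = c\varep^2/d$ for a universal constant $c$ fixed by the anti-concentration step below, this gives the advertised degree $n = \mathcal{O}(\sqrt{d}/\varep)$.

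The crux --- and the step I expect to be the main obstacle --- is the anti-concentration claim under $H_1$: for every $U$ with $D(\mathcal{E}_U,\mathcal{E}_I)\geq\varep$, at least a $9/10$ fraction of Haar-random $\ket{\psi}$ should satisfy $a(\psi,U) \leq 1 - c\varep^2/d$. Expanding $\ket{\psi} = \sum_k c_k\ket{v_k}$ in the eigenbasis of $U$ with eigenangles $\theta_k$, the elementary identity
\begin{align*}
    1 - a(\psi,U)^2 \;=\; \sum_{k<l} 2|c_k|^2 |c_l|^2\bigl(1-\cos(\theta_k-\theta_l)\bigr) \;\gtrsim\; \mathrm{Var}_{|c|^2}(\theta),
\end{align*}
together with Lemma~\ref{lemma:distance} (so $\theta_{\max}-\theta_{\min} \geq 2\sin^{-1}(\varep/2)$), reduces the task to a lower bound on the weighted eigenangle variance. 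A Weingarten-moment calculation gives $\E_\psi[\mathrm{Var}_{|c|^2}(\theta)] = \tfrac{d}{d+1}\mathrm{Var}_{\mathrm{unif}}(\theta)$, and a direct optimization over eigenangle configurations subject to the spread constraint (minimized by placing $d-2$ angles at the midpoint and the remaining two at the extremes) gives $\mathrm{Var}_{\mathrm{unif}}(\theta) \gtrsim \varep^2/d$. A second-moment (Chebyshev) inequality applied to $\mathrm{Var}_{|c|^2}(\theta)$, which is a degree-four polynomial in the Haar coordinates whose variance is again computable by Weingarten calculus, then upgrades the mean bound to the claimed constant-probability statement. Combining the pieces: under $H_0$ we have $a = 1$ and $\Pr[M=0] = 1$; under $H_1$, the anti-concentration together with $|P| \leq 1/3$ on $[-1+\delta,\,1-\delta]$ gives $\Pr[M=1 \mid H_1] \geq (9/10)(1-1/9) = 4/5 \geq 2/3$, completing the proof.
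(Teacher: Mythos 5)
Your QSVT construction, block-encoding of the overlap $\bra{\psi}U\ket{\psi}$, query accounting, and choice of a rescaled Chebyshev polynomial all match the paper's proof (there $V$ is block-encoded with $\bra{\psi}U^\dagger\ket{\psi}$ in the role of your $\bra{\psi}U\ket{\psi}$, which is the same singular value). The gap is in the anti-concentration step, where you propose a genuinely different route that unfortunately does not close.

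You want to show $\pr_\psi\!\left(a(\psi,U) > 1 - c\varep^2/d\right) \leq 1/10$ via a second-moment/Chebyshev bound on the weighted variance $V(\psi) \coleq \sum_{k<l}|c_k|^2|c_l|^2(\theta_k-\theta_l)^2$. The problem is that this random variable has coefficient of variation of order one, so no second-moment argument can push the lower-tail probability below roughly $1/3$. Take the extremal configuration you yourself identify as minimizing the unweighted variance: $d-2$ eigenangles at $0$ and two at $\pm s/2$. Then $V(\psi) \approx (s/2)^2\,(|c_1|^2 + |c_2|^2)$ where $Z \coleq |c_1|^2 + |c_2|^2 \sim \mathrm{Beta}(2,d-2)$, so $\E[V] \approx s^2/(2d)$ while $\mathrm{Var}[V] \approx (s/2)^4\cdot 2/d^2 = s^4/(8d^2)$, giving $\mathrm{Var}[V]/\E[V]^2 \approx 1/2$. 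One-sided Chebyshev (Cantelli) then yields at best
\begin{align*}
\pr\!\left(V < \delta\right) \;\leq\; \frac{\mathrm{Var}[V]}{\mathrm{Var}[V] + (\E[V]-\delta)^2} \;\xrightarrow[\;\delta\to 0\;]{}\; \frac{1/8}{1/8 + 1/4} \;=\; \frac{1}{3},
\end{align*}
which is far short of the $1/10$ your error budget requires; combined with $|P|\leq 1/3$ you would only get $\Pr[M=1\,|\,H_1] \geq \tfrac{2}{3}(1-1/9) < 2/3$, so the argument does not certify with the claimed success probability. The second-moment method is simply too coarse: it cannot see that the density of $Z$ vanishes linearly at zero. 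The paper's Lemma~\ref{lemma:qsvtdelta} avoids this by working with the extremal eigenangle pair directly: it shows $1-|\bra{\psi}U^\dagger\ket{\psi}| \geq \tfrac{1}{4}\min(c_k,c_l)\varep^2$, then uses the exact $\mathrm{Beta}(1,d-1)$ law of $c_k$ to bound $\pr(\min(c_k,c_l) < 4\delta/\varep^2) \leq 8d\delta/\varep^2$. That tail bound goes to zero linearly in $\delta$ and can therefore be made an arbitrarily small constant by taking $\delta = \Theta(\varep^2/d)$ (the paper uses $\delta = \varep^2/(48d)$, yielding $\leq 1/6$). To repair your proof, replace the Chebyshev step with a direct tail bound on $|c_k|^2$ and $|c_l|^2$ for the extremal pair; the rest of your argument then goes through unchanged.
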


\begin{proof}
    We prove that Algorithm \ref{alg:coh} achieves the optimal query complexity $N=\mathcal{O}(\sqrt{d}/\varep)$.
    Before proceeding, we briefly outline the algorithm.
    Our certification algorithm of the unitary channel $\mathcal{E}_U$ proceeds in three steps: first, prepare a Haar-random state $\ket{\psi}$; second, apply it to the QSVT circuit $\mathcal{E}_{V_\Phi}$; third, measure the output state with the POVM $\{{\ketbra{\psi}{\psi}, I-\ketbra{\psi}{\psi}}\}$.
    The measurement outcome $\ketbra{\psi}{\psi}$ implies the decision $H_0: \mathcal{E}_U=\mathcal{E}_I$; otherwise, we conclude $H_1: D(\mathcal{E}_U, \mathcal{E}_I)\geq\varepsilon$.

    We now analyze the query complexity of our algorithm. 
    First, we construct the QSVT operator $V_\Phi$ and derive an upper bound on the error probability in terms of the corresponding polynomial $P$.
    We then identify a polynomial $P$ that ensures an error probability of at most $1/3$. 
    Lastly, we show that the construction of $V_\Phi$ using this polynomial requires $\mathcal{O}(\sqrt{d}/\varepsilon)$ queries to $\mathcal{E}_U$.

    We construct the QSVT operator $V_\Phi$ with the orthogonal projections $\Pi_\psi\coleq\ketbra{\psi}{\psi}$ and $U\Pi_\psi U^\dagger$, along with the identity operator $V=I$, following the notation from Lemma \ref{lemma:qsvt}. 
    The block encoding of $V$ is given as
    \begin{align}
        V=\kbordermatrix{
        &\Pi_\psi &  \\
        U\Pi_\psi U^\dagger & \bra{\psi}U^\dagger\ket{\psi} & \cdot \\ 
        & \cdot & \cdot 
    }.
    \end{align}
    Consequently, the resulting QSVT operator $V_\Phi$ is block-encoded as
    \begin{align}
        V_\Phi=\kbordermatrix{
        &\Pi_\psi &  \\
        \Pi_\psi & P^{(\text{SV})}(\bra{\psi}U^\dagger\ket{\psi}) & \cdot \\ 
        & \cdot & \cdot 
    },
    \end{align}
    where the polynomial $P$ satisfies: (1) $\deg(P)$ is even, (2) $P$ is even, and (3) $|P(x)|\leq 1$ for $x\in[-1, 1]$.

    Next, we express the error probabilities in terms of $P$.
    Let $P_{\text{error}|H_0}$ and $P_{\text{error}|H_1}$ denote the error probabilities under hypotheses $H_0$ and $H_1$, respectively. 
    These probabilities are expectations over the Haar-random state $\ket{\psi}$, given by $P_{\text{error}|H_i}=\E_{\psi}P_{\text{error}|H_i, \psi}$ for $i\in{0,1}$.
    Under hypothesis $H_0$, the conditional error probability is
    \begin{align}
        P_{\text{error}|H_0,\psi}
        &=1-|\bra{\psi}V_\Phi\ket{\psi}|^2\\
        &=1-|P^{(\text{SV})}(\bra{\psi}U^\dagger\ket{\psi})|^2\\
        &=1-(P(|\bra{\psi}U^\dagger\ket{\psi}|))^2\\
        &=1-(P(1))^2,
    \end{align}
    where the second line follows from $|P^{(SV)}(c)|=P(|c|)$ for complex $c$, and the last line follows from the condition $\mathcal{E}_U=\mathcal{E}_I$ of $H_0$.
    Similarly, under $H_1$, we have
    \begin{align}
        P_{\text{error}|H_1,\psi}
        &=|\bra{\psi}V_\Phi\ket{\psi}|^2\\
        &=|P^{(\text{SV})}(\bra{\psi}U^\dagger\ket{\psi})|^2\\
        &=(P(|\bra{\psi}U^\dagger\ket{\psi}|))^2.
    \end{align}
    We impose an additional condition: (4) $P(1)=1$, which allows us to neglect the $H_0$ error.
    Then, defining the set $S_{\delta}\coleq\{\psi:|\bra{\psi}U^\dagger\ket{\psi}|\leq1-\delta\}$, we bound the error probability as follows:
    \begin{align}
        P_{\text{error}}
        &=P_{\text{error}|H_1}\\
        &=\E_\psi P_{\text{error}|H_1,\psi}\\
        &=\E_\psi (P(|\bra{\psi}U^\dagger\ket{\psi}|))^2\\
        &=\E_\psi (P(|\bra{\psi}U^\dagger\ket{\psi}|))^2\ind_\psi(S_{\delta}) + \E_\psi (P(|\bra{\psi}U^\dagger\ket{\psi}|))^2\ind_\psi((S_{\delta})^c)\\
        &\leq \E_{\psi|S_{\delta}} (P(|\bra{\psi}U^\dagger\ket{\psi}|))^2+\pr_\psi ((S_{\delta})^c)\\
        &\leq \max_{x\in[0, 1-\delta]}(P(x))^2+\pr_\psi ((S_{\delta})^c).
    \end{align}

    From this, we now show that the error probability is at most $1/3$ for some $\delta$ and $P$ satisfying all the imposed assumptions.
    We establish two lemmas to derive upper bounds on each term on the RHS.
    The first lemma explicitly provides a polynomial $P$ that ensures a small upper bound on the first term: 
    \begin{lemma}
        \label{lemma:qsvtpoly}
        For $0<\delta, \Delta<1/2$ and an $n$-th order Chebyshev polynomial $T_n$ with $n=2\ceil{1/\sqrt{\delta}\log(2/\Delta)}$, the polynomial
        \begin{align}
            P(x)=\frac{T_n(x/(1-\delta))}{T_n(1/(1-\delta))}
        \end{align}
        satisfies the following conditions:
        \begin{enumerate}[label=(\arabic*)]
            \item $\deg(P)$ is even.
            \item $P$ is even.
            \item $|P(x)|\leq 1$ for $x\in[-1, 1]$
            \item $P(1)=1$
            \item $|P(x)|\leq\Delta$ for $x\in[0,1-\delta]$.
        \end{enumerate}
    \end{lemma}
    \begin{proof}
        Appendix \ref{appsubsec:qsvtpoly}.
    \end{proof}
    \noindent
    An illustration of $P(x)$ is provided in Fig.~\ref{fig:poly}.
    \begin{figure}[ht]
        \centering
        \begin{overpic}[width=0.40\textwidth]{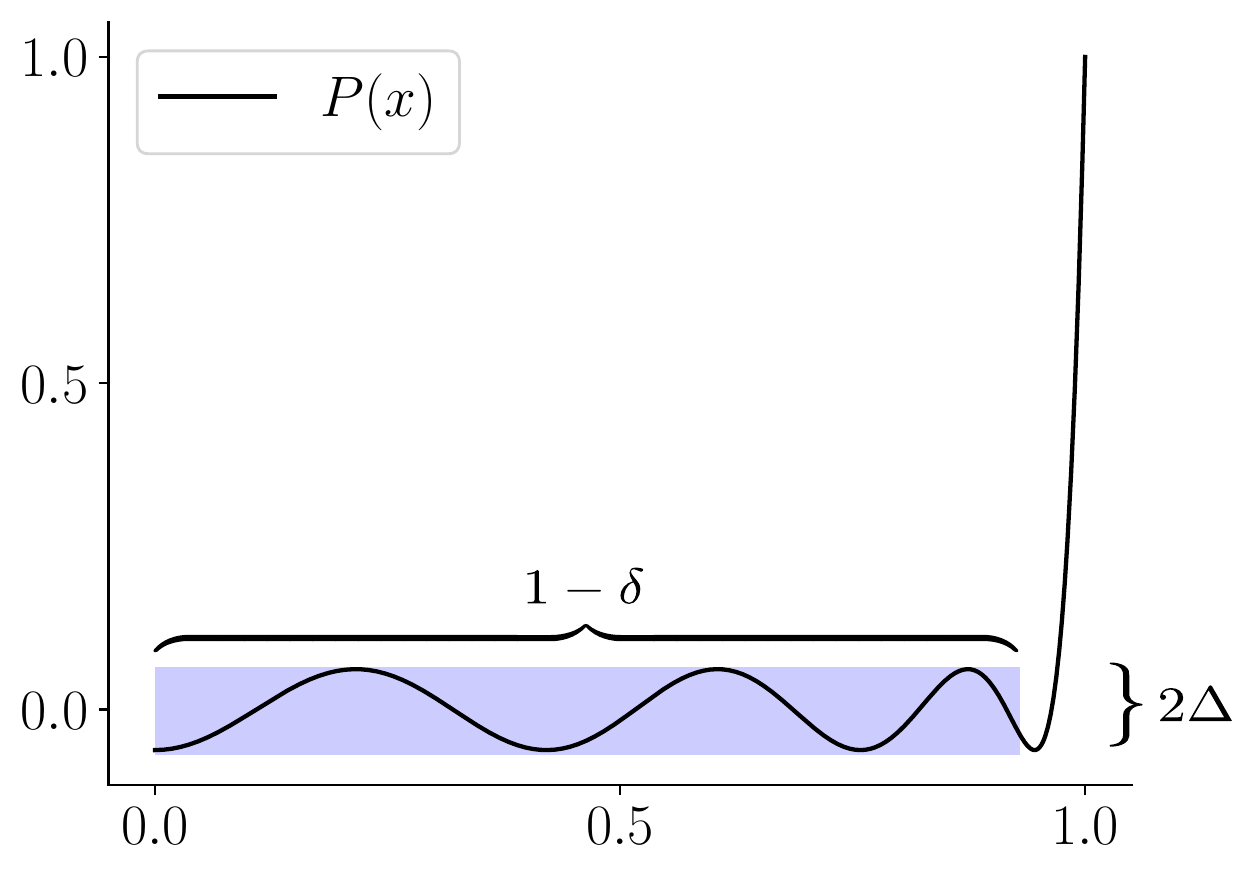} 
        \end{overpic}
        \caption{Illustration of polynomial $P(x)$ for Algorithm \ref{alg:coh}. The blue region indicates where the absolute value of the polynomial is bounded by $\Delta$.}
        \label{fig:poly}
    \end{figure}
    The fifth condition leads to the following upper bound for the first term:
    \begin{align}
        \max_{x\in[0, 1-\delta]}(P(x))^2\leq\Delta^2.
    \end{align}
    The second lemma ensures that for small $\delta$, the set $S_\delta$ has high probability, thus bounding the second term:
    \begin{lemma}
        \label{lemma:qsvtdelta}
        Let $S_{\delta}=\{\psi:|\bra{\psi}U^\dagger\ket{\psi}|\leq1-\delta\}$.
        For a Haar-random state $\ket{\psi}$ and unitary operator $U$ satisfying $D(\mathcal{E}_U, \mathcal{E}_I)\geq\varep$,
        \begin{align}
            \pr_\psi ((S_{\delta})^c)\leq\frac{8d\delta}{\varep^2}
        \end{align}
    \end{lemma}
    \begin{proof}
        Appendix \ref{appsubsec:qsvtdelta}.
    \end{proof}
    \noindent 
    Combining these lemmas, we have
    \begin{align}
        P_{\text{error}}\leq\Delta^2+\frac{8d\delta}{\varep^2}.
    \end{align}
    Thus, choosing $\Delta=1/\sqrt{6}$ and $\delta=\varep^2/(48d)$ yields the small error probability $P_{\text{error}}\leq1/3$. 
    This can be realized using a QSVT circuit with polynomial degree $2\ceil{(\sqrt{48}\log 2\sqrt{6})\sqrt{d}/\varep}$.

    The polynomial $P(x)$ has a degree of $\mathcal{O}(\sqrt{d}/\varep)$.
    Fig.~\ref{fig:qsvt_circuit} shows the implementation of the QSVT circuit $V_\Phi$ with polynomial degree $\mathcal{O}(\sqrt{d}/\varepsilon)$ using $\mathcal{O}(\sqrt{d}/\varepsilon)$ queries to $\mathcal{E}_U$ and $\mathcal{E}_{U^\dagger}$, therefore completing the proof.
    The QSVT circuit involves repeated rotations using $\Pi_\psi$ and $U\Pi_\psi U^\dagger$, as illustrated in Fig.~\ref{fig:qsvt_circuit} (a).
    The total required number of rotations $n$ is equal to the polynomial degree, implying $n=\mathcal{O}(\sqrt{d}/\varep)$ queries.
    Each $U\Pi_\psi U^\dagger$ rotation requires one query to both $\mathcal{E}_U$ and $\mathcal{E}_{U^\dagger}$, whereas $\Pi_\psi$ rotations require none. 
    Therefore, we need $n/2=\mathcal{O}(\sqrt{d}/\varepsilon)$ queries to each of $\mathcal{E}_U$ and $\mathcal{E}_{U^\dagger}$, which suffices for our result.

    \vspace{1.5em}
    \begin{figure}[ht]
        \centering
        \begin{overpic}[width=0.60\textwidth]{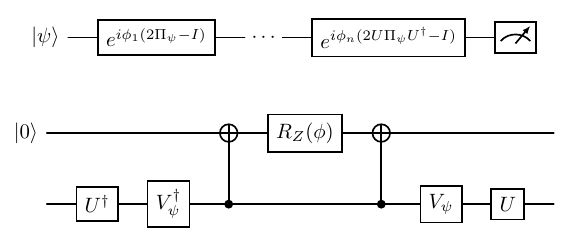}
            \put(-3,41.5){\text{(a)}} 
            \put(-3,26){\text{(b)}} 
        \end{overpic}
        \caption{Quantum circuits for Algorithm \ref{alg:coh}. (a) Full circuit. (b) Gate $e^{i\phi(2U\Pi_\phi U^\dagger-I)}$.}
        \label{fig:qsvt_circuit}
    \end{figure}

\end{proof}

\section{Average-case query complexity}
\label{appsec:average}
We show that almost all $\varep$-perturbed unitary channels can be certified using $\mathcal{O}(1/\varep^2)$ queries.
We first introduce necessary preliminaries, followed by the proof of the main theorem.

We begin by restating the average-case unitary channel ensemble $\epcue$ in terms of its probability density function (pdf).
By definition, $\epcue$ is the marginal distribution of CUE conditioned on an $\varep$-perturbation from the identity channel.
Therefore, the pdf of $\epcue$ can be expressed in terms of the pdf of the CUE.
Let $U_{\btheta}$ denote a unitary operator with eigenangles $\btheta=(\theta_1,\dots,\theta_d)$.
Then, the pdf of the eigenangles for an operator $U_{\btheta}\sim\text{CUE}$ is given by
\begin{align}
    f_\text{CUE}(\btheta)\coleq\frac{1}{C}\prod_{1\le k<l\le d}|e^{i\theta_k}-e^{i\theta_l}|^2,
\end{align}
where $C$ is a normalization constant~\cite{dyson1962algebraic}.
Consequently, the pdf for eigenangles of $U_{\btheta}\sim\epcue$ is
\begin{align}
    f_{\epcue}(\btheta)\coleq\frac{1}{C_\varep}\prod_{1\le k<l\le d}|e^{i\theta_k}-e^{i\theta_l}|^2\ind_{\btheta}(D(\mathcal{E}_{U_{\btheta}}, \mathcal{E}_I)=\varep)
\end{align}
with a normalization constant $C_\varep$.
Note that although the unitary operator $U_{\btheta}$ is not uniquely defined by $\btheta$, the indicator function $\ind_{\btheta}(D(\mathcal{E}_{U_{\btheta}}, \mathcal{E}_I))$ remains well-defined, since $D(\mathcal{E}_{U_{\btheta}}, \mathcal{E}_I)$ remains invariant under global phases of $U_{\btheta}$.
This invariance also allows us to reduce the condition $D(\mathcal{E}_{U_{\btheta}}, \mathcal{E}_I)=\varep$ to
\begin{align}
    \btheta\in\mathcal{R}(s)\coleq \{\btheta:\min\btheta=-s/2,\max\btheta=s/2\}
\end{align}
without loss of generality, recalling that $s=2\sin^{-1}(\varep/2)$ is determined by $\varep$.
Hence, we redefine the pdf $f_{\epcue}$ as
\begin{align}
    \label{eq:epCUE}
    f_{\epcue}(\btheta)\coleq\frac{1}{C_\varep'}\prod_{1\le k<l\le d}|e^{i\theta_k}-e^{i\theta_l}|^2\ind_{\btheta}({\mathcal{R}(s)})
\end{align}
with a normalization constant $C_\varep'$.
In this appendix, we use the notation $\btheta\sim\epcue$ to indicate that $\btheta$ follows the pdf $f_{\epcue}$, which is a slightly abused notation, as $\epcue$ originally denotes an ensemble of unitary operators.

Now, we prove the main theorem.
\begin{theorem}
    Suppose a random unitary channel $\mathcal{E}_U$ is given with $U\sim\epcue$ under $\varep<1/2$ and dimension $d\geq4$. 
    There exists an algorithm that tests whether $D(\mathcal{E}_U, \mathcal{E}_I)\geq\varep$ or $\mathcal{E}_U=\mathcal{E}_I$ with success probability at least 2/3 using $N=\mathcal{O}(1/\varep^2)$ queries, except for an $\exp(-\Omega(d))$ fraction of $U$.
\end{theorem}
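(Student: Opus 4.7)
The plan is to analyze Algorithm~\ref{alg:incoh} directly and reduce the claim to a lower bound on a single-query detection probability. Under $H_0$ we have $\mathcal{E}_U=\mathcal{E}_I$, so $X_i=1$ never occurs and the algorithm is always correct. Under $H_1$, the $N$ trials are independent and each returns $X_i=0$ with the common probability $q_U\coleq\E_\psi|\bra{\psi}U\ket{\psi}|^2$, so the algorithm errs with probability exactly $q_U^N\le e^{-N(1-q_U)}$. Hence it suffices to show that $1-q_U=\Omega(\varep^2)$ for all but an $\exp(-\Omega(d))$-measure set of $U\sim\epcue$, as then taking $N=\mathcal{O}(1/\varep^2)$ drives the error below $1/3$.

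I would then express $q_U$ in closed form and tie it to an eigenangle variance. Using the Haar second-moment identity $\E_\psi[|\braket{e_k}{\psi}|^2|\braket{e_l}{\psi}|^2]=(1+\delta_{kl})/(d(d+1))$ in the eigenbasis of $U=\sum_k e^{i\theta_k}\ketbra{e_k}{e_k}$, a direct computation gives
\begin{align*}
    1-q_U=\frac{d^2-|\tr(U)|^2}{d(d+1)}=\frac{4}{d(d+1)}\sum_{1\le k<l\le d}\sin^2\!\Big(\tfrac{\theta_k-\theta_l}{2}\Big).
\end{align*}
By the reduction preceding Eq.~\eqref{eq:epCUE}, we may assume $\btheta\in\mathcal{R}(s)$, so $|\theta_k-\theta_l|\le s<1$ and $\sin^2(x/2)\ge c_1 x^2$ for an absolute constant $c_1>0$. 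Combined with the elementary identity $\sum_{k<l}(\theta_k-\theta_l)^2=d^2\,\mathrm{Var}(\btheta)$, this yields $1-q_U\ge c_2\,\mathrm{Var}(\btheta)$ for an absolute $c_2>0$. Since $s=\varep+\mathcal{O}(\varep^3)$, the task reduces to proving that $\mathrm{Var}(\btheta)=\Omega(s^2)$ for $\btheta\sim\epcue$ outside an $\epcue$-event of probability $\exp(-\Omega(d))$.

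The main obstacle is this anti-concentration of the eigenangle variance, which I expect is the role played by the separately stated Lemma~7. My approach is a rescaling $\tilde\theta_k\coleq(\theta_k+s/2)/s\in[0,1]$ that turns the pinning conditions into $\min\tilde\btheta=0$, $\max\tilde\btheta=1$ and produces a density on this slice proportional to $\prod_{k<l}\sin^2(s(\tilde\theta_k-\tilde\theta_l)/2)$. Pulling out $(s/2)^{d(d-1)}$ and using $\sin(sy/2)/(sy/2)\in[\mathrm{sinc}(s/2),1]$ for $|y|\le 1$, this is, up to a configuration-dependent factor bounded between $\mathrm{sinc}(1/2)^{d(d-1)}$ and $1$, the endpoint-pinned $\beta{=}2$ Jacobi log-gas with density $\propto\prod_{k<l}(\tilde\theta_k-\tilde\theta_l)^2$, whose variance concentration properties are $d$-independent. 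It then suffices to prove $\Pr[\mathrm{Var}(\tilde\btheta)\le c_0]\le\exp(-\Omega(d))$ for some absolute $c_0>0$, which I would establish by (i) a Selberg-type lower bound on the normalizing constant and (ii) an upper bound on the mass of the low-variance region, exploiting the fact that small variance forces $\Omega(d)$ of the rescaled points into a window of length $o(1)$; the Vandermonde factor $\prod_{k<l}(\tilde\theta_k-\tilde\theta_l)^2$ is then suppressed by an $\exp(-\Omega(d))$ factor relative to a typical configuration spread on $[0,1]$ at spacing $\Theta(1/d)$. Undoing the rescaling yields $\mathrm{Var}(\btheta)\ge c_0 s^2=\Omega(\varep^2)$ with $\epcue$-probability $1-\exp(-\Omega(d))$, closing the reduction begun in the first paragraph.
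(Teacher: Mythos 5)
Your first two paragraphs track the paper's proof of Theorem~4 essentially exactly. The identity $1-q_U = (d^2-|\tr U|^2)/(d(d+1))$, its rewriting in terms of $\sum_{k<l}\sin^2((\theta_k-\theta_l)/2)$, and the reduction to showing that $\sum_k(\theta_k-\bar\theta)^2=\Omega(ds^2)$ with $\epcue$-probability $1-\exp(-\Omega(d))$ are all correct and match what the paper does (Eq.~\eqref{eq:H1error} and the display chain culminating in Eq.~\eqref{eq:tr_bound_2}).

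The gap is in your third paragraph, which is the heart of the matter (the paper's Lemma~\ref{lemma:anti-concentration}). After rescaling to $\tilde\theta_k\in[0,1]$, the $\epcue$ density is $\propto\prod_{k<l}\sin^2\!\bigl(s(\tilde\theta_k-\tilde\theta_l)/2\bigr)$, and you propose to compare it to the pinned Jacobi log-gas $f_J\propto\prod_{k<l}(\tilde\theta_k-\tilde\theta_l)^2$ on the grounds that the unnormalized densities differ only by a factor in $[\operatorname{sinc}(1/2)^{d(d-1)},1]$. But that factor is $\exp(-\Theta(d^2))$, and it does \emph{not} cancel neatly in the normalized density ratio: $f_{\epcue}/f_J = (Z_J/Z_{\epcue})\prod_{k<l}\operatorname{sinc}^2(\cdot)$, and on the low-variance region the $\operatorname{sinc}^2$ factors are all $\approx 1$ while $Z_J/Z_{\epcue}\approx\exp(\Theta(d^2 s^2))$ (the typical Jacobi configuration has points spread at $\Theta(1/d)$ spacing, where $\operatorname{sinc}^2<1$ bites). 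So $\Pr_{\epcue}[\text{Var small}]\lesssim e^{C d^2 s^2}\Pr_J[\text{Var small}]$. Your claimed $\Pr_J[\text{Var}\le c_0]\le\exp(-\Omega(d))$ is not strong enough to beat this prefactor for $d\gg 1/s^2$; you would need the log-gas large-deviation rate $\exp(-I\,d^2)$ \emph{with an explicit constant} $I>Cs^2$, which is not established in your sketch and in fact would impose a constraint on $s$ (and hence $\varep$) that you do not control. You also misstate the suppression of the Vandermonde when $\Omega(d)$ points cluster in a window of length $o(1)$ as $\exp(-\Omega(d))$; it is $\exp(-\Omega(d^2))$, which is precisely why the rate claim needs to be tightened and compared to the $e^{\Theta(d^2 s^2)}$ prefactor. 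The paper sidesteps all of this by comparing $f_{\epcue}$ \emph{pointwise} to the uniform pinned density $f_{\epuni}$ on the low-variance shell $\{\sum_k(\theta_k-\bar\theta)^2<\delta\}$: it lower-bounds $C_\varep'$ via Selberg, upper-bounds the Vandermonde on the shell via AM--GM and the $\delta$-constraint, obtains $f_{\epcue}<f_{\epuni}$ for $\delta<ds^2/(36e^{24})$, and then applies Hoeffding to $\epuni$. If you want to keep your Jacobi-log-gas comparison, you must (a) establish the log-gas anti-concentration at rate $\exp(-\Omega(d^2))$ and (b) show the constant dominates the normalization discrepancy uniformly for $s<1$; otherwise adopt the paper's pointwise-to-$\epuni$ route.
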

\begin{proof}
    We show that Algorithm \ref{alg:incoh} suffices.
    Recall the three steps of the algorithm:
    \begin{enumerate}
        \item Prepare a Haar-random state $\ket{\psi}$ and measure the state $U\ket{\psi}$ with POVM $\{\ketbra{\psi}{\psi}, I-\ketbra{\psi}{\psi}\}$. Each measurement yields an outcome of 0 or 1, respectively.
        \item Repeat this procedure $N$ times. Let $X_k$ be the $k$-th measurement outcome and $\ket{\psi_k}$ be the corresponding $k$-th Haar-random state.
        \item If at least one measurement yields $X_k=1$, conclude $D(\mathcal{E}, \mathcal{E}_I)\geq\varep$. Otherwise, conclude $\mathcal{E}_U=\mathcal{E}_I$.
    \end{enumerate}
    Under the null hypothesis $H_0:\mathcal{E}_U = \mathcal{E}_I$, the algorithm never incorrectly outputs $D(\mathcal{E}_U, \mathcal{E}_I) \geq \varepsilon$.
    Therefore, the only possible error occurs under hypothesis $H_1$: incorrectly deciding $\mathcal{E}_U = \mathcal{E}_I$ when $D(\mathcal{E}_U, \mathcal{E}_I) \geq \varepsilon$.

    We now show that the error probability under hypothesis $H_1$ can be made less than $1/3$ with $\mathcal{O}(1/\varepsilon^2)$ queries to $\mathcal{E}_U$, valid for a $1 - \exp(-\Omega(d))$ fraction of unitary operators $U \sim \epcue$.
    The proof involves three steps:
    First, we express the $H_1$ error probability in terms of the underlying unitary operator $U$.
    Then, we derive an upper bound on the fraction of $U$ that result in high error probability.
    Finally, using this bound, we show that for some query number $N$ on the order of $\mathcal{O}(1/\varepsilon^2)$, the fraction becomes exponentially small.

    Assuming the given channel $\mathcal{E}_U$ satisfies $D(\mathcal{E}_U, \mathcal{E}_I)\geq\varep$, the $H_1$ error probability $P_{\text{error}|\mathcal{E}_U}$ for input Haar-random states $\boldsymbol{\psi}\coleq(\ket{\psi_1},\dots,\ket{\psi_N})$ is given as follows:
    \begin{align}
    P_{\text{error}|\mathcal{E}_U}
        &=\pr_{\boldsymbol{\psi}}(\text{Decide }\mathcal{E}_U=\mathcal{E}_I)\\
        &=\E_{\boldsymbol{\psi}}\prod_{k=1}^N \pr(X_k=0|\psi_k)\\
        &=\prod_{k=1}^N \E_{\psi_k}|\bra{\psi_k}U\ket{\psi_k}|^2\\
        &=\prod_{k=1}^N \E_{\psi_k}\tr((U\otimes U^\dagger)(\ketbra{\psi_k}{\psi_k})^{\otimes2})\\
        &=\left(\frac{d+|\tr (U)|^2}{d(d+1)}\right)^N.\label{eq:H1error}
            && \text{Eq.}~\eqref{eq:haar}
    \end{align}
    Then, the small error probability condition $P_{\text{error}|\mathcal{E}_U}<1/3$ can be equivalently written as follows:
    \begin{align}
        |\tr (U)|^2<3^{-\frac{1}{N}}d(d+1)-d.
    \end{align}
    Consequently, we can write the fraction of $U\sim\epcue$ with large error probability as follows:
    \begin{align}
        \pr_{U\sim\epcue}(P_{\text{error}|\mathcal{E}_U}\geq1/3)\equiv
        \pr_{U\sim\epcue}(|\tr (U)|^2\geq3^{-\frac{1}{N}}d(d+1)-d)
        \label{eq:tr_bound_1}
    \end{align}
    Our goal is to show that the RHS of Eq.~\eqref{eq:tr_bound_1} is on the order of $\exp(-\Omega(d))$ for a number of queries $N$, satisfying $N=\mathcal{O}(1/\varep^2)$.
    Thus, we aim to obtain its upper bound in terms of $N$.
    From
    \begin{align}
        3^{-\frac{1}{N}}d(d+1)-d
        &=d^2-(1-e^{-\frac{\log 3}{N}})(d^2+d)\\
        &> d^2-\frac{\log 3}{N}(d^2+d)
            && e^{-x}>1-x\text{ for }x>0\\
        &> d^2-\frac{2\log 3}{N}d^2\\
        &> d^2\left(1-\frac{2\log 3}{N}\right)^2,
    \end{align}
    we obtain the upper bound relation
    \begin{align}
        \pr_{U\sim\epcue}(|\tr (U)|^2\geq3^{-\frac{1}{N}}d(d+1)-d)
        &\leq\pr_{U\sim\epcue}\left(|\tr (U)|\geq d\left(1-\frac{2\log3}{N}\right)\right).
    \end{align}
    Writing the average eigenangle as $\bar{\theta}\coleq(\theta_1+\dots+\theta_d)/d$, we have
    \begin{align}
        |\tr (U)|
        &=\left|\sum_{k=1}^d e^{i\theta_k}\right|\\
        &=\sqrt{\sum_{1\leq k,l\leq d}\cos(\theta_k-\theta_l)}\\
        &\leq\sqrt{d^2-\frac{1}{3}\sum_{1\leq k,l\leq d}(\theta_k-\theta_l)^2}\\
        &=\sqrt{d^2-\frac{1}{3}\sum_{1\leq k,l\leq d}((\theta_k-\bar{\theta})-(\theta_l-\bar{\theta}))^2}\\
        &=\sqrt{d^2-\frac{1}{3}\left(2d\sum_{k=1}^d(\theta_k-\bar{\theta})^2-2\left(\sum_{k=1}^d(\theta_k-\bar{\theta})\right)^2\right)}\\
        &=d\sqrt{1-\frac{2}{3d}\sum_{k=1}^d(\theta_k-\bar{\theta})^2}\\
        &\leq d-\frac{1}{3}\sum_{k=1}^d(\theta_k-\bar{\theta})^2,
    \end{align}
    leading to another upper bound relation
    \begin{align}
        &\pr_{U\sim\epcue}\left(|\tr (U)|\geq d\left(1-\frac{2\log 3}{N}\right)\right)\notag\\
        &\quad\leq\pr_{U\sim\epcue}\left(d-\frac{1}{3}\sum_{k=1}^d(\theta_k-\bar{\theta})^2>d\left(1-\frac{2\log 3}{N}\right)\right)\\
        &\quad=\pr_{U\sim\epcue}\left(\sum_{k=1}^d(\theta_k-\bar{\theta})^2<\frac{(6\log3)d}{N}\right).
        \label{eq:tr_bound_2}
    \end{align}
    As a result, our goal reduces to deriving an upper bound on the fraction of $U$ whose eigenangles have small sample variance.

    A notable feature of CUE is that its eigenangles behave as repelling particles on a unit circle~\cite{dyson1962brownian}.
    Therefore, one can presume that the sample variance of eigenangles of $\epcue$ is larger compared to that of uniformly sampled angles.
    We rigorously prove this presumption and consequently obtain the upper bound on Eq.~\eqref{eq:tr_bound_2}.
    Before proceeding, we first define the notion of uniformly sampled angles.
    Let $\epuni$ be the distribution of $d$-dimensional angles $\btheta$, where the angle follows the pdf
    \begin{align}
        f_{\epuni}(\btheta)\coleq\frac{1}{\varep^{d-2}d(d-1)}\ind_{\btheta}(\mathcal{R}(s)).
    \end{align}
    This construction offers a fair comparison to the eigenangles from $\epcue$; 
    $\btheta\sim\epuni$ is sampled by uniformly choosing $m$ and $n$ satisfying $\theta_m=-s/2$ and $\theta_n=s/2$, and then sampling the rest of the $\theta_k$'s from the uniform distribution in $[-s/2,s/2]$.
    Now, we have the following lemma:
    \begin{lemma}
    \label{lemma:anti-concentration}
        Let $\bar{\theta}=\sum_{j=1}^d\theta_j/d$. For $\varep<1/2$, $d\geq 4$, and $\delta< ds^2/36e^{24}$, we have
        \begin{align*}
            \pr_{\btheta\sim\epcue}\left(\sum_{j=1}^d (\theta_j-\bar{\theta})^2<\delta\right)<\pr_{\btheta\sim\epuni}\left(\sum_{j=1}^d (\theta_j-\bar{\theta})^2<\delta\right).
        \end{align*}
    \end{lemma}
    \begin{proof}
    Appendix \ref{appsubsec:anti-concentration}
    \end{proof}
    \noindent
    Thus, under the assumption
    \begin{align}
        \label{eq:C_assumption}
        \frac{(6\log3)d}{N} < \frac{ds^2}{36e^{24}},
    \end{align}
    we obtain the upper bound for Eq.~\eqref{eq:tr_bound_2} as
    \begin{align}
        &\pr_{U\sim\epcue}\left(\sum_{k=1}^d(\theta_k-\bar{\theta})^2<\frac{(6\log3)d}{N}\right)\notag\\
        &\quad\equiv\pr_{\btheta\sim\epcue}\left(\sum_{k=1}^d(\theta_k-\bar{\theta})^2<\frac{(6\log3)d}{N}\right)\\
        &\quad<\pr_{\btheta\sim\epuni}\left(\sum_{k=1}^d(\theta_k-\bar{\theta})^2<\frac{(6\log3)d}{N}\right)\\
        &\quad=\pr_{\btheta\sim\epuni}\left(\sum_{k=1}^d \theta_k^2-d\bar{\theta}^2<\frac{(6\log3)d}{N}\right)\\
        &\quad\leq \pr_{\btheta\sim\epuni}\left(\sum_{k=1}^d \theta_k^2<\frac{(6\log3)d}{N}+Cd\right)+\pr_{\btheta\sim\epuni}\left(\bar{\theta}^2>C\right),
        \label{eq:tr_bound_3}
    \end{align}
    where $C$ is a constant.
    The third line follows from Lemma \ref{lemma:anti-concentration}, and the last line follows from the union bound.
    
    In the following, we obtain the upper bound on the two terms in Eq.~\eqref{eq:tr_bound_3} respectively.
    We can obtain the upper bound for the first term as
    \begin{align}
        &\pr_{\btheta\sim\epuni}\left(\sum_{k=1}^d \theta_k^2<\frac{(6\log3)d}{N}+Cd\right)\\
        &\quad=\pr_{\btheta\sim\epuni}\left(\sum_{k=1}^d \theta_k^2-\E \sum_{k=1}^{d} {\theta_k}^2<\frac{(6\log3)d}{N}+Cd-\E \sum_{k=1}^{d} {\theta_k}^2\right)\\
        &\quad=\pr_{\theta'_k\sim\text{Uniform}(-s/2,s/2)}\left(\sum_{k=1}^{d-2} ({\theta'_k}^2-\E {\theta'_k}^2) <\frac{(6\log3)d}{N}+Cd-\frac{s^2}{2}-\frac{s^2(d-2)}{12}\right)\\
        &\quad\equiv\pr_{\theta'_k\sim\text{Uniform}(-s/2,s/2)}\left(\sum_{k=1}^{d-2} ({\theta'_k}^2-\E {\theta'_k}^2) <X\right)\\
        &\quad\leq\exp\left(-\frac{32X^2}{s^4(d-2)}\right)
    \end{align}
    under the assumption $X<0$, where the variable $X$ is defined as follows:
    \begin{align}
        X\coleq \frac{(6\log3)d}{N}+Cd-\frac{s^2}{2}-\frac{s^2(d-2)}{12}.
    \end{align}
    Here, the angles $\theta_k'$'s are reindexed elements of $\btheta$, excluding the two angles $s/2$ and $-s/2$.
    The last line follows from Hoeffding's inequality.
    Similarly, we can obtain the upper bound for the second term as
    \begin{align}
        \pr_{\btheta\sim\epuni}\left(\bar{\theta}^2>C\right)
        &=\pr_{\theta'_k\sim\text{Uniform}(-s/2,s/2)}\left(\left|\sum_{k=1}^{d-2}\theta'_k\right|>\sqrt{C}d\right)\\
        &\leq2\exp\left(-\frac{2Cd^2}{s^2(d-2)}\right).
    \end{align}
    Finally, we obtain the following upper bound for the fraction of $U\sim\epcue$ with large error probability:
    \begin{align}
        \pr_{U\sim\epcue}(P_{\text{error}|\mathcal{E}_U}\geq1/3)\leq\exp\left(-\frac{32X^2}{s^4(d-2)}\right)+2\exp\left(-\frac{2Cd^2}{s^2(d-2)}\right).
    \end{align}

    We choose the parameters as follows:
    \begin{align}
        N&=\frac{217e^{24}\log3}{s^2},\\
        C&=\frac{s^2}{100}.
    \end{align}
    These parameters are valid, as they satisfy the assumptions $N=\mathcal{O}(1/\varep^2)$, Eq.~\eqref{eq:C_assumption}, and $X<0$.
    From
    \begin{align}
        X=\frac{6s^2d}{217e^{24}}+\frac{s^2d}{100}-\frac{s^2}{2}-\frac{s^2(d-2)}{12}<\frac{s^2(d-2)}{24},
    \end{align}
    we obtain
    \begin{align}
        \pr_{U\sim\epcue}(P_{\text{error}|\mathcal{E}_U}\geq1/3)
        &\leq\exp\left(-\frac{d-2}{18}\right)+2\exp\left(-\frac{d^2}{50(d-2)}\right)\\
        &=\exp(-\Omega(d)),
    \end{align}
    which completes the proof.
    Note that the numerical results in Sec.~\ref {sec:average} imply that, in practice, the number of queries $N$ will be significantly smaller than the one given in the proof.
\end{proof}

\section{Proof of technical lemmas}
    \label{appsec:lemma}
\subsection{Proof of Lemma \ref{lemma:F1F2}}
    \label{appsubsec:F1F2}
    We have
    \begin{align}
        \E_\psi\pr_{\bo\sim p_0}((G_\psi)^c)
        &\equiv\E_\psi\pr_{\bo\sim p_0}((A_\alpha\cap B_{\beta,\psi}\cap C_{\gamma,\psi})^c)\\
        &=\E_\psi\pr_{\bo\sim p_0}((A_\alpha\cap B_{\beta,\psi})^c\cup(A_\alpha\cap C_{\gamma,\psi})^c)\\
        &\leq\E_\psi\pr_{\bo\sim p_0}((A_\alpha\cap B_{\beta,\psi})^c)+\E_\psi\pr_{\bo\sim p_0}((A_\alpha\cap C_{\gamma,\psi})^c)\\
        &=1-\E_\psi\pr_{\bo\sim p_0}(A_\alpha\cap B_{\beta,\psi})+1-\E_\psi\pr_{\bo\sim p_0}(A_\alpha\cap C_{\gamma,\psi}),
    \end{align}
    where the third line follows from the union bound.
    Thus, it suffices to derive lower bounds for the terms $\E_\psi\pr_{\bo\sim p_0}(A_\alpha\cap B_{\beta,\psi})$ and $\E_\psi\pr_{\bo\sim p_0}(A_\alpha\cap C_{\gamma,\psi})$, which are the fractions of the sets $A_\alpha\cap B_{\beta,\psi}$ and $A_\alpha\cap C_{\gamma,\psi}$ in $(\psi,\bo)$-space, respectively.
    We provide the following three technical lemmas for this purpose.
    The first two lemmas give bounds for the first and second moments of $X_k(\psi, \bo)$ in $\psi$-space. 
    The last lemma shows an upper bound on the fraction of $A_\alpha$.
    Throughout the proof of the lemmas, we assume that $E_{o_{k}}^{{\bo_{<k}}}$ is a rank-1 operator and the input state $\rho^{\bo_{<k}}$ is pure, without loss of generality.
    The validation of these assumptions and details of the lemmas are given in Sec.~\ref{appsubsec:haar}.
    \begin{lemma}
    \label{lemma:X_1moment}
        For $s<1$, we have
        \begin{align*}
            \E_{\psi}X_k(\psi,\bo)\geq-\frac{s^2}{d}.
        \end{align*}
    \end{lemma}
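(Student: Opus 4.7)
The plan is to exploit the rank-one structure $U_\psi = I + (e^{is}-1)\ketbra{\psi}{\psi}$ so that the numerator of $X_k(\psi,\bo)+1$ expands cleanly in $\ketbra{\psi}{\psi}$, after which the Haar integral over $\ket{\psi}$ reduces to standard first- and second-moment identities on the unit sphere. Invoking the reduction (deferred to Appendix \ref{appsubsec:haar}) that $E_{o_k}^{\bo_{<k}}$ may be assumed rank-one and $\rho^{\bo_{<k}}$ pure, I would set $E = \ketbra{e}{e}$ and $\rho = \ketbra{r}{r}$ on $\mathcal{H}_S \otimes \mathcal{H}_{\text{anc}}$, and introduce the scalars $a \coleq \bra{e}\ket{r}$ (so $|a|^2 = \tr(E\rho)$) and $b_\psi \coleq \bra{e}(\ketbra{\psi}{\psi}\otimes I_{\text{anc}})\ket{r}$. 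Using $(U_\psi \otimes I_{\text{anc}})\ket{r} = \ket{r} + (e^{is}-1)(\ketbra{\psi}{\psi}\otimes I_{\text{anc}})\ket{r}$, the numerator becomes $|a + (e^{is}-1)b_\psi|^2 = |a|^2 + 2\mathrm{Re}((e^{is}-1)\bar{a} b_\psi) + |e^{is}-1|^2 |b_\psi|^2$.

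The next step is to recognize $b_\psi$ as a bilinear form on $\ket{\psi}$ via the partial trace. Writing $M \coleq \tr_{\text{anc}}(\ketbra{r}{e})$, one has $b_\psi = \bra{\psi}M\ket{\psi}$, together with $\tr M = a$ and $\tr(MM^\dagger) = \|M\|_2^2 \geq 0$. The Haar first-moment identity $\E_\psi \ketbra{\psi}{\psi} = I/d$ then gives $\E_\psi b_\psi = a/d$, so the cross term contributes $2(\cos s - 1)|a|^2/d$. The unitary 2-design identity
\begin{align*}
\E_\psi |\bra{\psi}M\ket{\psi}|^2 = \frac{|\tr M|^2 + \tr(MM^\dagger)}{d(d+1)}
\end{align*}
controls $\E_\psi|b_\psi|^2 = (|a|^2 + \|M\|_2^2)/(d(d+1))$. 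Substituting $|e^{is}-1|^2 = 2(1-\cos s)$ and collecting the two $\cos s$-dependent pieces, the algebra telescopes to $\E_\psi|a + (e^{is}-1)b_\psi|^2 = |a|^2 + \frac{2(1-\cos s)}{d(d+1)}(\|M\|_2^2 - d|a|^2)$.

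Dividing by $|a|^2 > 0$ (positive whenever the outcome is actually observed under $p_0$) and subtracting one produces an exact formula for $\E_\psi X_k(\psi,\bo)$. The proof concludes by dropping the nonnegative term $\|M\|_2^2$ and applying $2(1-\cos s) = 4\sin^2(s/2) \leq s^2$, giving the chain $\E_\psi X_k(\psi,\bo) \geq -2(1-\cos s)/(d+1) \geq -s^2/(d+1) \geq -s^2/d$. I do not expect a significant obstacle: beyond the reduction to rank-one POVM elements and pure input states (handled separately in Appendix \ref{appsubsec:haar}), the argument is a routine application of the first two Haar moments. The only mildly nonobvious point worth highlighting in the write-up is the partial cancellation between the $O(1/d)$ cross term and the $O(1/d)$ variance contribution from $b_\psi$, which is what yields the sharper $1/(d+1)$ scaling rather than a loss of constant order.
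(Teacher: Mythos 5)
Your proof is correct and follows essentially the same route as the paper: expand the numerator of $X_k+1$ in powers of $\ketbra{\psi}{\psi}$, evaluate the cross term via $\E_\psi\ketbra{\psi}{\psi}=I/d$ and the quadratic term via the Haar second moment (the paper's Lemma~\ref{lemma:haar_1} computes exactly $\E_\psi|b_\psi|^2=(|a|^2+\|M\|_2^2)/(d(d+1))$ before dropping $\|M\|_2^2$), and finish with $2(1-\cos s)\leq s^2$. Your scalar reparametrization via $a$, $b_\psi$, and $M=\tr_{\mathrm{anc}}(\ketbra{r}{e})$ is a cleaner presentation, but the intermediate bound $-2(1-\cos s)/(d+1)$ and the cancellation mechanism you highlight are identical to the paper's.
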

    \begin{proof}
    Sec.~\ref{appsubsubsec:X_1moment}
    \end{proof}
    \begin{lemma}
    \label{lemma:X_2moment}
        For $s<1$, we have
        \begin{align*}
            \E_{\psi}X_k^2(\psi,\bo)\leq\frac{6s^2\left(f(E_{o_k}^{\bo_{<k}},\rho^{\bo_{<k}})+1\right)}{d^2}+\frac{72s^4\left(f^2(E_{o_k}^{\bo_{<k}},\rho^{\bo_{<k}})+1\right)}{d^4}.
        \end{align*}
    \end{lemma}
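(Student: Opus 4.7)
The plan is to expand $X_k$ in terms of Haar-random quantities and then directly evaluate the expectation using the Haar moment formulas. Writing $P = \ketbra{\psi}{\psi}$, $\xi = P \otimes I_{\text{anc}}$, and $c = e^{is}-1$, a short computation with $U_\psi \otimes I = I + c \xi$ gives
\begin{align*}
X_k = \frac{c T_1 + c^* T_1^* + |c|^2 T_3}{\tr(E\rho)}, \qquad T_1 = \tr(E \xi \rho), \qquad T_3 = \tr(E \xi \rho \xi).
\end{align*}
Using the rank-$1$ assumption $E = \ketbra{e}{e}$ and $\rho = \ketbra{r}{r}$, I introduce the operator $M \coleq \tr_A(\ket{r}\bra{e})$ on $\mathcal{H}_S$, so that $T_1 = \braket{r}{e}\bra{\psi}M\ket{\psi}$ and $T_3 = |\bra{\psi}M\ket{\psi}|^2$. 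This immediately yields the key identity $|T_1|^2 = T_3 \tr(E\rho)$, together with $|\tr M|^2 = |\braket{e}{r}|^2 = \tr(E\rho)$ and $\tr(M M^\dagger) = \tr((\tr_S E)(\tr_S \rho)) = \tr(E\rho)\, f(E,\rho)$, the last identity following from writing $M = R_1 E_1^\dagger$ for the reshape matrices of $\ket{r}, \ket{e}$.

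Applying Young's inequality $(a+b)^2 \leq (1+\epsilon) a^2 + (1+1/\epsilon) b^2$ with $a = 2\,\mathrm{Re}(c T_1)$ and $b = |c|^2 T_3$, together with $(\mathrm{Re}(c T_1))^2 \leq |c|^2 |T_1|^2 = |c|^2 T_3 \tr(E\rho)$, produces
\begin{align*}
X_k^2 \leq 4(1+\epsilon)\,\frac{|c|^2 T_3}{\tr(E\rho)} + \left(1 + \frac{1}{\epsilon}\right)\,\frac{|c|^4 T_3^2}{\tr(E\rho)^2}.
\end{align*}
The first piece is handled by the Haar second moment $\E_\psi \ketbra{\psi}{\psi}^{\otimes 2} = (I + F)/[d(d+1)]$ (where $F$ is the swap), which gives $\E_\psi T_3 = [|\tr M|^2 + \tr(MM^\dagger)]/[d(d+1)] = \tr(E\rho)(1+f)/[d(d+1)]$. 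Choosing $\epsilon = 1/2$ and using $|c|^2 \leq s^2$ together with $d(d+1) \geq d^2$ then produces the first term $6 s^2 (1+f)/d^2$ of the claimed bound.

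The main obstacle is bounding the second piece, which requires $\E_\psi T_3^2 = \E_\psi |\bra{\psi}M\ket{\psi}|^4$. I apply the Haar fourth moment formula $\E_\psi \ketbra{\psi}{\psi}^{\otimes 4} = \sum_{\pi \in S_4} W_\pi/[d(d+1)(d+2)(d+3)]$, which expands $\E_\psi T_3^2$ into a sum of $24$ cyclic trace products in the sequence $(M, M, M^\dagger, M^\dagger)$. To estimate each contribution I combine three ingredients: (i) the operator-norm bound $\|M\|_\infty \leq \|\ket{r}\bra{e}\|_1 = 1$, which follows from $\|\tr_A X\|_\infty \leq \|X\|_1$ together with $\|\ket{r}\bra{e}\|_1 = \|\ket{r}\|\|\ket{e}\| = 1$; (ii) the trace identifications $|\tr M|^2 = \tr(E\rho)$ and $\tr(MM^\dagger) = \tr(E\rho) f$; and (iii) Hölder-type trace inequalities, notably $|\tr(M^2)| \leq \tr(MM^\dagger)$ via Weyl's inequality $\sum |\lambda_i(M)|^2 \leq \|M\|_2^2$, and $|\tr(AB)| \leq \|A\|_\infty \|B\|_1$ for the length-$3$ and length-$4$ cycle contributions. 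The dominant terms come from the identity permutation and the two $(2,2)$-cycle permutations, contributing $|\tr M|^4 + 2\,[\tr(MM^\dagger)]^2 = \tr(E\rho)^2 + 2\,\tr(E\rho)^2 f^2$; the remaining $21$ terms contribute subleading factors that are absorbed via elementary inequalities such as $\tr(E\rho)^{1/2} f \leq (1 + f^2)/2$ (AM-GM combined with $\tr(E\rho) \leq 1$). Summing all $24$ contributions and dividing by $d(d+1)(d+2)(d+3) \geq d^4$ yields $\E_\psi T_3^2 \leq C\,\tr(E\rho)^2 (1 + f^2)/d^4$ for an absolute constant $C \leq 24$. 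Substituting this into the second piece of the $X_k^2$ bound with $\epsilon = 1/2$ then produces the second term $72 s^4 (1+f^2)/d^4$, completing the proof.
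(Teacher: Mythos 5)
Your proof is correct, and it takes a genuinely cleaner route than the paper's. Both proofs start from the same expansion $X_k\tr(E\rho)=cT_1+c^*T_1^*+|c|^2T_3$. The paper then expands $X_k^2$ into the full set of cross products and bounds four expectations separately — $|\E_\psi T_1^{*2}|$, $\E_\psi|T_1|^2$, $|\E_\psi T_1^*T_3|$, and $\E_\psi T_3^2$ — via three dedicated Haar-moment lemmas (Lemmas~\ref{lemma:haar_2}, \ref{lemma:haar_3}, \ref{lemma:haar_4} in the appendix) plus an AM--GM step for the odd cross term. Your observation that the rank-one structure gives the \emph{pointwise} identity $|T_1|^2 = T_3\,\tr(E\rho)$ collapses this bookkeeping: after Young's inequality with $\epsilon=1/2$ you only need $\E_\psi T_3 = \tr(E\rho)(1+f)/[d(d+1)]$ (second moment) and $\E_\psi T_3^2 \leq 24\tr^2(E\rho)(1+f^2)/d^4$ (fourth moment, identical to the paper's Lemma~\ref{lemma:haar_4}). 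This replaces the paper's Lemmas~\ref{lemma:haar_2} and~\ref{lemma:haar_3} with one line of algebra, and the constants $6$ and $72$ come out in exactly the same way. The trade-off is that your route is more tightly wedded to the rank-one assumption up front, while the paper's lemmas are phrased so the rank-one reduction enters only in their proofs.

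One small caution about the fourth-moment step: the description of the 21 non-dominant permutation terms as ``subleading'' is not accurate — they are generically of the same order, and each one must be bounded by $\|M\|_2^4 + |\tr M|^4 = \tr^2(E\rho)(f^2+1)$ to produce the constant $24$. Also, the tool $\|M\|_\infty \leq 1$ cannot by itself give the correct $\tr^2(E\rho)$ scaling on the cyclic traces (for example, $\tr\bigl((MM^\dagger)^2\bigr) \leq \|M\|_\infty^2\|M\|_2^2 \leq \tr(E\rho)f$ is not $\leq \tr^2(E\rho)(f^2+1)$ when $\tr(E\rho)$ is small); what is needed is the chain $\|X\|_2 \leq \|X\|_1$ and $\|XY\|_1 \leq \|X\|_2\|Y\|_2$ applied as in the paper, yielding $\|M\|_2^m|\tr M|^n$ with $m+n=4$. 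Since you also cite $|\tr(M^2)| \leq \tr(MM^\dagger)$ and Hölder, and you land on the right constant, this is an imprecision in the sketch rather than a gap in the argument.
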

    \begin{proof}
    Sec.~\ref{appsubsubsec:X_2moment}
    \end{proof}

    \begin{lemma}
    \label{lemma:orth_POVM}
        \begin{align*}
            \pr_{\bo\sim p_0}(A_\alpha)\geq1-\frac{d}{\alpha}
        \end{align*}
    \end{lemma}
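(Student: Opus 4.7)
The plan is to apply Markov's inequality to the random variable $S_N \coleq \sum_{k=1}^N f(E_{o_k}^{\bo_{<k}}, \rho^{\bo_{<k}})$ under the distribution $p_0$. So the first task is to compute $\E_{\bo\sim p_0} S_N$, which I expect will evaluate cleanly to $Nd$ thanks to a fortunate cancellation.

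The key observation is that under $H_0$ we have $\mathcal{E}_U = \mathcal{E}_I$, hence $p_0(o_k\mid \bo_{<k}) = w_{o_k}^{\bo_{<k}}\tr(E_{o_k}^{\bo_{<k}}\rho^{\bo_{<k}})$, and the denominator $\tr(E_{o_k}^{\bo_{<k}}\rho^{\bo_{<k}})$ in the definition of $f$ is precisely what cancels the probability weight. Concretely, by the tower property of expectation,
\begin{align*}
\E_{\bo\sim p_0}\!\bigl[f(E_{o_k}^{\bo_{<k}},\rho^{\bo_{<k}})\mid \bo_{<k}\bigr]
&=\sum_{o_k}w_{o_k}^{\bo_{<k}}\tr(E_{o_k}^{\bo_{<k}}\rho^{\bo_{<k}})\cdot\frac{\tr\bigl(\tr_S(E_{o_k}^{\bo_{<k}})\tr_S(\rho^{\bo_{<k}})\bigr)}{\tr(E_{o_k}^{\bo_{<k}}\rho^{\bo_{<k}})}\\
&=\tr\!\Bigl(\tr_S\!\bigl(\textstyle\sum_{o_k}w_{o_k}^{\bo_{<k}}E_{o_k}^{\bo_{<k}}\bigr)\tr_S(\rho^{\bo_{<k}})\Bigr).
\end{align*}
Next I would invoke the POVM completeness relation $\sum_{o_k}w_{o_k}^{\bo_{<k}}E_{o_k}^{\bo_{<k}}=I\otimes I_{\text{anc}}$, so its system-side partial trace is $d\cdot I_{\text{anc}}$. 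Since $\tr(I_{\text{anc}}\tr_S(\rho^{\bo_{<k}}))=\tr(\rho^{\bo_{<k}})=1$, the conditional expectation is exactly $d$, independent of the history $\bo_{<k}$.

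Summing over $k$ by iterated expectation gives $\E_{\bo\sim p_0}S_N=Nd$, and since $S_N\geq0$, Markov's inequality yields
\begin{align*}
\pr_{\bo\sim p_0}\bigl(S_N>\alpha N\bigr)\leq\frac{\E_{\bo\sim p_0}S_N}{\alpha N}=\frac{d}{\alpha},
\end{align*}
which is equivalent to the claim $\pr_{\bo\sim p_0}(A_\alpha)\geq 1-d/\alpha$.

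There is no real obstacle here; the only subtlety is making sure the partial trace $\tr_S$ is over the system register (not the ancilla) so that the completeness relation delivers the factor of $d$ and the overall $\tr(\rho^{\bo_{<k}})=1$ normalization survives on the ancilla side. Once that identification is made explicit, the proof is a two-line application of the tower property followed by Markov.
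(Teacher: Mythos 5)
Your proof is correct and follows essentially the same route as the paper: Markov's inequality applied to $\sum_{k=1}^N f(E_{o_k}^{\bo_{<k}},\rho^{\bo_{<k}})$, with the conditional expectation per step evaluated by cancelling $\tr(E_{o_k}^{\bo_{<k}}\rho^{\bo_{<k}})$ against the probability weight and then invoking POVM completeness to produce the factor of $d$. The only cosmetic difference is that you organize the calculation explicitly as a tower-property argument over the filtration, whereas the paper unrolls the same telescoping sum inline.
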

    \begin{proof}
        We show that
        \begin{align}
            \pr_{\bo\sim p_0|\bo_{<k}}\left(\sum_{k=1}^Nf(E_{o_k}^{\bo_{<k}},\rho^{\bo_{<k}})>\alpha N\right)\leq\frac{d}{\alpha}.
        \end{align}
        Recall that
        \begin{align}
            p_0(o_k|\bo_{<k})&=w_{o_{k}}^{{\bo_{<k}}}\tr \left(E_{o_{k}}^{{\bo_{<k}}}\rho^{\bo_{<k}}\right)
        \end{align}
        and
        \begin{align}
            \sum_{o_k}w_{o_{k}}^{{\bo_{<k}}}E_{o_{k}}^{{\bo_{<k}}}=I\otimes I_{\text{anc}}
        \end{align}
        holds for the POVM set $\{w_{o_{k}}^{{\bo_{<k}}}E_{o_{k}}^{{\bo_{<k}}}\}_{o_k}$.
        From Markov's inequality, we obtain
        \begin{align}
            &\pr_{\bo\sim p_0}\left(\sum_{k=1}^Nf(E_{o_k}^{\bo_{<k}},\rho^{\bo_{<k}})>\alpha N\right)\notag\\
            &\quad\leq\frac{1}{\alpha N}
            \E_{\bo\sim p_0}\sum_{k=1}^Nf(E_{o_k}^{\bo_{<k}},\rho^{\bo_{<k}})\\
            &\quad=\frac{1}{\alpha N}\sum_{\bo}p_0(\bo)\sum_{k=1}^Nf(E_{o_k}^{\bo_{<k}},\rho^{\bo_{<k}})\\
            &\quad=\frac{1}{\alpha N}\sum_{k=1}^N\sum_{o_k}p_0(o_k)f(E_{o_k}^{\bo_{<k}},\rho^{\bo_{<k}})\\
            &\quad=\frac{1}{\alpha N}\sum_{k=1}^N\sum_{\bo_{<k}}\sum_{o_k|\bo_{<k}}p_0(o_k|\bo_{<k})p_0(\bo_{<k})f(E_{o_k}^{\bo_{<k}},\rho^{\bo_{<k}})\\
            &\quad=\frac{1}{\alpha N}\sum_{k=1}^N\sum_{\bo_{<k}}p_0(\bo_{<k})\sum_{o_k|\bo_{<k}}p_0(o_k|\bo_{<k})f(E_{o_k}^{\bo_{<k}},\rho^{\bo_{<k}})\\
            &\quad=\frac{1}{\alpha N}\sum_{k=1}^N\sum_{\bo_{<k}}p_0(\bo_{<k})\sum_{o_k|\bo_{<k}}w_{o_{k}}^{{\bo_{<k}}}\tr \left(E_{o_{k}}^{{\bo_{<k}}}\rho^{\bo_{<k}}\right)\frac{\tr\left(\tr_S(E_{o_{k}}^{{\bo_{<k}}})\tr_S(\rho^{\bo_{<k}})\right)}{\tr\left(E_{o_{k}}^{{\bo_{<k}}}\rho^{\bo_{<k}}\right)}\\
            &\quad=\frac{1}{\alpha N}\sum_{k=1}^N\sum_{\bo_{<k}}p_0(\bo_{<k})\tr\left(\tr_S\left(\sum_{o_k|\bo_{<k}}w_{o_{k}}^{{\bo_{<k}}}E_{o_{k}}^{{\bo_{<k}}}\right)\tr_S(\rho^{\bo_{<k}})\right)\\
            &\quad=\frac{1}{\alpha N}\sum_{k=1}^N\sum_{\bo_{<k}}p_0(\bo_{<k})\tr\left(\tr_S\left(I\otimes I_{\text{anc}}\right)\tr_S(\rho^{\bo_{<k}})\right)\\
            &\quad=\frac{d}{\alpha},
        \end{align}
        which completes the proof.
    \end{proof}
    
    \noindent We now derive the lower bound of the fraction of the sets $A_\alpha\cap B_{\beta,\psi}$ and $A_\alpha\cap C_{\gamma,\psi}$.
    We start with a lower bound of the fraction of $A_\alpha$, given as follows:
    \begin{align}
        \pr_{\bo\sim p_0}(A_\alpha)
        &\geq 1-\frac{d}{\alpha} 
            && \text{Lemma \ref{lemma:orth_POVM}}\\
        &=0.99.
            && \alpha=100d \label{eq:A_alpha}
    \end{align}
    Next, we obtain the lower bound of the fraction of $A_{\alpha}\cap B_{\beta,\psi}$ by combining the following three inequalities.
    First, denoting 
    \begin{align}
        \mu_k(\psi)&\coleq\E_{\bo\sim p_0|A_\alpha}X_k(\psi,\bo),\\
        \sigma_k^2(\psi)&\coleq\text{Var}_{\bo\sim p_0|A_\alpha}X_k(\psi,\bo),
    \end{align}
    we obtain
    \begin{align}
        &\pr_{\bo\sim p_0|A_\alpha}(X_k(\psi,\bo)<-\beta)\notag\\
        &\quad=\pr_{\bo\sim p_0|A_\alpha}(X_k(\psi,\bo)-\mu_k(\psi)<-\beta-\mu_k(\psi))\\
        &\quad=\pr_{\bo\sim p_0|A_\alpha}(X_k(\psi,\bo)-\mu_k(\psi)<-\beta-\mu_k(\psi))(\ind_\psi(\mu_k(\psi)\geq-\beta/2)+\ind_\psi(\mu_k(\psi)<-\beta/2))\\
        &\quad\leq\pr_{\bo\sim p_0|A_\alpha}(|X_k(\psi,\bo)-\mu_k(\psi)|>\beta/2)\ind_\psi(\mu_k(\psi)\geq-\beta/2)+\ind_\psi(\mu_k(\psi)<-\beta/2) \\
        &\quad\leq\pr_{\bo\sim p_0|A_\alpha}(|X_k(\psi,\bo)-\mu_k(\psi)|>\beta/2)+\ind_\psi(\mu_k(\psi)<-\beta/2) \\
        &\quad\leq\frac{4\sigma_k^2(\psi)}{\beta^2}+\ind_\psi(\mu_k(\psi)<-\beta/2), \label{eq:X|A}
    \end{align}
    where the last line holds from Chebyshev's inequality.
    Here, we define the conditional probability $\pr_{\bo\sim p|A}(B)$ for a distribution $p$, a set of the measurement outcome $A$, and an event $B$ as follows:
    \begin{align}
        \pr_{\bo\sim p|A}(B)=\frac{\sum_{\bo\in A\cap B}p(\bo)}{\sum_{\bo\in A}p(\bo)}.
    \end{align}
    Second, for $\bo\in A_\alpha$, we obtain
    \begin{align}
        &\sum_{k=1}^N\E_\psi X_k^2(\psi,\bo)\notag\\
        &\quad\leq\frac{6s^2}{d^2}\sum_{k=1}^N\left(f(E_{o_{k}}^{\bo_{<k}},\rho^{\bo_{<k}})+1\right)+\frac{72s^4}{d^4}\sum_{k=1}^N\left(f^2(E_{o_{k}}^{\bo_{<k}},\rho^{\bo_{<k}})+1\right)
            && \text{Lemma }\ref{lemma:X_2moment}\\
        &\quad\leq\frac{6s^2}{d^2}\sum_{k=1}^Nf(E_{o_{k}}^{\bo_{<k}},\rho^{\bo_{<k}}) +\frac{6s^2N}{d^2}+\frac{72 s^4}{ d^4}\left(\sum_{k=1}^Nf(E_{o_{k}}^{\bo_{<k}},\rho^{\bo_{<k}})\right)^2+\frac{72 s^4N}{ d^4}\\
        &\quad\leq\frac{606s^2N}{d}+\frac{720072 s^4N^2}{d^2}\\
        &\quad\equiv g(s, d, N),\label{eq:exp_X^2}
    \end{align}
    where the third line follows from $\sum_{k=1}^Nf(E_{o_{k}}^{\bo_{<k}},\rho^{\bo_{<k}})\leq100dN$.
    Lastly, to handle the case $\mu_k(\psi)<-\beta/2$ in Eq.~\eqref{eq:X|A}, we obtain
    \begin{align}
        &\sum_{k=1}^N\pr_\psi(\mu_k(\psi)<-\beta/2)\notag\\
        &\quad=\sum_{k=1}^N\pr_\psi(\mu_k(\psi)-\E_\psi\mu_k(\psi)<-\beta/2-\E_\psi\mu_k(\psi))\\
        &\quad=\sum_{k=1}^N\pr_\psi(\mu_k(\psi)-\E_\psi\mu_k(\psi)<-\beta/2-\E_{\bo\sim p_0|A_\alpha}\E_\psi X_k(\psi,\bo))\\
        &\quad\leq\sum_{k=1}^N\pr_\psi\left(\mu_k(\psi)-\E_\psi\mu_k(\psi)<-\beta/2+\frac{ s^2}{d}\right)
            && \text{Lemma \ref{lemma:X_1moment}}\\
        &\quad\leq\sum_{k=1}^N\pr_\psi\left(\mu_k(\psi)-\E_\psi\mu_k(\psi)<-\beta/4\right)
            && \text{Eq.~\eqref{eq:beta_condition}}\\
        &\quad\leq\sum_{k=1}^N\pr_\psi\left(|\mu_k(\psi)-\E_\psi\mu_k(\psi)|>\beta/4\right)\\
        &\quad\leq\sum_{k=1}^N\frac{16(\E_\psi\mu_k^2(\psi)-(\E_\psi\mu_k(\psi))^2)}{\beta^2}
            && \text{Chebyshev's inequality}\\
        &\quad\leq\sum_{k=1}^N\frac{16\E_\psi\mu_k^2(\psi)}{\beta^2}\\
        &\quad=\sum_{k=1}^N\frac{16\E_\psi(\E_{\bo\sim p_0|A_\alpha}X_k(\psi,\bo))^2}{\beta^2}\\
        &\quad\leq\sum_{k=1}^N\frac{16\E_{\bo\sim p_0|A_\alpha}\E_\psi X_k^2(\psi,\bo)}{\beta^2}\\
        &\quad=\frac{16\E_{\bo\sim p_0|A_\alpha}\sum_{k=1}^N\E_\psi X_k^2(\psi,\bo)}{\beta^2}\\
        &\quad\leq\frac{16g(s,d,N)}{\beta^2}. \label{eq:mu}
            && \text{Eq.~\eqref{eq:exp_X^2}}
    \end{align}
    Summing up, we obtain the lower bound on the fraction of $A_\alpha\cap B_{\beta,\psi}$ as follows:
    \begin{align}
        &\E_\psi\pr_{\bo\sim p_0}(A_\alpha\cap B_{\beta,\psi})\notag\\
        &\quad=\pr_{\bo\sim p_0}(A_\alpha)\E_\psi\pr_{\bo\sim p_0|A_\alpha}(B_{\beta,\psi})\\
        &\quad\geq0.99\E_\psi\pr_{\bo\sim p_0|A_\alpha}(B_{\beta,\psi})
            && \text{Eq.~\eqref{eq:A_alpha}}\\
        &\quad=0.99\E_\psi\pr_{\bo\sim p_0|A_\alpha}(X_k(\psi,\bo)\geq-\beta\text{ for all }1\leq k\leq N)\\
        &\quad=0.99(1-\E_\psi\pr_{\bo\sim p_0|A_\alpha}(\exists k\textit{ s.t. }X_k(\psi,\bo)<-\beta))\\
        &\quad\geq0.99\left(1-\E_\psi\sum_{k=1}^N\pr_{\bo\sim p_0|A_\alpha}(X_k(\psi,\bo)<-\beta)\right)
            && \text{Union bound}\\
        &\quad\geq0.99\left(1-\E_\psi\sum_{k=1}^N\left(\frac{4\sigma_k^2(\psi)}{\beta^2}+\ind_\psi(\mu_k(\psi)<-\beta/2)\right)\right)
            && \text{Eq.~\eqref{eq:X|A}}\\
        &\quad=0.99\left(1-\sum_{k=1}^N\left(\frac{4\E_\psi\text{Var}_{\bo\sim p_0|A_\alpha}X_k(\psi,\bo)}{\beta^2}+\pr_\psi(\mu_k(\psi)<-\beta/2)\right)\right)\\
        &\quad\geq0.99\left(1-\sum_{k=1}^N\frac{4\E_{\bo\sim p_0|A_\alpha}\E_\psi X_k^2(\psi, \bo)}{\beta^2}-\frac{16g(s,d,N)}{\beta^2}\right)
            && \text{Eq.~\eqref{eq:mu}}\\
        &\quad\geq0.99\left(1-\frac{20g(s,d,N)}{\beta^2}\right)
            && \text{Eq.~\eqref{eq:exp_X^2}}\\
        &\quad\geq 1-\left(0.01+\frac{20g(s,d,N)}{\beta^2}\right)\\
        &\quad\equiv1-F_1 \label{eq:F1}
    \end{align}
    Finally, we obtain the lower bound on the fraction of $A_\alpha\cap C_{\gamma,\psi}$ as follows:
    \begin{align}
        &\E_\psi\pr_{\bo\sim p_0}(A_\alpha\cap C_{\gamma,\psi})\notag\\
        &\quad=\pr_{\bo\sim p_0}(A_\alpha)\E_\psi\pr_{\bo\sim p_0|A_\alpha}(C_{\gamma,\psi})\\
        &\quad\geq0.99\E_\psi\pr_{\bo\sim p_0|A_\alpha}(C_{\gamma,\psi})
            && \text{Eq.~\eqref{eq:A_alpha}}\\
        &\quad=0.99\E_\psi\pr_{\bo\sim p_0|A_\alpha}\left(\sum_{k=1}^N \E_{\bo\sim p_0|\bo_{<k}}X_k^2(\psi,\bo_{\leq k})\leq \gamma\right)\\
        &\quad=0.99\left(1-\E_\psi\pr_{\bo\sim p_0|A_\alpha}\left(\sum_{k=1}^N \E_{\bo\sim p_0|\bo_{<k}}X_k^2(\psi,\bo_{\leq k})> \gamma\right)\right)\\
        &\quad\geq0.99\left(1-\frac{\sum_{k=1}^N \E_{\bo\sim p_0|A_\alpha}\E_{\bo\sim p_0|\bo_{<k}}\E_\psi X_k^2(\psi,\bo_{\leq k})}{\gamma}\right)
            && \text{Markov's inequality}\\
        &\quad\geq0.99\left(1-\frac{g(s,d,N)}{\gamma}\right)
            && \text{Eq.~\eqref{eq:exp_X^2}}\\
        &\quad\geq1-\left(0.01+\frac{g(s,d,N)}{\gamma}\right)\\
        &\quad\equiv1-F_2. \label{eq:F2}
    \end{align}
    We collect the inequalities and obtain the following upper bound:
    \begin{align}
        \E_\psi\pr_{\bo\sim p_0}((G_\psi)^c)\leq F_1+F_2.\label{eq:F1F2}
    \end{align}

\subsection{Proof of Lemma \ref{lemma:F3F4}}
    \label{appsubsec:F3F4}
    We employ the following Martingale concentration lemma:
    \begin{lemma} (\cite{chen2023efficient})
        \label{lemma:martingale}
        Let 
        \begin{align*}
            Y_k(\psi,\bo)=\log(1+X_k(\psi,\bo))\ind_{\bo}( B_{\beta,\psi}).
        \end{align*}
        Then for any $\gamma,\eta>0$, we have
        \begin{align}
            \pr_{\bo}\left(\sum_{k=1}^N Y_k(\psi,\bo)\leq-\left(1+\frac{1}{\beta}\right)\sum_{k=1}^N \E_{\bo\sim p_0|\bo_{<k}}X_k^2(\bo_{\leq k})-\eta\text{ and }\sum_{k=1}^N \E_{\bo\sim p_0|\bo_{<k}}X_k^2(\bo_{\leq k})\leq\gamma\right)
            &\leq\exp\left(-\frac{\eta^2}{4\gamma+2\beta\eta/3}\right).
        \end{align}
    \end{lemma}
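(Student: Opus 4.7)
The plan is to establish this as an instance of Freedman's Bernstein inequality for martingales with one-sidedly bounded increments, with the constants tuned to produce the stated $(1+1/\beta)$ prefactor and denominator $4\gamma+2\beta\eta/3$. First I would verify the martingale structure under $p_0$: because $1+X_k = p_{1,\psi}(o_k|\bo_{<k})/p_0(o_k|\bo_{<k})$ has conditional mean $1$, the sequence $X_k$ is an adapted martingale-difference sequence for $\mathcal{F}_k = \sigma(\bo_{\leq k})$. Next I would dispose of the non-adapted global indicator $\ind_\bo(B_{\beta,\psi})$ attached to $Y_k$ by a positivity argument: if this indicator vanishes then $\sum_k Y_k = 0$, so the event $\{\sum_k Y_k \leq -(1+1/\beta)\sum_k \E[X_k^2|\bo_{<k}] - \eta\}$ would force $\sum_k \E[X_k^2|\bo_{<k}] \leq -\eta/(1+1/\beta) < 0$, which is impossible since variances are nonnegative and $\eta>0$. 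Hence the probability in question is supported on $B_{\beta,\psi}$, where $Y_k = \log(1+X_k)$ and $X_k \geq -\beta$ for every $k$, which is exactly the one-sided boundedness required by the Bennett/Freedman exponential estimates.

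The analytic core is the pointwise inequality
\begin{align*}
-\log(1+x) \leq -x + (1 + 1/\beta)\,x^2 \quad \text{for all } x \geq -\beta,
\end{align*}
which I would verify by setting $F(x) = -\log(1+x) + x - (1+1/\beta)x^2$ and checking $F(0)=F'(0)=0$ together with $F''(x) = (1+x)^{-2} - 2(1+1/\beta) \leq 0$ throughout $[-\beta,\infty)$ whenever $\beta \leq 1/2$ (indeed $2(1-\beta)^2(1+\beta) \geq \beta$ holds there), so $F$ is concave with a double zero at $x=0$ and $F\leq 0$ on the relevant half-line. This is the ingredient that fixes the constant $(1+1/\beta)$. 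Combining it with a Bennett-type conditional MGF estimate of the form $\E[e^{-\lambda X_k}\mid\bo_{<k}] \leq \exp(\lambda^2 \E[X_k^2\mid\bo_{<k}]/(2(1-\lambda\beta/3)))$, valid for $\lambda \in (0,3/\beta)$ by virtue of the one-sided bound $X_k \geq -\beta$, produces the exponential supermartingale
\begin{align*}
M_k(\lambda) = \exp\!\Bigl(-\lambda \sum_{j\leq k} Y_j - \lambda(1+1/\beta)\sum_{j\leq k}\E[X_j^2\mid\bo_{<j}] - \psi(\lambda)\sum_{j\leq k}\E[X_j^2\mid\bo_{<j}]\Bigr),
\end{align*}
where $\psi(\lambda) = \lambda^2/(2(1-\lambda\beta/3))$ is the Bernstein exponential factor. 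Introducing the stopping time $\tau = \inf\{k : \sum_{j\leq k}\E[X_j^2\mid\bo_{<j}] > \gamma\}$ and applying Ville's inequality to $M_{k\wedge\tau}(\lambda)$ yields
\begin{align*}
\pr\!\left(\sum_{j=1}^N (-Y_j) \geq (1+1/\beta)\sum_{j=1}^N \E[X_j^2\mid\bo_{<j}] + \eta,\ \sum_j \E[X_j^2\mid\bo_{<j}] \leq \gamma\right) \leq \exp(\psi(\lambda)\gamma - \lambda\eta),
\end{align*}
since on the variance-cutoff event $\tau > N$ so that stopped and unstopped sums agree. Optimizing with $\lambda = \eta/(2\gamma + \beta\eta/3) \in (0,3/\beta)$ reproduces the Bernstein denominator $4\gamma + 2\beta\eta/3$ and yields the claimed bound.

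The main obstacle will be coordinating the three truncation mechanisms---the non-adapted global indicator $\ind(B_{\beta,\psi})$, the variance-cutoff event $\{\sum_k\E[X_k^2\mid\bo_{<k}] \leq \gamma\}$, and the one-sided increment constraint $X_k \geq -\beta$---with the adapted stopping-time machinery underlying Freedman's inequality. The first is resolved by the positivity observation above, the second by the stopping time $\tau$, and the third by restricting to $B_{\beta,\psi}$. Reproducing the precise Bernstein constants in the denominator requires careful bookkeeping in the Bennett MGF estimate (the factor of $4$ rather than $2$ on $\gamma$ reflects the need to absorb both the linear fluctuation of $X_k$ and the quadratic one of $X_k^2 - \E[X_k^2\mid\bo_{<k}]$ into a single exponential supermartingale), but the overall architecture is the standard Freedman--Bernstein template with the tailored analytic inequality $-\log(1+x) \leq -x + (1+1/\beta)x^2$ inserted to produce the exact coefficient on the predictable quadratic variation.
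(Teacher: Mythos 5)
First, a point of reference: the paper does not prove this lemma at all---it is imported verbatim from Ref.~\cite{chen2023efficient}---so there is no in-paper proof to compare against. Judged on its own terms, your proposal has the right skeleton: the positivity observation showing the event is supported on $B_{\beta,\psi}$, a second-order inequality for $\log(1+x)$ producing the $(1+1/\beta)$ coefficient, and a Freedman-type bound with a variance cutoff are all genuine ingredients. However, the construction of the exponential supermartingale has a real gap.

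The problem is the step $\E[e^{-\lambda X_k}\mid\bo_{<k}]\leq\exp\bigl(\lambda^2\E[X_k^2\mid\bo_{<k}]/(2(1-\lambda\beta/3))\bigr)$, claimed ``by virtue of the one-sided bound $X_k\geq-\beta$.'' That bound is not a pointwise property of the conditional law of $X_k$ given $\bo_{<k}$: since $X_k=p_{1,\psi}(o_k\mid\bo_{<k})/p_0(o_k\mid\bo_{<k})-1$, all that holds pointwise is $X_k\geq-1$, and $B_{\beta,\psi}$ is a non-adapted event about the realized trajectory, not a truncation of the conditional distribution. Restricting the \emph{event} whose probability you bound to $B_{\beta,\psi}$ does not restrict the outcomes over which the conditional expectation in the supermartingale property is taken, so your $M_k(\lambda)$ is not a supermartingale. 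A second, independent issue: even granting $X_k\geq-\beta$, passing from $\E[e^{-\lambda Y_k}\mid\bo_{<k}]$ to $\E[e^{-\lambda X_k}\mid\bo_{<k}]$ via $-\log(1+x)\leq-x+(1+1/\beta)x^2$ leaves a \emph{random} factor $e^{\lambda(1+1/\beta)X_k^2}$ inside the expectation; it cannot be pulled out as $e^{\lambda(1+1/\beta)\E[X_k^2\mid\bo_{<k}]}$, and $X_k^2$ has no upper bound (the likelihood ratio can be arbitrarily large), so no Bernstein MGF estimate is available for that piece. The standard repair is to run Freedman's inequality directly on the adapted, per-step-truncated increments $\tilde Y_k\coleq\log(1+X_k)\ind(X_k\geq-\beta)$, which agree summand-by-summand with $Y_k$ on $B_{\beta,\psi}$: their negatives are bounded above by $-\log(1-\beta)$, their conditional second moments satisfy $\E[\tilde Y_k^2\mid\bo_{<k}]\leq(1-\beta)^{-2}\E[X_k^2\mid\bo_{<k}]$ (which accounts for the $4\gamma$), and their conditional means satisfy $\E[-\tilde Y_k\mid\bo_{<k}]\leq(1+1/\beta)\E[X_k^2\mid\bo_{<k}]$, using $\E[X_k\ind(X_k<-\beta)\mid\bo_{<k}]\leq 0$. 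This last estimate is exactly where your Taylor inequality (which is correct for $\beta\leq1/2$, and for the paper's choice $\beta=0.1$) does its work---alternatively one uses $-\log(1+x)\leq-x+x^2$ together with the Chebyshev-type bound $\E[|X_k|\ind(X_k<-\beta)]\leq\E[X_k^2]/\beta$ for the truncation bias, which is where the $1/\beta$ more transparently originates. In short, your analytic inequality should be applied to the conditional mean of the truncated increment, not used to justify an MGF bound for the untruncated $X_k$.
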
    
    \noindent From
    \begin{align}
        &\pr_{\bo\sim p_0|G_\psi}\left(\sum_{k=1}^N Y_k(\psi,\bo)\leq-\left(1+\frac{1}{\beta}\right)\sum_{k=1}^N \E_{\bo\sim p_0|\bo_{<k}}X_k^2(\bo_{\leq k})-\eta\text{ and }\sum_{k=1}^N \E_{\bo\sim p_0|\bo_{<k}}X_k^2(\bo_{\leq k})\leq\gamma\right)\notag\\
        &\quad\geq\pr_{\bo\sim p_0|G_\psi}\left(\sum_{k=1}^N Y_k(\psi,\bo)\leq-\left(1+\frac{1}{\beta}\right)\gamma-\eta\right)
            && \bo\in C_{\gamma,\psi}\\
        &\quad=\pr_{\bo\sim p_0|G_\psi}\left(\sum_{k=1}^N \log(1+X_k(\psi,\bo))\leq-\left(1+\frac{1}{\beta}\right)\gamma-\eta\right)
            && \bo\in B_{\beta,\psi}\\
        &\quad=\pr_{\bo\sim p_0|G_\psi}\left(L(\psi,\bo)\leq\exp\left(-\left(1+\frac{1}{\beta}\right)\gamma-\eta\right)\right)
    \end{align}
    and Lemma \ref{lemma:martingale}, we obtain the probabilistic upper bound of $L(\psi,\bo)$ as
    \begin{align}
        \pr_{\bo\sim p_0|G_\psi}\left(L(\psi,\bo)\leq\exp\left(-\left(1+\frac{1}{\beta}\right)\gamma-\eta\right)\right)
        &\leq\exp\left(-\frac{\eta^2}{4\gamma+2\beta\eta/3}\right)
    \end{align}
    or equivalently,
    \begin{align}
    \label{eq:martingale_result}
        \pr_{\bo\sim p_0|G_\psi}\left(L(\psi,\bo)\leq1-F_3\right)
        &\leq F_4
    \end{align}
    with
    \begin{align}
        F_3&\equiv1-\exp\left(-\left(1+\frac{1}{\beta}\right)\gamma-\eta\right),\\
        F_4&\equiv\exp\left(-\frac{\eta^2}{4\gamma+2\beta\eta/3}\right).
    \end{align}
    From this, we obtain the upper bound as follows:
    \begin{align}
        &\E_\psi\E_{\bo\sim p_0|G_\psi}\max(0, 1- L(\psi,\bo))\notag\\
        &\quad=\E_\psi\E_{\bo\sim p_0|G_\psi}\max(0, 1- L(\psi,\bo))\ind_{\psi,\bo}(L(\psi,\bo)>1-F_3)\notag\\
        &\quad\quad+\E_\psi\E_{\bo\sim p_0|G_\psi}\max(0, 1- L(\psi,\bo))\ind_{\psi,\bo}(L(\psi,\bo)\leq1-F_3)\\
        &\quad< \E_\psi\E_{\bo\sim p_0|G_\psi}F_3\ind_{\psi,\bo}(L(\psi,\bo)>1-F_3)+\E_\psi\E_{\bo\sim p_0|G_\psi}\ind_{\psi,\bo}(L(\psi,\bo)\leq1-F_3)
            &&L(\psi,\bo)\geq0\\
        &\quad= \E_\psi F_3\pr_{\bo\sim p_0|G_\psi}(L(\psi,\bo)>1-F_3)+\E_\psi\pr_{\bo\sim p_0|G_\psi}(L(\psi,\bo)\leq1-F_3)\\
        &\quad\leq F_3+F_4.\label{eq:F3F4}
            &&\text{Eq.~\eqref{eq:martingale_result}}
    \end{align}

\subsection{Proof of Lemma \ref{lemma:qsvtpoly}}
\label{appsubsec:qsvtpoly}

Conditions (1), (2), and (4) are straightforward.
Thus, we show that conditions (3) and (5) are satisfied.
We first show that condition (5) is satisfied.
The Chebyshev polynomial has the following definition:
\begin{align}
    T_n(x)&\coleq\frac{1}{2}\left(\left(x-\sqrt{x^2-1}\right)^n+\left(x+\sqrt{x^2-1}\right)^n\right).
\end{align}
For even $n$, we can rewrite the definition as follows:
\begin{align}
    T_n(x)&\equiv
    \begin{cases}
        \cos(n\cos^{-1} x) &|x|\leq 1\\
        \cosh(n\cosh^{-1} |x|) &|x|>1
    \end{cases}
\end{align}
The first form implies that for $x\in[0,1-\delta]$, the numerator of $P(x)$ is bounded by 1 as $|T_n(x/(1-\delta))|\leq 1$.
Thus, showing that the denominator is sufficiently large, \textit{i.e.} $|T_n(1/(1-\delta)|\geq1/\Delta$, is sufficient for condition (5). 
From $\cosh x\equiv(e^x+e^{-x})/2$, we have
\begin{align}
    T_n(1/(1-\delta))
    &\equiv\cosh(n\cosh^{-1}(1/(1-\delta)))\\
    &>\frac{1}{2}\exp(n\cosh^{-1}(1/(1-\delta))).
\end{align}
The Taylor expansion gives us
\begin{align}
    \cosh \sqrt{x}&=1+\frac{x}{2!}+\frac{x^2}{4!}+\dots\\
    &\leq1+x
\end{align}
for $x\in[0,1]$, leading to
\begin{align}
    \cosh^{-1}(1/(1-\delta))
    &\geq\cosh^{-1}(1+\delta)\\
    &\geq\sqrt{\delta},
\end{align}
where the first line follows from the monotonicity of $\cosh^{-1}.$
Thus, we have the lower bound of the denominator $|T_n(1/(1-\delta)|$ as
\begin{align}
    \frac{1}{2}\exp(n\cosh^{-1}(1/(1-\delta))
    &\geq\frac{1}{2}\exp\left(n\sqrt{\delta}\right)\\
    &=\frac{1}{2}\exp\left(2\ceil{1/\sqrt{\delta}\log(2/\Delta)}\sqrt{\delta}\right)\\
    &\geq\frac{1}{2}\exp\left(\log(2/\Delta)\right)\\
    &=\frac{1}{\Delta},
\end{align}
which establishes the condition (5).

We now show that condition (3) is satisfied.
In the domain $x\in[0,1-\delta]$, condition (5) implies condition (3).
Thus, it is sufficient to show that condition (3) holds on $x\in(1-\delta, 1]$.
In such a regime, the Chebyshev polynomial $T_n(x)$ is an increasing function of $x$, as $\cosh(x)$ and $\cosh^{-1}(x)$ are both monotonically increasing functions.
Therefore, $P(x)$ is an increasing function, leading to $0\leq P(x)\leq P(1)=1$.
Thus, $|P(x)|\leq1$ for $x\in[0,1]$ holds.
Since $P$ is even, this also holds in $x\in[-1,1]$, thereby establishing condition (3).

\subsection{Proof of Lemma \ref{lemma:qsvtdelta}}
\label{appsubsec:qsvtdelta}

We show that
\begin{align}
    \pr_\psi(1-|\bra{\psi}U^\dagger\ket{\psi}|<\delta)\leq\frac{8d\delta}{\varep^2}
\end{align}
by finding a lower bound of the random variable $1-|\bra{\psi}U^\dagger\ket{\psi}|$.
Since $\ket{\psi}$ is a Haar-random state, we can consider $U^\dagger$ as a diagonal operator with $e^{i\theta_1},\dots e^{i\theta_d}$ as the diagonal elements without loss of generality.
Thus, we can write $U^\dagger$ in the computational basis as follows:
\begin{align}
    U^\dagger=\sum_{j=1}^d e^{i\theta_j}\ketbra{j}{j}.
\end{align}
Then, writing $c_j\coleq|\braket{\psi}{j}|^2$, we have
\begin{align}
    |\bra{\psi}U^\dagger\ket{\psi}|
    &=\left|\sum_{j=1}^d c_je^{i\theta_j}\right|.
\end{align}
From Lemma \ref{lemma:distance}, we know that there exist $\theta_k$ and $\theta_l$ for some $1\leq k,l\leq d$, which has a gap no smaller than $s=2\sin^{-1}(\varep/2)$ on the unit circle.
Collecting these, we have
\begin{align}
    1-|\bra{\psi}U^\dagger\ket{\psi}|
    &=1-\left|\sum_{j\neq k,l} c_je^{i\theta_j}+re^{i\phi}\right|\\
    &\geq1-\left|\sum_{j\neq k,l} c_je^{i\phi}+re^{i\phi}\right|\\
    &=1-|(1-c_k-c_l)e^{i\phi}+re^{i\phi}|\\
    &=c_k+c_l-r\\
    &=c_k+c_l-\sqrt{c_k^2+c_l^2+2c_kc_l\cos(\theta_k-\theta_l)}\\
    &=\frac{(c_k+c_l)^2-({c_k^2+c_l^2+2c_kc_l\cos(\theta_k-\theta_l)})}{c_k+c_l+\sqrt{c_k^2+c_l^2+2c_kc_l\cos(\theta_k-\theta_l)}}\\
    &=\frac{{2c_kc_l(1-\cos(\theta_k-\theta_l))}}{c_k+c_l+\sqrt{c_k^2+c_l^2+2c_kc_l\cos(\theta_k-\theta_l)}}\\
    &\geq\frac{c_kc_l}{c_k+c_l}(1-\cos(\theta_k-\theta_l))\\
    &\geq\frac{\min(c_k,c_l)}{2}\left(2\sin^2\left(\frac{\theta_k-\theta_l}{2}\right)\right)\\
    &\geq\frac{\min(c_k,c_l)}{4}\varep^2.
\end{align}
Thus, we obtain
\begin{align}
    \pr_\psi(1-|\bra{\psi}U^\dagger\ket{\psi}|<\delta)
    &\leq\pr_\psi\left(\min(c_k,c_l)\leq\frac{4\delta}{\varep^2}\right)\\
    &\leq\pr_\psi\left(c_k\leq\frac{4\delta}{\varep^2}\right)+\pr_\psi\left(c_l\leq\frac{4\delta}{\varep^2}\right)\\
    &=2\left(1-\left(1-\frac{4\delta}{\varep^2}\right)^{d-1}\right)\\
    &\leq\frac{8d\delta}{\varep^2},
\end{align}
where the second line follows from the union bound and the third line follows from $c_k,c_l\sim \text{Beta}(1,d-1)$~\cite{zyczkowski2001induced}.

\subsection{Proof of Lemma \ref{lemma:anti-concentration}}
\label{appsubsec:anti-concentration}
    We show that for $\varep<1/2$, $d\geq 4$, and $\delta< ds^2/36e^{24}$,
    \begin{align}
        \pr_{\btheta\sim\epcue}\left(\sum_{j=1}^d(\theta_j-\bar{\theta})^2<\delta\right)<\pr_{\btheta\sim\epuni}\left(\sum_{j=1}^d (\theta_j-\bar{\theta})^2<\delta\right)
    \end{align}
    holds.
    Define the regime of eigenangles with small sample variance as
    \begin{align}
        \mathcal{R}_{<\delta}(s)\coleq\left\{\btheta:\sum_{j=1}^d (\theta_j-\bar{\theta})^2<\delta,\right\}\cap\mathcal{R}(s),
    \end{align}
    where $\mathcal{R}(s)\coleq \{\btheta:\min\btheta=-s/2,\max\btheta=s/2\}$.
    Then, we can write the error probabilities with the following integration forms:
    \begin{align}
        \pr_{\btheta\sim\epcue}\left(\sum_{j=1}^d(\theta_j-\bar{\theta})^2<\delta\right)&=\int_{\mathcal{R}_{<\delta}(s)}d\btheta f_{\epcue}(\btheta),\\
        \pr_{\btheta\sim\epuni}\left(\sum_{j=1}^d(\theta_j-\bar{\theta})^2<\delta\right)&=\int_{\mathcal{R}_{<\delta}(s)}d\btheta f_{\epuni}(\btheta).
    \end{align}
    Thus, showing $f_{\epcue}(\btheta)<f_{\epuni}(\btheta)$ for all $\btheta\in\mathcal{R}_{<\delta}(s)$ is sufficient for the proof. 
    Recall that we have the pdf's of
    \begin{align}
        f_{\epcue}(\btheta)&\coleq\frac{1}{C_\varep'}\prod_{1\le k<l\le d}|e^{i\theta_k}-e^{i\theta_l}|^2\ind_{\btheta}({\mathcal{R}(s)}),\\
        f_{\epuni}(\btheta)&\coleq\frac{1}{\varep^{d-2}d(d-1)}\ind_{\btheta}(\mathcal{R}(s)),
    \end{align}
    where $f_{\epuni}(\btheta)$ is constant for $\btheta\in\mathcal{R}_{<\delta}$.
    Thus, we aim to obtain a smaller constant upper bound of $f_{\epcue}(\btheta)$. 
    To this end, we obtain a lower bound of the denominator term $C_\varep'$ and an upper bound of the rest of the numerator term in $f_{\epcue}(\btheta)$, respectively.

    We derive the lower bound of the denominator term $C_\varep'$.
    Let 
    \begin{align}
        \mathcal{R}^{m,n}(s,\kappa)&\coleq\{\btheta:\theta_m=-s/2, \theta_n=s/2, \theta_{j\neq m,n}\in[-(s-\kappa)/2, (s-\kappa)/2]\}.
    \end{align}
    From mutually disjoint property of $\mathcal{R}^{m,n}(s,s/2)$'s, we obtain a lower bound of $C_\varep'$ as follows:
    \begin{align}
        C_\varep'&=\int_{\mathcal{R}(s)}d\btheta\prod_{1\le k<l\le d}|e^{i\theta_k}-e^{i\theta_l}|^2\\
        &>\sum_{mn}\int_{\mathcal{R}^{m,n}(s,s/2)}d\btheta\prod_{1\le k<l\le d}|e^{i\theta_k}-e^{i\theta_l}|^2\\
        &=d(d-1)\int_{[-s/4, s/4]^{d-2}}d\btheta'|e^{-is/2}-e^{is/2}|^2\prod_{1\le k\le d-2}|e^{-is/2}-e^{i\theta'_k}|^2|e^{is/2}-e^{i\theta'_k}|^2\prod_{1\le k<l\le d-2}|e^{i\theta'_k}-e^{i\theta'_l}|^2\\
        &\geq d(d-1)|e^{-is/2}-e^{is/2}|^2|e^{-is/2}-e^{-is/4}|^{2d-4}|e^{is/2}-e^{is/4}|^{2d-4}\int_{[-s/4, s/4]^{d-2}}d\btheta'\prod_{1\le k<l\le d-2}|e^{i\theta'_k}-e^{i\theta'_l}|^2\\
        &\geq d(d-1)\left(1-\frac{s^2}{24}\right)^{d(d-1)}s^2\left(\frac{s}{4}\right)^{4d-8}\int_{[-s/4, s/4]^{d-2}}d\btheta'\prod_{1\le k<l\le d-2}|\theta'_k-\theta'_l|^2\\
        &=d(d-1)\left(1-\frac{s^2}{24}\right)^{d(d-1)}s^2\left(\frac{s}{4}\right)^{4d-8}\left(\frac{s}{2}\right)^{(d-2)^2}\prod_{j=0}^{d-3}\frac{j!^2(j+1)!}{(j+d-2)!}\\
        &=d(d-1)s^{d^2-2}\left(1-\frac{s^2}{24}\right)^{d(d-1)}\left(\frac{1}{2}\right)^{d^2-4d-12}\prod_{j=0}^{d-3}\frac{j!^2(j+1)!}{(j+d-2)!}\\
        &>d(d-1) s^{d^2-2}\left(\frac{1}{3}\right)^{d(d-1)}\prod_{j=0}^{d-3}\frac{j!^2(j+1)!}{(j+d-2)!}.\label{eq:anti_1}
    \end{align}
    Here, $\btheta'\coleq(\theta_1',\dots,\theta'_{d-2})$ is a vector of reindexed eigenangles, which excludes the maximum and minimum eigenangles, $s/2$ and $-s/2$.
    The fifth line follows from two inequalities: $s<1$ (which holds from $\varep<1/2$), and the bound of
    \begin{align}
        |e^{i\theta_k}-e^{i\theta_l}|=\left|2\sin\frac{\theta_k-\theta_l}{2}\right|\geq2\left(1-\frac{1}{6}\left|\frac{\theta_k-\theta_l}{2}\right|^2\right)\left|\frac{\theta_k-\theta_l}{2}\right|\geq\left(1-\frac{s^2}{24}\right)|\theta_k-\theta_l|,
    \end{align}
    which holds under $\theta_k, \theta_l\in[-s/2,s/2]$ and $s<1$. 
    The sixth line is a result of the Selberg integral~\cite{askey1980selberg}.

    We consequently derive the upper bound of the numerator term in $f_{\epcue}(\btheta)$.
    We obtain
    \begin{align}
        \prod_{1\le k<l\le d}|e^{i\theta_k}-e^{i\theta_l}|^2
        &\leq\prod_{1\le k<l\le d}((\theta_k-\bar{\theta})-(\theta_l-\bar{\theta}))^2
            && \theta_k, \theta_l\in[-1/2,1/2]\\
        &\leq\prod_{1\le k<l\le d}2((\theta_k-\bar{\theta})^2+(\theta_l-\bar{\theta})^2)\\
        &\leq\left(\frac{2(d-1)\sum_{k=1}^d (\theta_k-\bar{\theta})^2}{d(d-1)/2}\right)^{\frac{d(d-1)}{2}}
            &&\text{AM-GM inequality}\\
        &<2^{d(d-1)}\left(\frac{1}{d}\right)^{\frac{d(d-1)}{2}}\delta^{\frac{d(d-1)}{2}}.
            &&\sum_{k=1}^d (\theta_j-\bar{\theta})^2<\delta\text{ for }\btheta\in\mathcal{R}_{<\delta}(s)\label{eq:anti_2}
    \end{align}

    Combining the two bounds on the denominator and numerator, we obtain the upper bound of $f_{\epcue}(\btheta)$ as follows:
    \begin{align}
        f_{\epcue}(\btheta)&=\frac{1}{C_\varep'}\prod_{1\le k<l\le d}|e^{i\theta_k}-e^{i\theta_l}|^2\\
        &<\left(d(d-1) s^{d^2-2}\left(\frac{1}{3}\right)^{d(d-1)}\prod_{j=0}^{d-3}\frac{j!^2(j+1)!}{(j+d-2)!}\right)^{-1}2^{d(d-1)}\left(\frac{1}{d}\right)^{\frac{d(d-1)}{2}}\delta^{\frac{d(d-1)}{2}}\\
        &=\frac{1}{s^{d-2}d(d-1)}\left(\frac{1}{d}\right)^{\frac{d(d-1)}{2}}\left(\frac{1}{s}\right)^{d(d-1)}\delta^{\frac{d(d-1)}{2}}6^{d(d-1)}\prod_{j=0}^{d-3}\frac{(j+d-2)!}{j!^2(j+1)!}.\label{eq:anti_3}
    \end{align}
    We complete our proof by showing that the RHS of Eq.~\eqref{eq:anti_3} is bounded by $f_{\epuni}(\btheta)$ under $d\geq 4$ and $\delta< ds^2/36e^{24}$.
    Employing
    \begin{align}
        n\log n-n+1<&\sum_{k=1}^n \log k<(n+1)\log (n+1)-n,\\
        \frac{1}{2}n^2\log n-\frac{1}{4}(n^2-1)<&\sum_{k=1}^n k\log k<\frac{1}{2}(n+1)^2\log (n+1)-\frac{1}{4}(n^2+2n),
    \end{align}
    we obtain the upper bound of the product term in the RHS of Eq.~\eqref{eq:anti_3} by
    \begin{align}
        &\log \prod_{j=0}^{d-3}\frac{(j+d-2)!}{j!^2(j+1)!}\notag\\
        &\quad=\sum_{j=0}^{d-3}\left(\log (j+d-2)!-2\log j!-\log (j+1)!\right)\\
        &\quad=\sum_{j=0}^{d-3}\left(\sum_{k=1}^{j+d-2}\log k-2\sum_{k=1}^{j}\log k-\sum_{k=1}^{j+1}\log k\right)\\
        &\quad=\sum_{k=1}^{d-2}\left(d-2-2(d-k-2)-(d-k-1)\right)\log k+\sum_{k=d-1}^{2d-5}(2d-k-4)\log k\\
        &\quad=\sum_{k=1}^{d-2}\left(3k-2d+3\right)\log k+\sum_{k=d-1}^{2d-5}(2d-k-4)\log k\\
        &\quad=-(4d-7)\sum_{k=1}^{d-2} \log k +(2d-4)\sum_{k=1}^{2d-5} \log k+4\sum_{k=1}^{d-2} k\log k-\sum_{k=1}^{2d-5} k\log k\\
        &\quad<-(4d-7)((d-2)\log(d-2)-d+3)+(2d-4)((2d-4)\log(2d-4)-2d+5)\notag\\
        &\quad\quad+4\left(\frac{1}{2}(d-1)^2\log (d-1)-\frac{1}{4}(d^2-2d)\right)-\left(\frac{1}{2}(2d-5)^2\log (2d-5)-\frac{1}{4}((2d-5)^2-1)\right)\\
        &\quad=-(4d-7)(d-2)\log(d-2)+(2d-4)^2\log(2d-4)\notag\\
        &\quad\quad+2(d-1)^2\log(d-1)-\frac{1}{2}(2d-5)^2\log(2d-5)-4d+7.
    \end{align}
    From
    \begin{align}
        -(4d-7)(d-2)\log(d-2)+(2d-4)^2\log(2d-4)
        &<(4d-7)(d-2)(\log(2d-4)-\log(d-2))\\
        &<4d(d-1)
    \end{align}
    and
    \begin{align}
        2(d-1)^2\log(d-1)-\frac{1}{2}(2d-5)^2\log(2d-5)
        &=\left(6d-\frac{21}{2}\right)\log(d-1)+\frac{1}{2}(2d-5)^2\log\left(\frac{d-1}{2d-5}\right)\\
        &<6d(d-1)+2d(d-1)\\
        &=8d(d-1),
    \end{align}
    we obtain
    \begin{align}
        \log \prod_{j=0}^{d-3}\frac{(j+d-2)!}{j!^2(j+1)!}<12d(d-1)
    \end{align}
    for $d\geq 4$. 
    Hence, Eq.~\eqref{eq:anti_3} leads to
    \begin{align}
        f_{\epcue}(\btheta)
        &<\frac{1}{s^{d-2}d(d-1)}\left(\frac{1}{d}\right)^{\frac{d(d-1)}{2}}\left(\frac{1}{s}\right)^{d(d-1)}\delta^{\frac{d(d-1)}{2}}6^{d(d-1)}\prod_{j=0}^{d-3}\frac{(j+d-2)!}{j!^2(j+1)!}\\
        &<\frac{1}{\varep^{d-2}d(d-1)}\left(\frac{36e^{24}\delta}{ds^2}\right)^{\frac{d(d-1)}{2}}\\
        &<\frac{1}{\varep^{d-2}d(d-1)}\\
        &=f_{\epuni}(\btheta)
    \end{align}
    for $\delta< ds^2/(36e^{24})$.

\subsection{Proof of technical lemmas on Haar randomness}
\label{appsubsec:haar}
We establish several technical lemmas regarding Haar randomness.
As a preliminary, we show that the input state and POVM elements can be considered pure states, validating the assumption addressed in Appendix \ref{appsubsec:F1F2}.
Then, we introduce new notations for use in the lemmas and proofs.

Before proceeding, we clarify our notations by omitting some subscripts and superscripts for simplicity.
Let the input state $\rho$ and the measurement operator $E$ be defined on the Hilbert space $\mathcal{H}_S\otimes\mathcal{H}_A$.
Here, the Hilbert spaces $\mathcal{H}_S$ and $\mathcal{H}_A$ represent the system and ancilla respectively, with dimensions $\text{dim}(\mathcal{H}_S)=d$ and $\text{dim}(\mathcal{H}_A)=d_{\text{anc}}$.
Following these notations, we derive that it is sufficient to consider the input $\rho$ as a pure state.
The primary objective in this section is to find an upper bound of TVD between the observable distributions for $H_0$ and $H_1$, $\text{TVD}(p_0,\E_\psi p_{1,\psi})$.
Writing the input state as a linear sum of pure states $\rho=\sum_{i=1}^{n}c_i\rho^{(i)}$ with $\sum_{i=1}^{n}c_i=1$, we have the upper bound of TVD as follows:
\begin{align}
    \text{TVD}(p_0,\E_\psi p_{1,\psi})
    &=\text{TVD}\left(\sum_{i=1}^{n}c_ip_0^{(i)},\sum_{i=1}^{n}c_i\E_\psi p_{1,\psi}^{(i)}\right)\\
    &\leq \sum_{i=1}^{n}c_i\text{TVD}\left(p_0^{(i)},\E_\psi p_{1,\psi}^{(i)}\right).
\end{align}
Thus, if we have an upper bound of the TVD between the two hypothesis outputs for any pure input state, the same bound also holds for mixed input states.
This justifies considering $\rho$ as a pure state.
Similarly, any POVM element can be expressed as a linear combination of rank-1 operators~\cite{huang2022experiments}, thus validating our assumption to consider pure POVM elements $E$.

Now, we introduce some notations and basic tools for the main part.
In this section, we aim to derive the upper bound of the variable $X$, written as
\begin{align}
    X&\coleq\frac{\tr(E(U_\psi\otimes I_{\text{anc}})\rho(U_\psi\otimes I_{\text{anc}})^\dagger)}{\tr(E\rho)}-1\\
    &=\frac{(e^{-is}-1)\tr (E\rho  I_\psi )+(e^{is}-1)\tr (\rho E I_\psi )+(2-e^{is}-e^{-is})\tr (E I_\psi \rho  I_\psi )}{\tr (E\rho)},
\end{align}
where we denote
\begin{align}
    I_\psi&=\ketbra{\psi}{\psi}\otimes I_{\text{anc}},\\
    f(E,\rho)&= \frac{\tr(\tr_S(E)\tr_S(\rho))}{\tr(E\rho)},
\end{align}
with $\ket{\psi}\in\mathcal{H}_S$.
We follow the notation
\begin{align}
    \ket{e}&=\sum_{i=1}^{d}\sum_{k=1}^{d_{\text{anc}}} e_{ik}\ket{i}\otimes\ket{k}=\sum_{k=1}^{d_{\text{anc}}} \ket{e_k}\otimes\ket{k},\\
    E_{kl}&=\ketbra{e_k}{e_l},\\
    E&=\ketbra{e}{e}=\sum_{k,l=1}^{d_{\text{anc}}}E_{kl}\otimes\ketbra{k}{l},\\
\end{align}
with the similar ones for $\rho=\ketbra{r}{r}$. 
We will employ the relationships of
\begin{align}
    \tr_S(E)&=\sum_{k,l=1}^{d_{\text{anc}}}\braket{e_l}{e_k}\ketbra{k}{l},\\
    \tr_A(E\rho )&=\sum_{k,l=1}^{d_{\text{anc}}}E_{kl}\rho_{lk},
\end{align}
and Hölder's inequality
\begin{align}
\label{eq:holder}
    \|AB\|_1\leq\|A\|_p\|B\|_q
\end{align}
for $1/p+1/q=1$.
Denoting $F_\sigma$ as a permutation operator in the permutation $\sigma$, for example, $F_{(1,2)}(\ket{i}\otimes \ket{j})=\ket{j}\otimes\ket{i}$, it is known that
\begin{align}
    \label{eq:haar}
    \E_{\psi\sim\text{Haar}(d)}(\ketbra{\psi}{\psi})^{\otimes n}=\frac{1}{d(d+1)\dots(d+n-1)}\sum_{\sigma\in S_n}F_\sigma,
\end{align}
which will also be employed in this section.
Here, $S_n$ is the set of $n$-permutations.

\subsubsection{Proof of Lemma \ref{lemma:X_1moment}}
\label{appsubsubsec:X_1moment}
    We show that for $s<1$,
        \begin{align}
            \E_{\psi}X\geq-\frac{s^2}{d}
        \end{align}
    holds.
    We have the following lemma:
    \begin{lemma}
        \label{lemma:haar_1}
        Let $\ket{\psi}$ be a $d$-dimensional Haar-random state. We have
        \begin{align*}
            \E_\psi\tr (E  I_\psi\rho I_\psi )\geq\frac{\tr (E\rho)}{d(d+1)}.
        \end{align*}
    \end{lemma}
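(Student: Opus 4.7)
The plan is to reduce the statement to a standard Haar second-moment computation. Since $E=\ketbra{e}{e}$ and $\rho=\ketbra{r}{r}$ are pure by the preliminary reduction, the quantity $\tr(E\,I_\psi\rho\,I_\psi)$ collapses into a squared matrix element. Concretely, using $\ket{e}=\sum_k\ket{e_k}\otimes\ket{k}$ and $\ket{r}=\sum_k\ket{r_k}\otimes\ket{k}$, a direct expansion gives
\begin{equation}
\bra{e}(\ketbra{\psi}{\psi}\otimes I_{\mathrm{anc}})\ket{r}
=\sum_k\braket{e_k}{\psi}\braket{\psi}{r_k}
=\bra{\psi}M\ket{\psi},
\qquad M\coleq\sum_k\ketbra{r_k}{e_k}=\tr_A(\ketbra{r}{e}),
\end{equation}
so that $\tr(E\,I_\psi\rho\,I_\psi)=|\bra{\psi}M\ket{\psi}|^2$. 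Likewise $\tr(E\rho)=|\braket{e}{r}|^2=|\tr M|^2$, because $\tr M=\sum_k\braket{e_k}{r_k}=\braket{e}{r}$.

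The main step is then to evaluate $\E_\psi|\bra{\psi}M\ket{\psi}|^2$ via the Haar second-moment identity in Eq.~\eqref{eq:haar} with $n=2$, namely $\E_\psi(\ketbra{\psi}{\psi})^{\otimes 2}=\frac{1}{d(d+1)}(I+F)$, where $F$ is the swap. Writing $|\bra{\psi}M\ket{\psi}|^2=\tr\bigl((M\otimes M^\dagger)(\ketbra{\psi}{\psi})^{\otimes 2}\bigr)$ and using $\tr((M\otimes M^\dagger)F)=\tr(MM^\dagger)$, we obtain
\begin{equation}
\E_\psi|\bra{\psi}M\ket{\psi}|^2
=\frac{|\tr M|^2+\tr(MM^\dagger)}{d(d+1)}
\geq\frac{|\tr M|^2}{d(d+1)}
=\frac{\tr(E\rho)}{d(d+1)},
\end{equation}
where the inequality uses $\tr(MM^\dagger)\geq 0$. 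This is precisely the claim, so no further work is needed.

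There is no real obstacle here; the only point requiring care is the bookkeeping that turns the partial trace over the ancilla into the operator $M$ so that the Haar twirl can be applied cleanly on the system register alone. The pure-state reduction justified in Appendix~\ref{appsubsec:haar} is what makes this direct factorization possible, and the nonnegativity of $\tr(MM^\dagger)$ is what makes the bound tight enough for later use in bounding $\E_\psi X$.
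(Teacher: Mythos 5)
Your proof is correct and uses essentially the same mathematics as the paper's: the Haar second-moment identity followed by dropping a nonnegative term. The only difference is bookkeeping — you collapse $\tr(E\,I_\psi\rho\,I_\psi)$ to $|\bra{\psi}M\ket{\psi}|^2$ with $M=\tr_A(\ketbra{r}{e})$ before applying the twirl (the same $M$ the paper itself introduces for its Lemma~\ref{lemma:haar_4}), whereas the paper's proof of this lemma expands directly in the $E_{kl},\rho_{lk}$ blocks and reassembles at the end, arriving at the identical expression $\bigl(\tr(M^\dagger M)+\tr(E\rho)\bigr)/\bigl(d(d+1)\bigr)$.
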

    \begin{proof}
        \begin{align}
            \E_\psi\tr (E  I_\psi \rho I_\psi)
            &=\E_\psi\tr \left(\left(\sum_{k,l=1}^{d_{\text{anc}}}\left(E_{kl}\otimes\ketbra{k}{l}\right)\right) I_\psi\left(\sum_{k,l=1}^{d_{\text{anc}}}\left(\rho_{kl}\otimes\ketbra{k}{l}\right)\right) I_\psi\right)\\
            &=\E_\psi\tr \left(\left(\sum_{k,l=1}^{d_{\text{anc}}}\left(E_{kl}\ketbra{\psi}{\psi}\otimes\ketbra{k}{l}\right)\right)\left(\sum_{k,l=1}^{d_{\text{anc}}}\left(\rho_{kl}\ketbra{\psi}{\psi}\otimes\ketbra{k}{l}\right)\right)\right)\\
            &=\E_\psi\tr \left(\tr_A\left(\left(\sum_{k,l=1}^{d_{\text{anc}}}\left(E_{kl}\ketbra{\psi}{\psi}\otimes\ketbra{k}{l}\right)\right)\left(\sum_{k,l=1}^{d_{\text{anc}}}\left(\rho_{kl}\ketbra{\psi}{\psi}\otimes\ketbra{k}{l}\right)\right)\right)\right)\\
            &=\E_\psi\tr \left(\sum_{k,l=1}^{d_{\text{anc}}}E_{kl}\ketbra{\psi}{\psi}\rho_{lk}\ketbra{\psi}{\psi}\right)\\
            &=\sum_{k,l=1}^{d_{\text{anc}}}\E_\psi\bra{\psi}E_{kl}\ket{\psi}\bra{\psi}\rho_{lk}\ket{\psi}\\
            &=\sum_{k,l=1}^{d_{\text{anc}}}\E_\psi\tr \left((E_{kl}\otimes\rho_{lk})(\ketbra{\psi}{\psi})^{\otimes2}\right)\\
            &=\sum_{k,l=1}^{d_{\text{anc}}}\tr \left((E_{kl}\otimes\rho_{lk})\left(\frac{I+F_{(1,2)}}{d(d+1)}\right)\right)\\
            &=\sum_{k,l=1}^{d_{\text{anc}}}\frac{\tr (E_{kl})\tr (\rho_{lk})+\tr (E_{kl}\rho_{lk})}{d(d+1)}\\
            &=\frac{(\sum_{k,l=1}^{d_{\text{anc}}}\braket{e_l}{e_k}\braket{r_k}{r_l})+\tr (E\rho)}{d(d+1)}\\
            &=\frac{\tr (\sum_{k,l=1}^{d_{\text{anc}}}(\ketbra{e_k}{r_k})(\ketbra{r_l}{e_l}))+\tr (E\rho)}{d(d+1)}\\
            &=\frac{\tr ((\sum_{k=1}^{d_{\text{anc}}}\ketbra{e_k}{r_k})(\sum_{l=1}^{d_{\text{anc}}}\ketbra{e_l}{r_l})^\dagger)+\tr (E\rho)}{d(d+1)}\\
            &\geq\frac{\tr (E\rho)}{d(d+1)}.
        \end{align}
    \end{proof}
    Employing $\E_\psi\ketbra{\psi}{\psi}=I/d$, we obtain
    \begin{align}
        \E_{\psi}X
        &=\frac{1}{\tr (E\rho)}\E_{\psi}[(e^{-is}-1)\tr (E\rho  I_\psi )+(e^{is}-1)\tr (\rho E I_\psi )+(2-e^{is}-e^{-is})\tr (E I_\psi \rho  I_\psi )]\\
        &\geq\frac{1}{\tr (E\rho)}\left((e^{-is}-1)\tr \left(E\rho\frac{I}{d}\otimes I_{\text{anc}}\right)+(e^{is}-1)\tr \left(\rho E\frac{I}{d}\otimes I_{\text{anc}} \right)+(2-e^{is}-e^{-is})\frac{\tr (E\rho)}{d(d+1)}\right)\\
        &=\frac{2(1-\cos s)}{\tr (E\rho)}\left(-\frac{\tr (E\rho)}{d}+\frac{\tr (E\rho)}{d(d+1)}\right)\\
        &=-\frac{2(1-\cos s)}{d+1}\\
        &\geq-\frac{s^2}{d},
    \end{align}
    where the second line follows from Lemma \ref{lemma:haar_1} and the last line follows from $s<1$.

\subsubsection{Proof of Lemma \ref{lemma:X_2moment}}
\label{appsubsubsec:X_2moment}
    We show that for $s<1$,
    \begin{align}
        \E_{\psi}X^2\leq\frac{6s^2(f+1)}{d^2}+\frac{72s^3(f^2+1)}{d^4}
    \end{align}
    holds, where we write $f\equiv f(E,\rho)$ for simplicity.
    We obtain the upper bound of the second moment as follows:
    \begin{align}
        \E_{\psi}X^2
        &=\frac{1}{\tr ^2(E\rho)}\E_{\psi}[(e^{-is}-1)^2\tr ^2(E\rho I_\psi )+(e^{is}-1)^2\tr ^2(\rho E I_\psi )+(2-e^{is}-e^{-is})^2\tr ^2(E I_\psi \rho  I_\psi )\notag\\
        &\quad+2(e^{-is}-1)(e^{is}-1)\tr (E\rho  I_\psi )\tr (\rho E I_\psi )\notag\\
        &\quad+2((e^{-is}-1)\tr (E\rho I_\psi )+(e^{is}-1)\tr (\rho E I_\psi ))(2-e^{is}-e^{-is})\tr (E I_\psi \rho  I_\psi )]\\
        &\leq\frac{1}{\tr ^2(E\rho)}(2(2-2\cos s)|\E_{\psi}\tr ^2(E\rho I_\psi)|+(2-2\cos s)^2\E_{\psi}\tr ^2(E I_\psi\rho I_\psi)\notag\\
        &\quad+2(2-2\cos s)\E_\psi\tr (E\rho I_\psi )\tr (\rho E I_\psi )+4(2-2\cos s)^{3/2}|\E_\psi\tr (E\rho I_\psi )\tr (E  I_\psi \rho  I_\psi )|)\\
        &\leq\frac{1}{\tr^2(E\rho)}(2s^2 |\E_{\psi}\tr^2(E\rho I_\psi)|+s^4\E_{\psi}\tr^2(E I_\psi\rho I_\psi)\notag\\
        &\quad+2s^2\E_\psi\tr (E\rho I_\psi )\tr (\rho E I_\psi )+4s^3|\E_\psi\tr (E\rho I_\psi )\tr (E  I_\psi \rho  I_\psi )|) \label{eq:X2bound}.
    \end{align}
    We now obtain the upper bound of each of the four terms in the RHS.
    
    \begin{lemma}
    \label{lemma:haar_2}
        Let $\ket{\psi}$ be a $d$-dimensional Haar-random state. We have
        \begin{align*}
            |\E_\psi\tr ^2(E\rho  I_\psi)|\leq\frac{\tr(E\rho)\tr(\tr_S(E)\tr_S(\rho))+\tr^2 (E\rho)}{d^2}.
        \end{align*}
    \end{lemma}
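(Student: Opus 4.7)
My plan is to reduce $\tr(E\rho I_\psi)$ to a scalar expectation on the system, apply the standard Haar second-moment formula, and then bound the resulting trace via a carefully chosen Cauchy--Schwarz inequality.

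First, because $I_\psi=\ketbra{\psi}{\psi}\otimes I_{\text{anc}}$ acts as the identity on the ancilla, performing the ancilla partial trace first yields $\tr(E\rho I_\psi)=\bra{\psi}M\ket{\psi}$, where $M$ denotes the partial trace of $E\rho$ over the ancilla. Squaring and averaging gives $\E_\psi\tr^2(E\rho I_\psi)=\tr\bigl[(M\otimes M)\,\E_\psi(\ketbra{\psi}{\psi})^{\otimes 2}\bigr]$, and plugging in the Haar identity from Eq.~\eqref{eq:haar} produces $\E_\psi\tr^2(E\rho I_\psi)=\frac{1}{d(d+1)}\bigl[\tr(M)^2+\tr(M^2)\bigr]$. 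Since $\tr(M)=\tr(E\rho)=|\braket{e}{r}|^2\geq 0$, this already accounts for the $\tr^2(E\rho)$ contribution.

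The main task is to bound $|\tr(M^2)|$ by $\tr(E\rho)\,\tr(\tr_S(E)\tr_S(\rho))$. Using purity, $E=\ketbra{e}{e}$ and $\rho=\ketbra{r}{r}$, and a direct computation gives $M=\braket{e}{r}\sum_k\ket{e_k}\bra{r_k}$, which in matrix form reads $M=\braket{e}{r}\,er^\dagger$ when $e,r$ are viewed as $d\times d_{\text{anc}}$ matrices. Consequently $\tr(M^2)=\braket{e}{r}^2\,\tr\bigl((er^\dagger)^2\bigr)$, and applying Cauchy--Schwarz in the Hilbert--Schmidt inner product, $|\tr(XY)|\leq\|X\|_2\|Y\|_2$, with $X=Y=er^\dagger$ gives $|\tr((er^\dagger)^2)|\leq\|er^\dagger\|_2^2=\tr(e^\dagger e\,r^\dagger r)=\tr(\tr_S(E)\tr_S(\rho))$. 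Multiplying by $|\braket{e}{r}|^2=\tr(E\rho)$ yields the required bound on $|\tr(M^2)|$.

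Finally, combining the pieces with $\tfrac{1}{d(d+1)}\leq\tfrac{1}{d^2}$ and the non-negativity of both terms establishes the lemma. The main subtle step is the choice of grouping in Cauchy--Schwarz: the naive bound $|\tr(M^2)|\leq\|M\|_2^2$ would produce $\tr(\tr_A(E)\tr_A(\rho))$, a system-side quantity that does not in general equal the ancilla-side $\tr(\tr_S(E)\tr_S(\rho))$ appearing in the lemma, so one must first pull out the scalar $\braket{e}{r}$ and apply Cauchy--Schwarz directly to $er^\dagger$ in order to land on the correct trace over the ancilla.
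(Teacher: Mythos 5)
Your proof is correct and follows essentially the same route as the paper's: reduce to the ancilla partial trace $M=\tr_A(E\rho)$, apply the second Haar moment formula to get $\frac{1}{d(d+1)}[\tr(M)^2+\tr(M^2)]$, and bound $|\tr(M^2)|$ by Hilbert--Schmidt Cauchy--Schwarz. However, your closing remark is mistaken. Since $M=\braket{e}{r}\,er^\dagger$, the ``naive'' bound $|\tr(M^2)|\leq\|M\|_2^2=\tr(MM^\dagger)$ gives exactly $|\braket{e}{r}|^2\,\tr\bigl(er^\dagger r e^\dagger\bigr)=\tr(E\rho)\,\tr(\tr_S(E)\tr_S(\rho))$, which is the quantity you want — it does \emph{not} produce $\tr(\tr_A(E)\tr_A(\rho))$. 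The two groupings of Cauchy--Schwarz (applying it to $M$ directly versus pulling out the scalar $\braket{e}{r}$ and applying it to $er^\dagger$) yield the identical bound, and the former is in fact precisely what the paper does via $|\tr(\tr_A^2(E\rho))|\leq\tr(\tr_A(E\rho)\tr_A(E\rho)^\dagger)$. So the step you flag as subtle is not actually a pitfall; both orderings land correctly on the ancilla-side trace.
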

    \begin{proof}
        We obtain
        \begin{align}
            |\E_\psi\tr ^2(E\rho  I_\psi )|
            &=\left|\E_\psi\tr ^2\left(\left(\sum_{k,l=1}^{d_{\text{anc}}}E_{kl}\otimes\ketbra{k}{l}\right)\left(\sum_{k,l=1}^{d_{\text{anc}}}\rho_{kl}\otimes\ketbra{k}{l}\right)I_\psi\right)\right|\\
            &=\left|\E_\psi\left(\sum_{k,l=1}^{d_{\text{anc}}}\bra{\psi}E_{kl}\rho_{lk}\ket{\psi}\right)^2\right|\\
            &=\left|\E_\psi\left(\bra{\psi}\tr_A(E\rho)\ket{\psi}\right)^2\right|\\
            &=\left|\E_\psi\tr((\tr_A(E\rho)\otimes \tr_A(E\rho))(\ketbra{\psi}{\psi})^{\otimes2})\right|\\
            &=\left|\tr \left((\tr_A(E\rho)\otimes \tr_A(E\rho))\left(\frac{I+F_{(1,2)}}{d(d+1)}\right)\right)\right|\\
            &=\frac{\left|\tr(\tr_A^2(E\rho))+\tr^2(\tr_A(E\rho))\right|}{d(d+1)}\label{eq:haar_2_1}\\
            &=\frac{\left|\tr(\tr_A^2(E\rho))+\tr^2(E\rho)\right|}{d(d+1)}\\
            &\leq\frac{\left|\tr(\tr_A^2(E\rho))\right|+\tr^2(E\rho)}{d(d+1)}\\
            &\leq\frac{\left|\tr(\tr_A^2(E\rho))\right|+\tr^2(E\rho)}{d^2}.
        \end{align}
        The following inequality of
        \begin{align}
            \left|\tr (\tr_A^2(E\rho ))\right|
            &\leq\tr (\tr_A(E\rho )\tr_A(E\rho)^\dagger)\label{eq:haar_2_2}\\
            &=\tr (\tr_A(E\rho )\tr_A(\rho E))\\
            &=\braket{e}{r}\braket{r}{e}\tr (\tr_A(\ketbra{e}{r})\tr_A(\ketbra{r}{e}))\label{eq:haar_2_4}\\
            &=\tr (E\rho )\tr \left(\left(\sum_{k=1}^{d_{\text{anc}}}\ketbra{e_k}{r_k}\right)\left(\sum_{l=1}^{d_{\text{anc}}}\ketbra{r_l}{e_l}\right)\right)\\
            &=\tr (E\rho)\sum_{k,l=1}^{d_{\text{anc}}}\braket{e_l}{e_k}\braket{r_k}{r_l}\\
            &=\tr (E\rho)\tr (\tr_S(E)\tr_S(\rho))\label{eq:haar_2_3}
        \end{align}
        completes the proof.
    \end{proof}

    \begin{lemma}
    \label{lemma:haar_3}
        Let $\ket{\psi}$ be a $d$-dimensional Haar-random state. We have
        \begin{align*}
            \E_\psi\tr(E\rho I_\psi)\tr(\rho E I_\psi)\leq\frac{\tr(E\rho)\tr(\tr_S(E)\tr_S(\rho))+\tr^2 (E\rho)}{d^2}.
        \end{align*}
    \end{lemma}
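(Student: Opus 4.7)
The key observation is that $E$, $\rho$, and $I_\psi$ are all Hermitian, so $\tr(\rho E I_\psi) = \overline{\tr(E\rho I_\psi)}$, and therefore
\begin{align*}
    \tr(E\rho I_\psi)\,\tr(\rho E I_\psi) \;=\; |\tr(E\rho I_\psi)|^2 \;\geq\; 0.
\end{align*}
This removes any absolute-value subtlety and means the proof can closely mirror that of Lemma~\ref{lemma:haar_2}.

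My plan is to first rewrite the trace, exactly as in Lemma~\ref{lemma:haar_2}, by carrying out the partial trace over the ancilla: $\tr(E\rho I_\psi) = \bra{\psi}\tr_A(E\rho)\ket{\psi}$, and analogously $\tr(\rho E I_\psi) = \bra{\psi}\tr_A(\rho E)\ket{\psi}$. Then the quantity under the expectation becomes a product of two Haar-quadratic forms, which I would evaluate with the standard 2-design formula
\begin{align*}
    \E_\psi \bra{\psi}A\ket{\psi}\bra{\psi}B\ket{\psi}
    \;=\; \tr\!\left((A\otimes B)\,\E_\psi(\ketbra{\psi}{\psi})^{\otimes 2}\right)
    \;=\; \frac{\tr(A)\tr(B)+\tr(AB)}{d(d+1)},
\end{align*}
applied to $A=\tr_A(E\rho)$ and $B=\tr_A(\rho E)$. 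Since $\tr(A)=\tr(E\rho)=\tr(B)$, the first contribution immediately yields $\tr^2(E\rho)/d(d+1)$.

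For the $\tr(AB)$ term, I would reuse the computation that appears as Eqs.~\eqref{eq:haar_2_4}–\eqref{eq:haar_2_3} in the proof of Lemma~\ref{lemma:haar_2}: expanding $E=\ketbra{e}{e}$ and $\rho=\ketbra{r}{r}$ in the ancilla basis gives
\begin{align*}
    \tr\!\bigl(\tr_A(E\rho)\tr_A(\rho E)\bigr) \;=\; \tr(E\rho)\,\tr\!\bigl(\tr_S(E)\tr_S(\rho)\bigr).
\end{align*}
Notably, unlike Lemma~\ref{lemma:haar_2}, here we do not need the inequality $|\tr(M^2)|\leq\tr(MM^\dagger)$ at all, because $B=A^\dagger$ makes $\tr(AB)=\|A\|_2^2$ automatically real and nonnegative, so this step is actually cleaner than before.

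Combining the two contributions and using $d(d+1)\geq d^2$ gives
\begin{align*}
    \E_\psi\tr(E\rho I_\psi)\tr(\rho E I_\psi)
    \;\leq\; \frac{\tr(E\rho)\tr(\tr_S(E)\tr_S(\rho)) + \tr^2(E\rho)}{d^2},
\end{align*}
which is the claimed bound. I do not anticipate any real obstacle: the argument is essentially a direct reuse of the calculation already performed for Lemma~\ref{lemma:haar_2}, with the conjugation structure making the intermediate steps strictly simpler.
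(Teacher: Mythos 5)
Your proof is correct and follows essentially the same route as the paper's: rewrite the traces as Haar-quadratic forms in $\ket{\psi}$, apply the 2-design formula, and reuse the rank-one computation from Lemma~\ref{lemma:haar_2} to identify $\tr\bigl(\tr_A(E\rho)\tr_A(\rho E)\bigr)$ with $\tr(E\rho)\tr(\tr_S(E)\tr_S(\rho))$. Your observation that $B=A^\dagger$ makes the cross term $\tr(AB)=\|A\|_2^2$ manifestly nonnegative (and equal to, not merely bounded by, the target quantity) is in fact slightly cleaner than the paper's write-up, which unnecessarily records a ``$\leq$'' at that step even though the underlying rank-one identity is an equality.
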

    \begin{proof}
        The proof is almost the same as that of Lemma \ref{lemma:haar_2}.
        We obtain
        \begin{align}
            \E_\psi\tr (E\rho I_\psi)\tr(\rho EI_\psi)
            &=\frac{\tr(\tr_A(E\rho)\tr_A(\rho E))+\tr(\tr_A(E\rho))\tr(\tr_A(\rho E))}{d(d+1)}\\
            &=\frac{\tr(\tr_A(E\rho)\tr_A(E\rho)^\dagger)+\tr^2(E\rho)}{d(d+1)}\\
            &\leq\frac{\tr(\tr_A(E\rho)\tr_A(E\rho)^\dagger)+\tr^2(E\rho)}{d^2},
        \end{align}
        where the first line is obtained similarly with Eq.~\eqref{eq:haar_2_1}.
        From Eqs.~\eqref{eq:haar_2_2} and \eqref{eq:haar_2_3}, we have
        \begin{align}
            \tr (\tr_A(E\rho)\tr_A(E\rho)^\dagger)\leq\tr (E\rho)\tr (\tr_S(E)\tr_S(\rho)),
        \end{align}
        which completes the proof.
    \end{proof}
    \begin{lemma}
    \label{lemma:haar_4}
        Let $\ket{\psi}$ be a $d$-dimensional Haar-random state.
        We have
        \begin{align*}
            \E_\psi\tr^2(E I_\psi\rho I_\psi)\leq\frac{24(\tr^2(\tr_S(E)\tr_S(\rho))+\tr^2(E\rho))}{d^4}.
        \end{align*}
    \end{lemma}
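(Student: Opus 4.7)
The plan is to reduce the quartic expectation to a bounded sum over $S_4$ using the fourth Haar moment. With $E=\ketbra{e}{e}$ and $\rho=\ketbra{r}{r}$ pure, a direct computation gives
\begin{align*}
    \tr(E I_\psi\rho I_\psi)=|\bra{e}I_\psi\ket{r}|^2=|\bra{\psi}A\ket{\psi}|^2, \qquad A\coleq\sum_{k=1}^{d_{\text{anc}}}\ket{r_k}\bra{e_k},
\end{align*}
so the quantity of interest becomes $\E_\psi|\bra{\psi}A\ket{\psi}|^4$. Note that $|\tr A|^2=|\braket{e}{r}|^2=\tr(E\rho)$ and $T\coleq\tr(AA^\dagger)=\tr(\tr_S(E)\tr_S(\rho))$, the latter by the same manipulations as in Eqs.~\eqref{eq:haar_2_4}--\eqref{eq:haar_2_3}; so any bound in $(|\tr A|,T)$ will translate back into the claimed right-hand side.

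Expanding $|\bra{\psi}A\ket{\psi}|^4=(\bra{\psi}A\ket{\psi})^2(\bra{\psi}A^\dagger\ket{\psi})^2$ and applying the fourth Haar moment
\begin{align*}
    \E_\psi(\ketbra{\psi}{\psi})^{\otimes 4}=\frac{1}{d(d+1)(d+2)(d+3)}\sum_{\sigma\in S_4}F_\sigma,
\end{align*}
I would obtain $\E_\psi|\bra{\psi}A\ket{\psi}|^4=[d(d+1)(d+2)(d+3)]^{-1}\sum_\sigma\tr((A\otimes A\otimes A^\dagger\otimes A^\dagger)F_\sigma)$. Since $d(d+1)(d+2)(d+3)\geq d^4$ and $|S_4|=24$, it then suffices to bound each of the 24 permutation traces in absolute value by $|\tr A|^4+T^2$.

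For each $\sigma$, the permutation trace factors into a product of ordinary traces of strings in $\{A,A^\dagger\}$ according to the cycle type of $\sigma$, and I would verify the target bound type-by-type using only Cauchy--Schwarz, the submultiplicativity $\|XY\|_2\leq\|X\|_\infty\|Y\|_2$, the PSD inequality $\|X\|_2\leq\tr X$, and Young's inequality. Concretely: the identity gives $|\tr A|^4$ exactly; every $(2,2)$- or $(4)$-type trace collapses by cyclic invariance to one of $\tr((AA^\dagger)^2)$, $\tr((A^\dagger A)(AA^\dagger))$, or $|\tr(A^2)|^2$, each $\leq T^2$; the $(2,1,1)$ type produces $\tr(A_iA_j)|\tr A|^2$ with $|\tr(A_iA_j)|\leq T$ (either directly, or by Cauchy--Schwarz for $\tr(A^2)$), bounded by $\tfrac12(T^2+|\tr A|^4)$ via AM--GM; and the $(3,1)$ type produces $\tr(A_iA_jA_k)\tr(A_l)$, where combining $\|A^2\|_2\leq\|A\|_\infty\|A\|_2\leq T$ with Hilbert--Schmidt Cauchy--Schwarz gives $|\tr(A_iA_jA_k)|\leq T^{3/2}$, so the whole term is at most $T^{3/2}|\tr A|$, dominated by $\tfrac34 T^2+\tfrac14|\tr A|^4$ via Young's inequality with exponents $(4/3,4)$.

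The delicate step is the $(3,1)$ case: a naive Hölder-type bound only produces the mixed monomial $T^{3/2}|\tr A|$, which is not a priori in the target basis $\{T^2,|\tr A|^4\}$, and one must select the Young exponents as $(4/3,4)$ precisely so as to absorb it without picking up spurious powers. Once every cycle type is individually bounded by $|\tr A|^4+T^2$, summing all $24$ contributions and using $d(d+1)(d+2)(d+3)\geq d^4$ delivers exactly $24(\tr^2(\tr_S(E)\tr_S(\rho))+\tr^2(E\rho))/d^4$, the claimed inequality.
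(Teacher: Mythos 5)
Your proposal is correct and follows essentially the same route as the paper: you reduce to $\E_\psi|\bra{\psi}A\ket{\psi}|^4$ for the same operator $A=\tr_A(\ketbra{r}{e})$, apply the fourth Haar moment, and bound each of the $24$ permutation traces by a monomial $\|A\|_2^m|\tr A|^n$ with $m+n=4$ before absorbing it into $\|A\|_2^4+|\tr A|^4$ via AM--GM/Young. The only difference is cosmetic: you organize the case analysis explicitly by cycle type, whereas the paper illustrates the Hölder-style bounding on a single representative term and asserts the rest uniformly.
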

    \begin{proof}
        Since
        \begin{align}
            \tr (E I_\psi\rho I_\psi)
            &=|\bra{e}(\ketbra{\psi}{\psi}\otimes I_{\text{anc}})\ket{r}|^2\\
            &=\left|\sum_{k=1}^{d_{\text{anc}}}\braket{e_k}{\psi}\braket{\psi}{r_k}\right|^2\\
            &=|\bra{\psi}\tr_A(\ketbra{r}{e})\ket{\psi}|^2,
        \end{align}
        we need to find the upper bound of
        \begin{align}
            \E_\psi\tr^2(E I_\psi\rho I_\psi)=\E_\psi|\bra{\psi}\tr_A(\ketbra{r}{e})\ket{\psi}|^4.
        \end{align}
        Denoting $M\coleq \tr_A(\ketbra{r}{e})$, we have
        \begin{align}
            \E_\psi|\bra{\psi}\tr_A(\ketbra{r}{e})\ket{\psi}|^4
            &\equiv\E_\psi|\bra{\psi}M\ket{\psi}|^4\\
            &=\E_\psi\tr ((M^{\otimes2}\otimes M^{\dagger\otimes2})(\ketbra{\psi}{\psi})^{\otimes 4})\\
            &=\frac{1}{d(d+1)(d+2)(d+3)}\tr \left((M^{\otimes2}\otimes M^{\dagger\otimes2})\sum_{\sigma\in S_4}F_\sigma\right).
        \end{align}
        The trace term in the RHS is a sum of 24 terms, where each term is a multiplication of a trace of a multiplication of $M$'s and $M^\dagger$'s, for instance, $\tr (M^2M^\dagger)\tr (M^\dagger)$ or $\tr (M)\tr (M)\tr (M^{2\dagger})$.
        Employing the inequalities $|\tr (X)|=|\tr (X^\dagger)|\leq\|X\|_1$, $\|X\|_2=\|X^\dagger\|_2\leq\|X\|_1$, and $\|X^\dagger X\|_1\leq\|X\|_2^2$ from Eq.~\eqref{eq:holder} for an arbitrary operator $X$, we can bound each of the 24 terms with $\|M\|_2^m|\tr(M)|^n$ for nonnegative integers $m$ and $n$ satisfying $m+n=4$. 
        For instance, in the case of the first example,
        \begin{align}
            |\tr (M^2M^\dagger)\tr(M^\dagger)|
            &\leq\|M^2M^\dagger\|_1|\tr(M)|\\
            &\leq\|M^2\|_2\|M\|_2|\tr(M)|\\
            &\leq\|M^2\|_1\|M\|_2|\tr(M)|\\
            &\leq\|M\|_2^3|\tr(M)|
        \end{align}
        holds.
        Thus, from $\|M\|_2^m|\tr(M)|^n\leq\|M\|_2^4+|\tr(M)|^4$, we have
        \begin{align}
            \E_\psi|\bra{\psi}M\ket{\psi}|^4
            &\leq\frac{24(\|M\|_2^4+|\tr(M)|^4)}{d(d+1)(d+2)(d+3)}\\
            &\leq\frac{24(\|M\|_2^4+|\tr(M)|^4)}{d^4}\\
            &=\frac{24(\tr^2(M^\dagger M)+|\tr(M)|^4)}{d^4}\\
            &=\frac{24(\tr^2(\tr_A(\ketbra{e}{r})\tr_A(\ketbra{r}{e}))+|\tr(\ketbra{e}{r})|^4)}{d^4}\\
            &=\frac{24(\tr^2(\tr_S(E)\tr_S(\rho))+\tr^2(E\rho))}{d^4},
                && \text{Eqs.}~\eqref{eq:haar_2_4},\eqref{eq:haar_2_3}
        \end{align}
        which completes the proof.
    \end{proof}
    \noindent From Lemma \ref{lemma:haar_2}, \ref{lemma:haar_3}, and \ref{lemma:haar_4}, we have
    \begin{align}
        |\E_\psi\tr ^2(E\rho I_\psi )|
        &\leq\frac{(f+1)\tr^2(E\rho)}{d^2},\label{eq:X_2moment_1}\\
        \E_\psi\tr(E\rho I_\psi)\tr(\rho E I_\psi)
        &\leq\frac{(f+1)\tr^2(E\rho)}{d^2},\label{eq:X_2moment_2}\\
        \E_\psi\tr^2(EI_\psi\rho I_\psi)
        &\leq\frac{24(f^2+1)\tr^2(E\rho)}{d^4}.\label{eq:X_2moment_3}
    \end{align}
    Consequently, we have
    \begin{align}
        s^3|\E_\psi\tr(E\rho I_\psi )\tr(EI_\psi\rho I_\psi)|
        &\leq\E_\psi|s\tr(E\rho I_\psi)s^2\tr(EI_\psi\rho I_\psi)|\\
        &\leq\frac{1}{2}\E_\psi|s\tr(E\rho I_\psi)|^2+\frac{1}{2}\E_\psi|s^2\tr(EI_\psi\rho I_\psi)|^2\\
        &=\frac{s^2}{2}\E_\psi\tr(E\rho I_\psi)\tr(\rho EI_\psi)+\frac{s^4}{2}\E_\psi\tr^2(E  I_\psi\rho I_\psi)\\
        &\leq\frac{s^2(f+1)\tr^2(E\rho)}{2d^2}+\frac{12s^4(f^2+1)\tr^2(E\rho)}{d^4}.\label{eq:X_2moment_4}
    \end{align}

    We now derive the upper bound of Eq.~\eqref{eq:X2bound}.
    From Eqs.~\eqref{eq:X_2moment_1}, \eqref{eq:X_2moment_2}, \eqref{eq:X_2moment_3}, and \eqref{eq:X_2moment_4} with $s<1$, we obtain
    \begin{align}
        &\frac{1}{\tr^2(E\rho)}(2s^2 |\E_{\psi}\tr^2(E\rho I_\psi)|+s^4\E_{\psi}\tr^2(E I_\psi\rho I_\psi)\notag\\
        &\quad+2s^2\E_\psi\tr (E\rho I_\psi )\tr (\rho E I_\psi )+4s^3|\E_\psi\tr (E\rho I_\psi )\tr (E  I_\psi \rho  I_\psi )|)\\
        &\leq\frac{2s^2(f+1)}{d^2}+\frac{24s^4(f^2+1)}{d^4}+\frac{2s^2(f+1)}{d^2}+\frac{2s^2(f+1)}{d^2}+\frac{48s^4(f^2+1)}{d^4}\\
        &\leq\frac{6s^2(f+1)}{d^2}+\frac{72s^4(f^2+1)}{d^4},
    \end{align}
    which completes the proof.
\endgroup

\end{document}